\newcommand{\KL}{\mathrm{KL}}
\renewcommand{\vec}{\mathrm{vec}}
\newtheorem{theorem}{Theorem}[section]
\newtheorem{lemma}[theorem]{Lemma}
\newenvironment{proof}[1][Proof]{\begin{trivlist}
\item[\hskip \labelsep {\bfseries #1}]}{\end{trivlist}}
\newenvironment{remark}[1][Remark]{\begin{trivlist}
\item[\hskip \labelsep {\bfseries #1}]}{\end{trivlist}}
\newcommand{\qed}{\nobreak \ifvmode \relax \else
      \ifdim\lastskip<1.5em \hskip-\lastskip
      \hskip1.5em plus0em minus0.5em \fi \nobreak
      \vrule height0.75em width0.5em depth0.25em\fi}
\newcommand{\Exp}{{\mathbb E}}
\newcommand{\balpha}{ {\boldsymbol \alpha} }
\newcommand{\bbeta}{ {\boldsymbol \beta} }
\newcommand{\bgamma}{ {\boldsymbol \gamma} }
\newcommand{\bphi}{ {\boldsymbol \phi} }
\newcommand{\blambda}{ {\boldsymbol \lambda} }
\newcommand{\bLambda}{ {\boldsymbol \Lambda} }
\newcommand{\bmu}{ {\boldsymbol \mu} }
\newcommand{\bomega}{ {\boldsymbol \omega} }
\newcommand{\bSigma}{ {\boldsymbol \Sigma} }
\newcommand{\btheta}{ {\boldsymbol \theta} }
\DeclareMathOperator{\diag}{diag}
\newcommand{\bzero}{ {\boldsymbol 0} }
\newcommand{\ba}{ {\boldsymbol a} }
\newcommand{\bb}{ {\boldsymbol b} }
\newcommand{\bB}{ {\boldsymbol B} }
\newcommand{\bF}{ {\boldsymbol F} }
\newcommand{\bH}{ {\boldsymbol H} }
\newcommand{\bO}{ {\boldsymbol O} }
\newcommand{\bT}{ {\boldsymbol T} }
\newcommand{\bW}{ {\boldsymbol W} }
\newcommand{\bx}{ {\boldsymbol x} }
\newcommand{\bX}{ {\boldsymbol X} }
\newcommand{\by}{ {\boldsymbol y} }
\newcommand{\bz}{ {\boldsymbol z} }
\newcommand{\bZ}{ {\boldsymbol Z} }
\newcommand{\norm}[1]{\left\lVert#1\right\rVert}
\newcommand{\var}{\mbox{var}}
\newcolumntype{L}[1]{>{\raggedright\let\newline\\\arraybackslash\hspace{0pt}}m{#1}}
\newcolumntype{C}[1]{>{\centering\let\newline\\\arraybackslash\hspace{0pt}}m{#1}}
\newcolumntype{R}[1]{>{\raggedleft\let\newline\\\arraybackslash\hspace{0pt}}m{#1}}
    \let\@fnsymbol\@arabic
\title{Bayesian Tensor Regression}
\author{Rajarshi Guhaniyogi$^{\bf*}$\thanks{Rajarshi Guhaniyogi,  Assistant Professor, Department of Applied Math \& Stat, SOE2, UC Santa Cruz, 1156 High Street, Santa Cruz, CA 95064 (E-mail: rguhaniy@ucsc.edu).} \and Shaan Qamar$^{\bf*}$\thanks{Shaan Qamar, Ph.D., (E-mail: siqamar@gmail.com).} \and David B. Dunson\thanks{David B. Dunson, Arts \& Sciences Distinguished Professor, Department of Statistical Science, 218 Old Chemistry Building, Box 90251, Duke University, Durham, NC 27708-0251 (E-mail: dunson@duke.edu)}}
\begin{document}
\maketitle
\begin{abstract}

This article proposes a Bayesian approach to regression with a scalar response against vector and tensor covariates. Tensor covariates are commonly vectorized prior to analysis, failing to exploit the structure of the tensor, and resulting in poor estimation and predictive performance.  We develop a novel class of multiway shrinkage priors for the coefficients in tensor regression models.  Properties are described, including posterior consistency under mild conditions, and an efficient Markov chain Monte Carlo algorithm is developed for posterior computation.  Simulation studies illustrate substantial gains over vectorizing or using existing tensor regression methods in terms of estimation and parameter inference.  The approach is further illustrated in a neuroimaging application.
\renewcommand\thefootnote{}\footnote{$*$\bf These authors contributed equally}
\end{abstract}
{\noindent Keywords: Dimension reduction; multiway shrinkage prior; magnetic resonance imaging (MRI); parafac decomposition; posterior consistency; tensor regression}

\section{Introduction}\label{sec:intro}

In many application areas, it is common to collect predictors that are structured as a multiway array or tensor.  For example, the elements of this tensor may correspond to voxels in a brain image \citep{lindquist2008statistical, lazar2008statistical, hinrichs2009spatially, ryali2010sparse}.  Existing approaches for quantifying associations between an outcome and such tensor predictors mostly fall within two groups. The first approach assesses the association between each voxel and the response independently, providing a p-value `map' \citep{lazar2008statistical}.  The p-values can be adjusted for multiple comparisons to identify `significant' sub-regions of the tensor.  Although this approach is widely used and appealing in its simplicity, clearly such independent screening approaches have key disadvantages relative to methods that take into account the joint impact of the overall tensor simultaneously.  Unfortunately, the literature on simultaneous analysis approaches is sparse.

One naive approach is to simply vectorize the tensor and then use existing methods for high-dimensional regression.  Such vectorization fails to preserve spatial structure, making it more difficult to learn low-dimensional relationships with the response.  Efficient learning is of critical importance, as the sample size is typically massively smaller than the total number of voxels.  Alternative approaches within the regression framework include functional regression and two stage approaches.  The former views the tensor as a discretization of a continuous functional predictor.  Most of the literature on functional predictors focuses on 1D functions; \cite{reiss2010functional} consider the 2D case, but substantial challenges arise in extensions to 3D due to dimensionality and collinearity among voxels.  Two stage approaches first conduct a dimension reduction step, commonly using PCA, and then fit a model using lower dimensional predictors \citep{caffo2010two}.  A clear disadvantage of such approaches is that the main principal components driving variability in the random tensor may have relatively limited impact on the response variable.  Potentially, supervised PCA could be used, but it is not clear how to implement such an approach in 3D or higher dimensions.

\cite{zhou2013tensor} propose extending generalized linear regression to include a tensor structured parameter corresponding to the measured tensor predictor. To circumvent difficulties with extensions to higher order tensor predictors, they impose additional structure on the tensor parameter, supposing it decomposes as a rank-$R$ parafac sum (see Section \ref{sec:background_framework}). This massively reduces the effective number of parameters to be estimated. They develop a penalized likelihood approach where adaptive lasso penalties may be imposed on individual margins of the parafac decomposition, focusing on good point estimation for the tensor parameter. However, their method relies heavily on cross-validated methods for selecting tuning parameters, with choices for these parameters being sensitive to the tensor dimension, the signal-to-noise ratio (degree of sparsity) and the parafac rank.

Of practical interest  is a ``self calibrating'' procedure which adapts the complexity of the model to the data.  We propose a principled method to effectively shrink unimportant voxel coefficients to zero while maintaining accuracy in estimating important voxel coefficients.  Our framework gives rise to the task of model-based rank selection, with carefully constructed shrinkage priors that naturally induce sparsity within and across ranks for optimal region selection. In addition, the need for valid measures of uncertainty on parameter (predictive) estimates is crucial, especially in settings with low or moderate sample sizes, which naturally motivates our Bayesian approach. Our approach differs from image reconstruction literature as we do not model the distribution of the tensor $\bX$ \citep{qiu2007jump}. It also differs significantly from Bayesian tensor modeling literature in which the response
 is an array/tensor \citep{dunson2009nonparametric,bhattacharya2012simplex}. 

\section{Tensor regression}

\subsection{Basic notation}
\label{sec:background_framework}
Let $\bbeta_1=(\beta_{11},\dots,\beta_{1p_1})'$ and $\bbeta_2=(\beta_{21},\dots,\beta_{2p_2})'$ be vectors of length $p_1$ and $p_2$, respectively. The vector outer product $\bbeta_1\circ\bbeta_2$ is a $p_1\times p_2$ matrix with $(i,j)$-th entry $\beta_{1i}\,\beta_{2j}$. A $D$-way outer product between vectors $\bbeta_j=(\beta_{j1},\dots,\beta_{jp_j})$, $1 \le j \le D$, is a $p_1\times\cdots\times p_D$ multi-dimensional array denoted $\bB = \bbeta_1\circ\bbeta_2\circ\cdots\circ\bbeta_D$ with entries $(\bB)_{i_1,\dots,i_D}=\prod_{j=1}^{D} \beta_{ji_j}$. Define a $\vec(\bB)$ operator as stacking  elements of this $D$-way tensor into a column vector of length $\prod_{j=1}^D p_j$. From the definition of outer products, it is easy to see
that $\vec(\bbeta_1\circ\bbeta_2\circ\cdots\circ\bbeta_D)=\bbeta_D\otimes\cdots\otimes\bbeta_1$. A $D$-way tensor $\bB\in\otimes_{j=1}^D \Re^{p_j}$ has a Tucker decomposition if it can be expressed as
\begin{align}\label{eq:tucker_decomp}
\bB=\sum_{r_1=1}^{R_1}\sum_{r_2=1}^{R_2}\cdots\sum_{r_D=1}^{R_D}\lambda_{r_1,\dots,r_D}\bbeta_1^{(r_1)}\circ\bbeta_2^{(r_2)}\circ\cdots\circ\bbeta_D^{(r_D)}
\end{align}
where $\bbeta_j^{(r_j)}$ is a $p_j$ dimensional vector, $1 \le j \le D$, and $\bLambda=(\lambda_{r_1,\dots,r_D})_{r_1,\dots,r_D=1}^{R_1,\dots,R_D}$ is referred to as the {\em core tensor}. If one considers $\{\bbeta_j^{(r_j)}; 1 \le r_j \le R_j, 1 \le j \le D\}$ as ``factor loadings'' and $\lambda_{r_1,\dots,r_D}$ to be the corresponding coefficients, then the Tucker decomposition may be thought of as a multiway analogue to factor modeling.

A rank-$R$ parafac decomposition emerges as a special case of Tucker decomposition \eqref{eq:tucker_decomp} when $R_1=R_2=\cdots=R_D=R$ and $\lambda_{r_1,\dots,r_D} = I(r_1=r_2=\cdots=r_D)$.
In particular, $\bB\in\otimes_{j=1}^D \Re^{p_j}$ assumes a rank-$R$ parafac decomposition if
\begin{align} \label{eq:parafac_decom}
\bB=\sum_{r=1}^{R}\bbeta_1^{(r)}\circ\cdots\circ\bbeta_D^{(r)}
\end{align}
where $\bbeta_j^{(r)}$ is a $p_j$ dimensional column vector as before, for $1 \le j \le D$ and $1 \le r \le R$.  These vectors are often referred to as `margins.'
The parafac decomposition is more widely used due to its relative simplicity.

\subsection{Model framework}
Let $y \in \mathcal{Y}$ denotes a response variable, with $\bz \in \mathcal{X} \subset \Re^p$ and $\bX \in \otimes_{j=1}^D \Re^{p_j}$ scalar and tensor predictors, respectively. We assume response $y$ follows an exponential family distribution
\begin{align}\label{eq:glmtens}
f(y | \btheta,\tau) = \exp\left(\frac{y\btheta -b(\btheta)}{a(\tau)} + c(y, \tau)\right)
\end{align}
with natural parameter $\btheta$, dispersion $\tau > 0$ and known functions $a(\tau)$,  $b(\btheta)$ and $c(y, \tau)$.  Usual GLMs focus on vector predictors $\bz$ and let
 $g(\Exp(y | \bz)) = \alpha + \bz'\bgamma$, for a strictly increasing canonical link function $g(\cdot)$ and model parameters $\alpha \in \Re$, $\bgamma\in\Re^p$.  To generalize this framework to also include tensor predictor $\bX$, we let
 \begin{align}\label{mean:par1}\
g(\Exp(y|\bz,\bX)) =\alpha+\bz'\bgamma+\langle \bX,\bB\rangle, \quad
\langle \bX,\bB\rangle = \vec(\bX)'\vec(\bB)
\end{align}
where $\bB \in \otimes_{j=1}^D \Re^{p_j}$ is the tensor parameter corresponding to measured tensor predictor $\bX$.  For concreteness, we focus on linear regression with $g$ the identity link.


The coefficient tensor $\bB$ has $\prod_{j=1}^D p_j$ elements, necessitating substantial dimensionality reduction.
A rank-1 parafac decomposition assumes $\bB=\bbeta_1\circ\cdots\circ\bbeta_D$ and $\vec(\bB)=\bbeta_D\otimes\cdots\otimes\bbeta_1$. This reduces to modeling $g(\Exp(y|\bz,\bX))=\alpha+\bz'\bgamma+\bbeta_1'\bX\bbeta_2$ when $D = 2$, corresponding to the bilinear model considered in \cite{hung2013matrix}. Since only the single parameter vector $\bbeta_j$ captures signal along the $j$th dimension, a rank-1 assumption severely limits flexibility ruling out interactions among dimensions.

Following \cite{zhou2013tensor}, we use a more flexible rank-$R$ parafac decomposition for $\bB = \sum_{r=1}^R \bbeta_1^{(r)}\circ\cdots\circ\bbeta_D^{(r)}$ introduced in \eqref{eq:parafac_decom} with $\bbeta_j^{(r)} \in \Re^{p_j}$, $1 \le j \le D$, and $1\le r \le R$. Expression \eqref{mean:par1} then becomes
\begin{align}\label{newtensor}
\begin{aligned}
g(\Exp(y|\bz,\bX)) &=\alpha+\bz'\bgamma+\Big\langle \bX, \sum_{r=1}^R \bbeta_1^{(r)}\circ\cdots\circ\bbeta_D^{(r)}\Big\rangle\\
&=\alpha+\bz'\bgamma + \sum_{(i_1,\dots,i_D)}
(\bX)_{i_1,\dots,i_D} (\bB)_{i_1, \dots, i_D}
\end{aligned}
\end{align}
where voxel $(\bX)_{i_1,\dots,i_D}$ of the tensor predictor has corresponding parameter
\begin{equation}\label{eq:voxel_coefficient}
(\bB)_{i_1, \dots, i_D} = \sum_{r=1}^{R} \prod_{j = 1}^D \beta_{j, i_j}^{(r)}, \quad (i_1, \dots, i_D) \in \mathcal{V}_\bB = \otimes_{j=1}^D \{1, \dots, p_j\}.
\end{equation}
The model is therefore nonlinear in the  parameters defining $\bB$. A hierarchical specification is completed by placing appropriate priors over unknown model parameters. Existing priors may be chosen for $\alpha$ and $\bgamma$, but specification of the prior over the tensor parameters is nontrivial; see Sections and \ref{sec:multprior} and \ref{sec:multiway_criteria}.

Under the assumed rank-$R$ decomposition for $\bB$, model \eqref{newtensor} requires estimating $p + R\sum_{j=1}^D p_j$ as opposed to $p + \prod_{j=1}^D p_j$ parameters for the unstructured vectorized (saturated) model.  One wonders whether such dramatic dimension reduction retains sufficient flexibility.  In particular, we are interested in identifying geometric sub-regions of the tensor across which the coefficients are not close to zero, with the remaining elements being very close to zero.  We would also like to accurately estimate the coefficients in these sub-regions.  We have observed good performance in addressing these goals in extensive simulation studies summarized Section \ref{sec:simulation_studies}, consistent with our theoretical analyses in Section \ref{sec:tensor_theory}.

\subsection{Model identifiability}

From model \eqref{newtensor} it is clear that only voxel-level coefficients are identified and not the individual tensor margins defining their product-sum given in \eqref{eq:voxel_coefficient}. In the tensor setting, identifiability restrictions are understood  in light of the following indeterminacies:
\begin{enumerate}
\item \emph{Scale indeterminacy}: for each $r=1,\dots,R$, define $\blambda_r = (\lambda_{1r}, \dots, \lambda_{Dr})$ such that $\prod_{j=1}^{D}\lambda_{jr}=1$. Then replacing $\bbeta_{j}^{(r)}$ by $\lambda_{jr} \bbeta_{j}^{(r)}$ leaves the tensor parameter $\bB$ unaltered.
\item \emph{Permutation  indeterminacy}: $\sum_{r=1}^{R}\circ_{j=1}^{D}\bbeta_{j}^{(r)}=\sum_{r=1}^{R}\circ_{j=1}^{D}\bbeta_{j}^{(P(r))}$
    for any permutation $P(\cdot)$ of $\{1,2,\dots,R\}$. In particular, this implies that
$\circ_{j=1}^{D}\bbeta_{j}^{(r)}$ are not identifiable for $r=1,\dots,R$.
\item \emph{Orthogonal transformation indeterminacy} ($D = 2$ only): for any orthonormal matrix $\bO$, one has $(\bbeta_{1}^{(r)}\bO)\circ(\bbeta_{2}^{(r)}\bO)=\bbeta_{1}^{(r)}\otimes \bbeta_{2}^{(r)}$.
\end{enumerate}
For $D>2$, imposing the following $(D-1)R$ constraints ensures identifiability of the margin parameters comprising the rank-R parafac decomposition:
\begin{align}\label{eq:identifiability_constraints}
&\beta_{j,1}^{(r)}=1, ~1 \le j < D, ~1\le r \le R, \qquad \text{and} \qquad \beta_{D,1}^{(1)}>\cdots>\beta_{D,1}^{(R)}.
\end{align}
For our proposed Bayesian method, we seek accurate estimation and inferences on $\bB$ along with state-of-the-art predictive performance. Neither of these goals rely on identifiability of the tensor margins, $\bbeta_{j}^{(r)}$, and hence we avoid identifiability restrictions on these parameters.  The lack of restrictions simplifies the design of efficient computational algorithms.

\section{Multiway shrinkage priors}\label{sec:multprior}

\subsection{Vector shrinkage priors}
There has been recent interest in high-dimensional regression with vector predictors, choosing priors which shrink small coefficients towards zero while minimizing shrinkage of large coefficients.  Many of these priors can be expressed as a global-local (GL) scale mixtures \citep{polson2012local} with
\begin{align} \label{eq:local_global_prior}
\theta_j \sim \mathrm{N}(0,\psi_j\tau),
\quad \psi_j \sim g,
\quad \tau \sim h,
\end{align}
where $(\theta_1,\ldots,\theta_p)$ is a coefficient vector, $\tau$ is a global scale and $\psi_j$ is a local-scale.  When $g$ is a mixture of two components, with one concentrated near zero and the other away from zero, a spike and slab prior is obtained.  Many other choices of $g$ and $h$ have been considered.  Although the GL family is widely used and versatile, \cite{bhattacharya2014dirichlet} note advantages in drawing the local scales jointly.  In particular, they propose to let
\begin{align*}
\theta_j \sim \mathrm{DE}(\cdot | \phi_j \tau),
\quad \phi_j \sim \mathrm{Dirichlet}(a,\ldots,a),
\quad \tau \sim h.
\end{align*}
where $\mathrm{DE}(\cdot)$ denote the double-exponential distribution.  For small $a$ and large $p$, the Dirichlet($a,\ldots,a$) prior has the property of favoring many values close to zero with a few much larger values, but with $\sum_j \phi_j = 1$.

\subsection{Multiway priors} \label{sec:multiway_criteria}
We propose a new class of multiway shrinkage priors in the generalized linear model setting with tensor valued predictors. Assuming tensor parameter $\bB$ admits a rank-$R$ parafac decomposition, model \eqref{newtensor} results in
voxel-level coefficients that are a nonlinear function of the corresponding tensor margin parameters (see \eqref{eq:voxel_coefficient}). Moreover, this implies simultaneous shrinkage on each of the $\prod_{j=1}^D p_j$ voxel coefficients as imposed by the prior over $R\sum_{j=1}^D p_j$ parameters. This necessitates careful prior specification on the tensor margins $\{\bbeta_j^{(r)}; 1 \le j \le D, 1 \le r \le R\}$ such that the induced voxel-level prior has adequate tails so as to prevent over shrinkage.

There are a number of desirable characteristics for a multiway prior on the tensor margins in the absence of prior information that certain elements of the tensor are more likely to be important.  In particular, it is important to ensure that
\begin{enumerate}
\item For each $r = 1, \dots, R$, $\big(\beta_{1, i_1}^{(r)},\dots, \beta_{D, i_D}^{(r)}\big)$ and $\big(\beta_{1, k_1}^{(r)},\dots,\beta_{D, k_D}^{(r)}\big)$ are equal in distribution, for any $(i_1,\dots,i_D), (k_1,\dots,k_D) \in \mathcal{V}_\bB \times \mathcal{V}_\bB$ and $(i_1, \dots, i_D) \ne (k_1, \dots, k_D)$.
\item Shrinkage towards a low rank decomposition, with the model adapting to the complexity and signal in the data, effectively deleting unnecessary dimensions.
\item The prior should favor recovery of contiguous geometric subregions of the tensor across which the voxel observations are predictive of the response.
\end{enumerate}
In addition, the proposed multiway shrinkage prior must have a structure that facilitates efficient and reliable model fitting.

\subsection{The multiway Dirichlet GDP prior}
\label{sec:mdgdp_prior}

There are many ways of specifying priors over tensor margins $\bbeta_j^{(r)}$ to satisfy the listed criteria. In this article we propose a particular choice which we deem the multiway Dirichlet generalized double Pareto (M-DGDP) prior. The M-DGDP prior induces shrinkage across components in an exchangeable way,
setting $\tau_r = \phi_r\tau$  as the global scale for component $r = 1, \dots, R$, with $\tau \sim \mathrm{Ga}(a_\tau, b_\tau)$ and $\Phi = (\phi_1, \dots, \phi_R) \sim \mathrm{Dirichlet}(\alpha_1, \dots, \alpha_R)$.
In addition, define $\bW_{jr} = \mathrm{diag}(w_{jr, 1}, \dots, w_{jr,p_j})$ for $1 \le j \le D$ and $1 \le r \le R$ as local (margin, component-specific) scale parameters. The hierarchical margin-level prior is  given by
\begin{align}\label{eq:M1}
\bbeta_j^{(r)} \sim \mathrm{N}\big(0, (\phi_r \tau) \bW_{jr}\big),\quad w_{jr,k} \sim \mathrm{Exp}(\lambda_{jr}^2 / 2),\quad
\lambda_{jr} \sim \mathrm{Ga}(a_\lambda, b_\lambda).
\end{align}

Additional flexibility in estimating $\bB_r = \{ \bbeta_j^{(r)}; 1 \le j \le D\}$ is  accommodated by modeling heterogeneity within margins via element-specific scaling $w_{jr,k}$. A common rate parameter $\lambda_{jr}$ encourages sharing of information between the margin elements. Collapsing over the element-specific scales, $\beta_{j,k}^{(r)} | \lambda_{jr}, \phi_r, \tau \overset{\rm iid}{\sim} \mathrm{DE}(\lambda_{jr} / \sqrt{\phi_r \tau})$, $1 \le k \le p_j$. Prior \eqref{eq:M1} leads to a GDP prior \citep{armagan2013generalized} on the individual margin coefficients.

\subsection{Prior hyper-parameter elicitation} \label{sec:dirichlet_alpha}

It is important to assess how the shrinkage prior \eqref{eq:M1} on the margins impacts the induced prior on the voxel coefficients.  Unfortunately,
the distribution of the voxel-level coefficients \eqref{eq:voxel_coefficient} is not available in closed form. However, the voxel-level variance under the M-DGDP prior \eqref{eq:M1} is given by
\begin{align*}
\var(\bB_{i_1, \dots, i_D}) &= \Exp\bigg( \var \bigg\{ \sum_{r = 1}^R \prod_{j=1}^D \bbeta_{j, i_d}^{(r)} \,|\, \bW, \Lambda, \Phi, \tau \bigg\}\bigg) \\
& = \Exp_{\Phi} \bigg(\sum_{r=1}^R \phi_r^D \,\Exp_{\tau}\{\tau^D\} \, \Exp_{\Lambda_{\cdot, r}} \bigg\{ \Exp_{\bW_r | \bLambda_r} \bigg(\prod_{j=1}^D w_{jr,i_j}\bigg)\bigg\} \bigg) \\
& = \frac{\Gamma(\alpha_0 + D)}{\Gamma(\alpha_0) \, b_\tau^D} (2 C_\lambda)^D \, \Exp_{\Phi} \bigg(\sum_{r=1}^R \phi_r^D \bigg)
\end{align*}
where the last step follows from the MGF of a Gamma distributed random variable.  The following Lemma provides lower and upper bounds on the variance, which are useful in hyperparameter elicitation.

\begin{lemma} \label{lem:dl_gdp_var}
Under M-DGDP shrinkage prior \eqref{eq:M1} and for $D > 1$, if $\alpha_1 = \cdots = \alpha_R = c / R, \, c \in \mathbb{N}_{+}$, and  with constants $C_\lambda = b_\lambda^{2} / \big((a_\lambda -1)(a_\lambda - 2)\big)$, $a_\lambda > 2$, $A_\tau = \exp((D^2-3D)/2)$, then the voxel-level variance is bounded below by $R \alpha_1^D (2C_\lambda  / b_\tau)^D$ and above by $A_\tau (2 C_\lambda / b_\tau)^D\, \exp(\alpha_1 \, R D)$.
\end{lemma}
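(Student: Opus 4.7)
The plan is to begin from the voxel-variance identity derived in the paragraph immediately above the lemma,
$$\var(\bB_{i_1,\dots,i_D})=\frac{\Gamma(a_\tau+D)}{\Gamma(a_\tau)\,b_\tau^D}\,(2C_\lambda)^D\,\Exp_\Phi\!\Big[\sum_{r=1}^R \phi_r^D\Big],$$
and to derive both sides of the sandwich by separately controlling two pieces: the Dirichlet $D$-th moment $\Exp_\Phi[\sum_r\phi_r^D]$ and the Gamma ratio coming from $\Exp[\tau^D]$. The two bounds are then combined with the shared factor $(2C_\lambda/b_\tau)^D$ to read off the claimed inequalities.

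For the Dirichlet piece I would exploit the fact that, under a symmetric $\mathrm{Dir}(\alpha_1,\dots,\alpha_1)$ with $\alpha_1=c/R$, each $\phi_r$ is marginally $\mathrm{Beta}(\alpha_1,(R-1)\alpha_1)$, giving the closed-form Beta moment
$$\Exp_\Phi[\phi_r^D]=\prod_{k=0}^{D-1}\frac{\alpha_1+k}{\alpha_1 R+k},\qquad \Exp_\Phi\!\Big[\sum_{r=1}^R\phi_r^D\Big]=R\prod_{k=0}^{D-1}\frac{\alpha_1+k}{\alpha_1 R+k}.$$
For the upper bound I would use the coarse estimate $\sum_r\phi_r^D\le\sum_r\phi_r=1$ (since $\phi_r\in[0,1]$), so $\Exp_\Phi[\sum_r\phi_r^D]\le 1$, and defer the $\exp(\alpha_1 RD)$ factor to the Gamma piece below. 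For the lower bound I would peel the $k=0$ factor $1/R$ out of the product, bound each remaining $k\ge 1$ factor pointwise below by $\alpha_1$ (an elementary inequality in the intended parameter regime, via $\alpha_1+k\ge \alpha_1(\alpha_1 R+k)$), and regroup the extracted $1/R$ with the prefactor $R$ to recover $R\alpha_1^D$.

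For the Gamma ratio I would expand $\Gamma(a_\tau+D)/\Gamma(a_\tau)=\prod_{k=0}^{D-1}(a_\tau+k)$, take logarithms, apply $\log(a_\tau+k)\le(a_\tau+k)-1$, and sum the resulting arithmetic progression $\sum_{k=0}^{D-1}(a_\tau+k-1)=Da_\tau+(D^2-3D)/2$ to obtain
$$\frac{\Gamma(a_\tau+D)}{\Gamma(a_\tau)}\le \exp\!\Big(Da_\tau+\tfrac{D^2-3D}{2}\Big)=e^{Da_\tau}\,A_\tau.$$
Combining with the Dirichlet upper bound and absorbing $e^{Da_\tau}$ into $\exp(\alpha_1 RD)$ (valid since $a_\tau$ is of the same order as or smaller than $\alpha_1 R=c$ in the intended regime) yields the announced upper bound. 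For the lower bound the Gamma ratio satisfies $\Gamma(a_\tau+D)/\Gamma(a_\tau)\ge 1$ whenever $a_\tau\ge 1$, so it drops out cleanly.

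The main obstacle is the Dirichlet lower bound: a direct Jensen attack on $x\mapsto x^D$ only yields $\Exp_\Phi[\sum_r\phi_r^D]\ge R^{1-D}$, which matches $R\alpha_1^D$ only in the special case $\alpha_1 R=1$. To obtain the stated bound for general $c\in\mathbb{N}_+$ one must therefore work directly with the Beta-moment product and track each factor carefully, isolating the $k=0$ contribution. Once that term-by-term bookkeeping is settled, the remaining manipulations for both the upper bound on the Dirichlet moment and the Gamma-ratio estimate are routine applications of the convexity inequalities $\phi_r^D\le\phi_r$ and $\log(1+x)\le x$.
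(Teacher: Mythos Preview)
Your upper bound is essentially the paper's argument: bound $\sum_r\phi_r^D\le 1$, expand $\Gamma(a_\tau+D)/\Gamma(a_\tau)=\prod_{k=0}^{D-1}(a_\tau+k)$, and apply $\log(1+x)\le x$ termwise. One remark: there is nothing to ``absorb''---the paper takes $a_\tau=\alpha_0=\alpha_1 R$ exactly, so $e^{Da_\tau}=\exp(\alpha_1 RD)$ on the nose.

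The lower bound, however, has a genuine gap, and the ``obstacle'' you flag is self-inflicted. By throwing away the Gamma ratio via $\Gamma(a_\tau+D)/\Gamma(a_\tau)\ge 1$, you force the entire factor $R\alpha_1^D$ to come from the Dirichlet moment, which is why the Jensen bound $\Exp_\Phi[\sum_r\phi_r^D]\ge R^{1-D}$ looks insufficient. The paper does the opposite: it keeps the trivial termwise bound $\Gamma(\alpha_0+D)/\Gamma(\alpha_0)=\prod_{k=0}^{D-1}(\alpha_0+k)\ge \alpha_0^D=(\alpha_1 R)^D$, pairs it with precisely the H\"older/Jensen estimate $\Exp_\Phi[\sum_r\phi_r^D]\ge R^{1-D}$ that you already have, and the product $(\alpha_1 R)^D\cdot R^{1-D}=R\alpha_1^D$ drops out immediately. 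No Beta-moment bookkeeping is needed.

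Your proposed workaround also fails on its own terms. The factor-by-factor inequality $\alpha_1+k\ge \alpha_1(\alpha_1 R+k)$ is equivalent to $k(1-\alpha_1)\ge \alpha_1(c-1)$ and is not valid for general $c\in\mathbb{N}_+$ (e.g.\ take $c>R$). And even granting it, your regrouping gives $R\cdot(1/R)\cdot\alpha_1^{D-1}=\alpha_1^{D-1}$, which equals $R\alpha_1^D$ only when $c=1$. The fix is simply to retain the sharper lower bound on the Gamma ratio.
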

\noindent{\bf Proof:} See Appendix \ref{sec:prior_moment_pfs}.

Hyperparameters in the Dirichlet component of the multiway prior \eqref{eq:M1} play a key role in controlling dimensionality of the model, with smaller values favoring more component-specific scales $\tau_r \approx 0$, and hence collapsing on an effectively lower rank factorization.
Figure \ref{fig:dirichlet_concentration} plots realizations from the Dirichlet distribution when $R = 3$ for different concentration parameters\footnote{For simplicity we have assumed $\alpha_1 = \cdots = \alpha_R = \alpha$.} $\alpha$. As $\alpha \downarrow 0$, points increasingly tend to concentrate around vertices of the $\mathcal{S}^{R-1}$ probability simplex, $R > 1$, leading to increasingly sparse realizations\footnote{This notion of sparsity is made precise in \cite{yang2014minimax}.}.
\begin{figure}[h]
\centering
\begin{tabular}{ccc}
\subfigure[$\balpha = (0.2, 0.2, 0.2)$]{\includegraphics[width=0.32\linewidth]{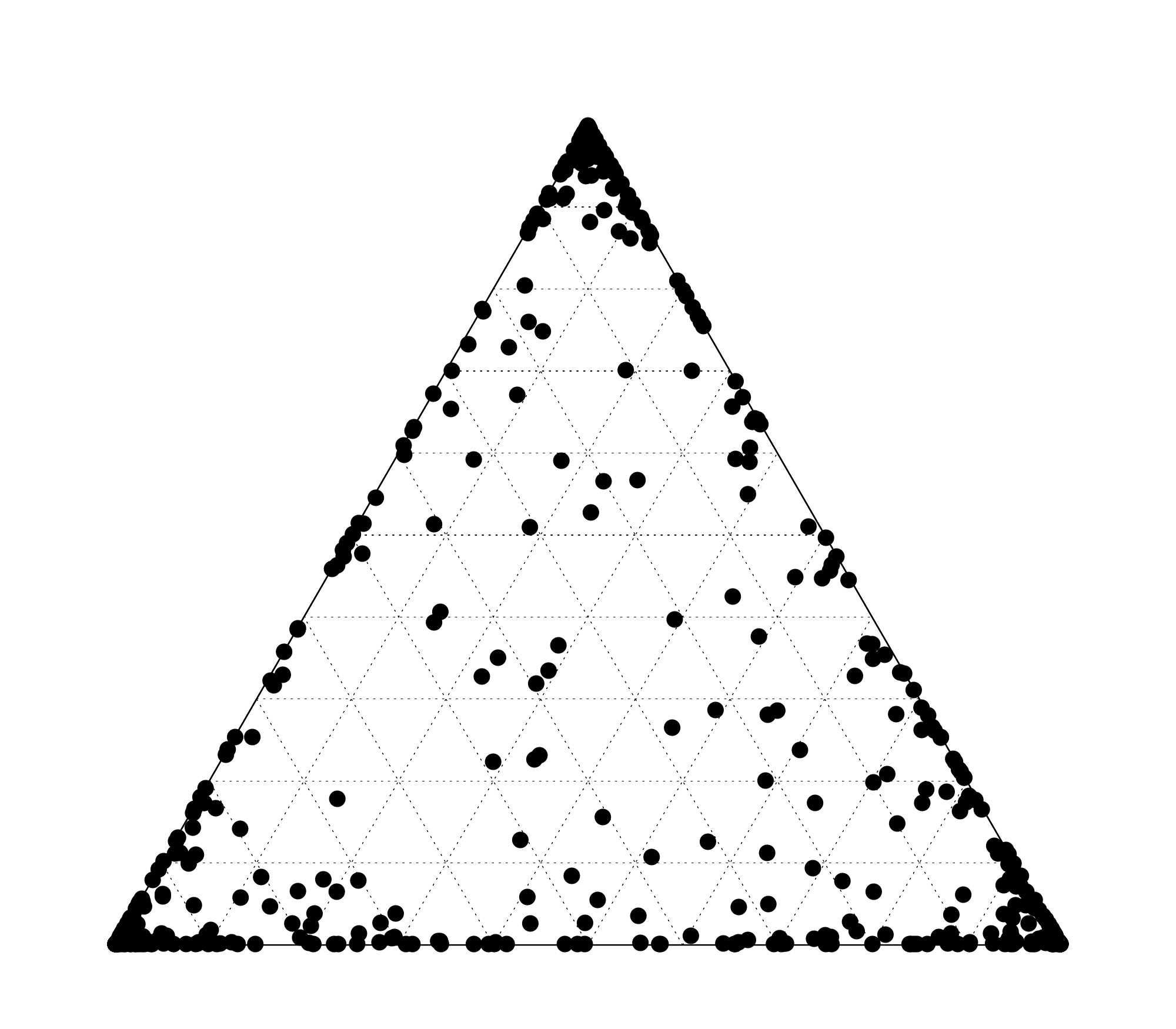}} &
\subfigure[$\balpha = (0.3, 0.3, 0.3)$]{\includegraphics[width=0.32\linewidth]{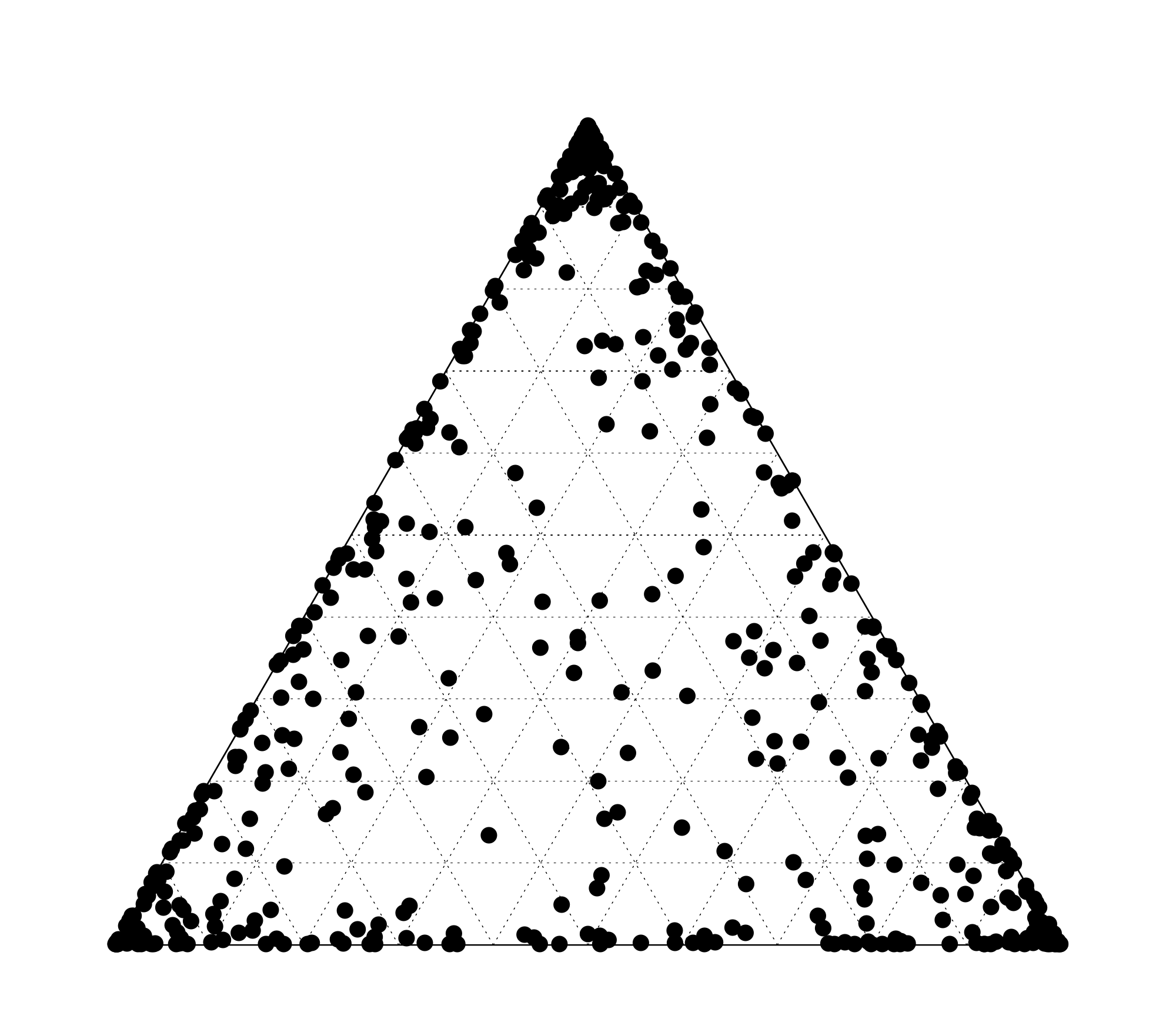}} &
\subfigure[$\balpha = (0.5, 0.5, 0.5)$]{\includegraphics[width=0.32\linewidth]{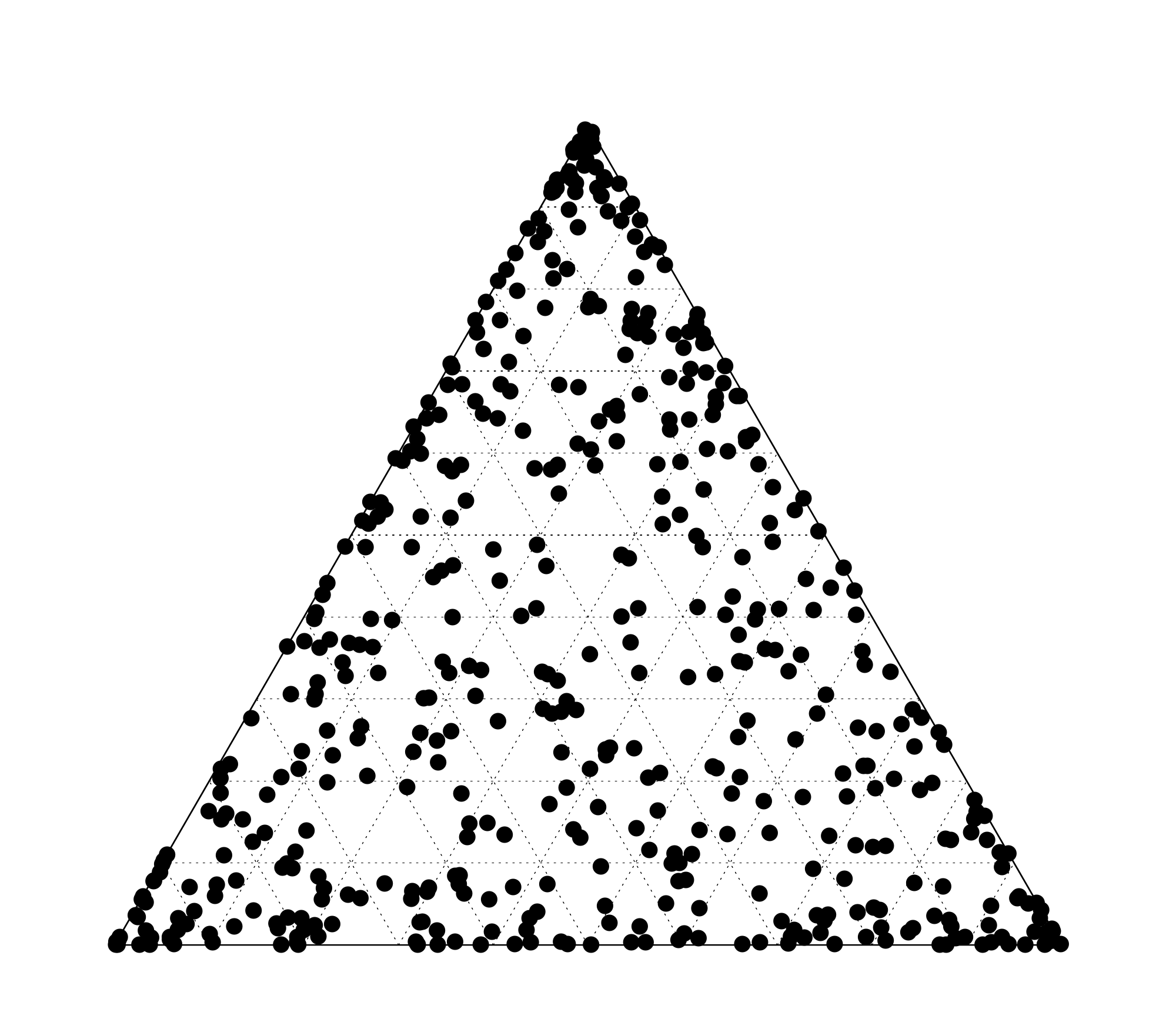}}
\end{tabular}
\caption{Visualization of points in the $\mathcal{S}^2$ probability simplex for 500 independent realizations of $X \sim \mathrm{Dirichlet}(\balpha)$.}
\label{fig:dirichlet_concentration}
\end{figure}

We allow $\alpha$ to be unknown by choosing a discrete uniform prior over a grid $\mathcal{A}$, which we choose to be 10 values equally spaced between $R^{-D}$ and $R^{-0.10}$ as a default.  \cite{armagan2013generalized} study various choices of $(a_\lambda, \zeta = b_\lambda / a_\lambda)$ that lead to desirable shrinkage properties, such as Cauchy-like tails for $\beta_{j,k}^{(r)}$ while retaining Laplace-like shrinkage near zero. Empirical results from simulation studies across a variety of settings in Section \ref{sec:simulation_studies} reveal no strong sensitivity to choices for hyper-parameters $a_\lambda, b_\lambda$. From Lemma 3.1, setting $a_\lambda = 3$ and $b_\lambda =\sqrt[2D]{a_\lambda}$ avoids overly narrow variance of the induced prior on $\bB_{i_1, \dots, i_D}$.  Table 1 and Figures 2-3 illustrate the induced prior on the tensor elements under our default choices.

\begin{table}[!h]
\centering
\begin{tabular}{l | c | ccccc }
& $R$ & 5\% & 25\% & 50\% & 75\% & 95\% \\
\hline
\multirow{3}{*}{$D = 2$}
 & 1 & 0.001 & 0.011 & 0.057 & 0.254 & 1.729 \\
 & 5 & 0.004 & 0.040 & 0.164 & 0.595 & 3.332 \\
 & 10 & 0.005 & 0.058 & 0.237 & 0.852 & 4.635 \\
\hline
\multirow{3}{*}{$D = 3$}
 & 1 & 0.000 & 0.001 & 0.010 & 0.072 & 0.917 \\
 & 5 & 0.000 & 0.009 & 0.061 & 0.341 & 3.382 \\
 & 10 & 0.001 & 0.017 & 0.111 & 0.608 & 5.996 \\
\end{tabular}
\caption{Percentiles for $|\bB_{i_1, \dots, i_D}|$ under the M-DGDP prior with default $a_\lambda = 3$, $b_\lambda = \sqrt[2D]{a_\lambda}$, $b_\tau = \alpha R^{1/D} \,(v = 1)$ and $\alpha = 1/R$. Statistics are displayed as the dimension $D$ of the tensor and its parafac rank decomposition $R$ vary. }
\label{tab:prior_quantile_stats}
\end{table}

\begin{figure}[h]
\centering
\setlength{\tabcolsep}{1pt}
\begin{tabular}{C{0.1\columnwidth}C{0.29\columnwidth}C{0.29\columnwidth}C{0.29\columnwidth}}
& $R = 1$ & $R = 5$ & $R = 10$ \\
$D = 2$ & \includegraphics[width=\linewidth]{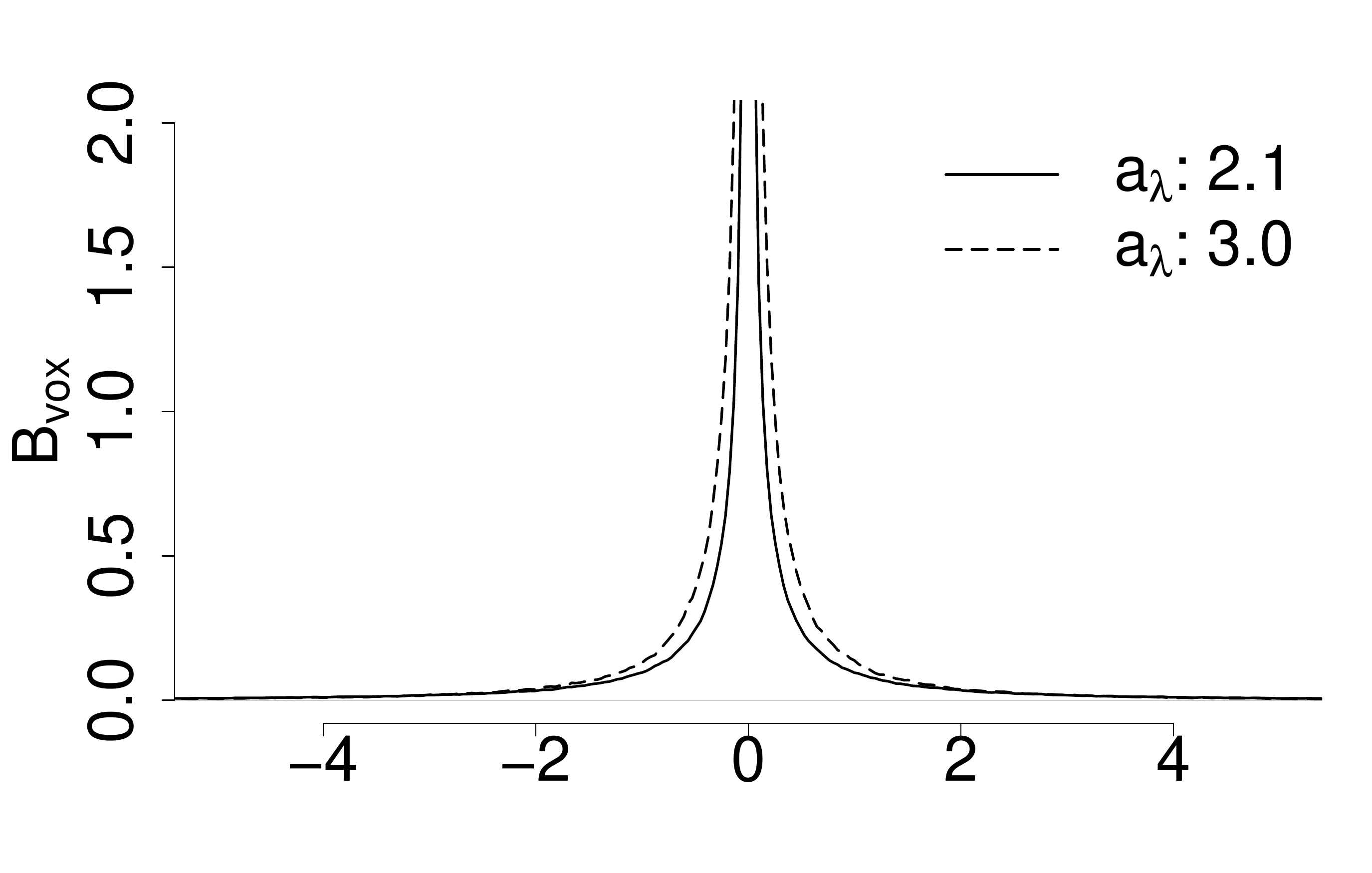} & \includegraphics[width=\linewidth]{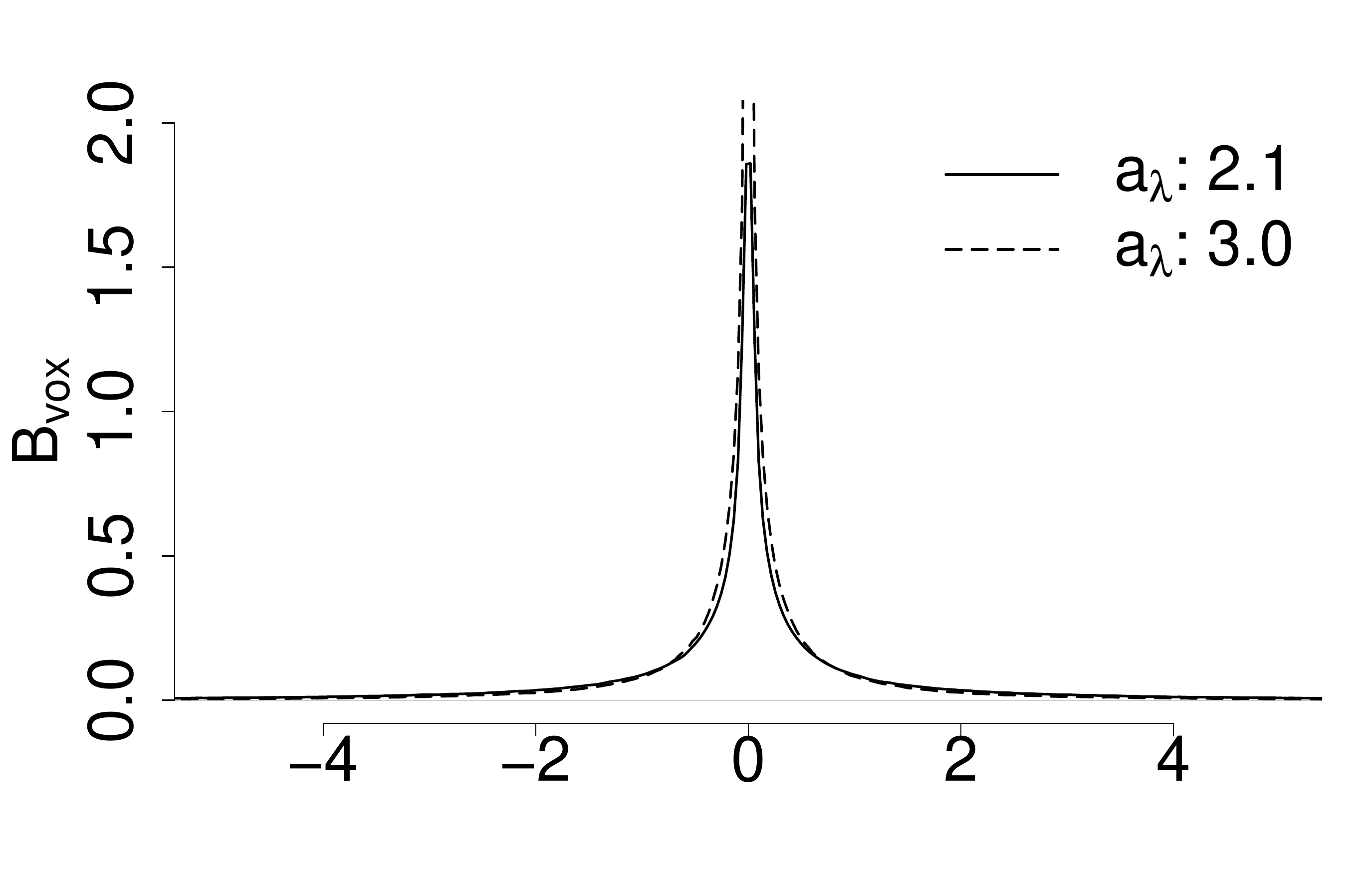} & \includegraphics[width=\linewidth]{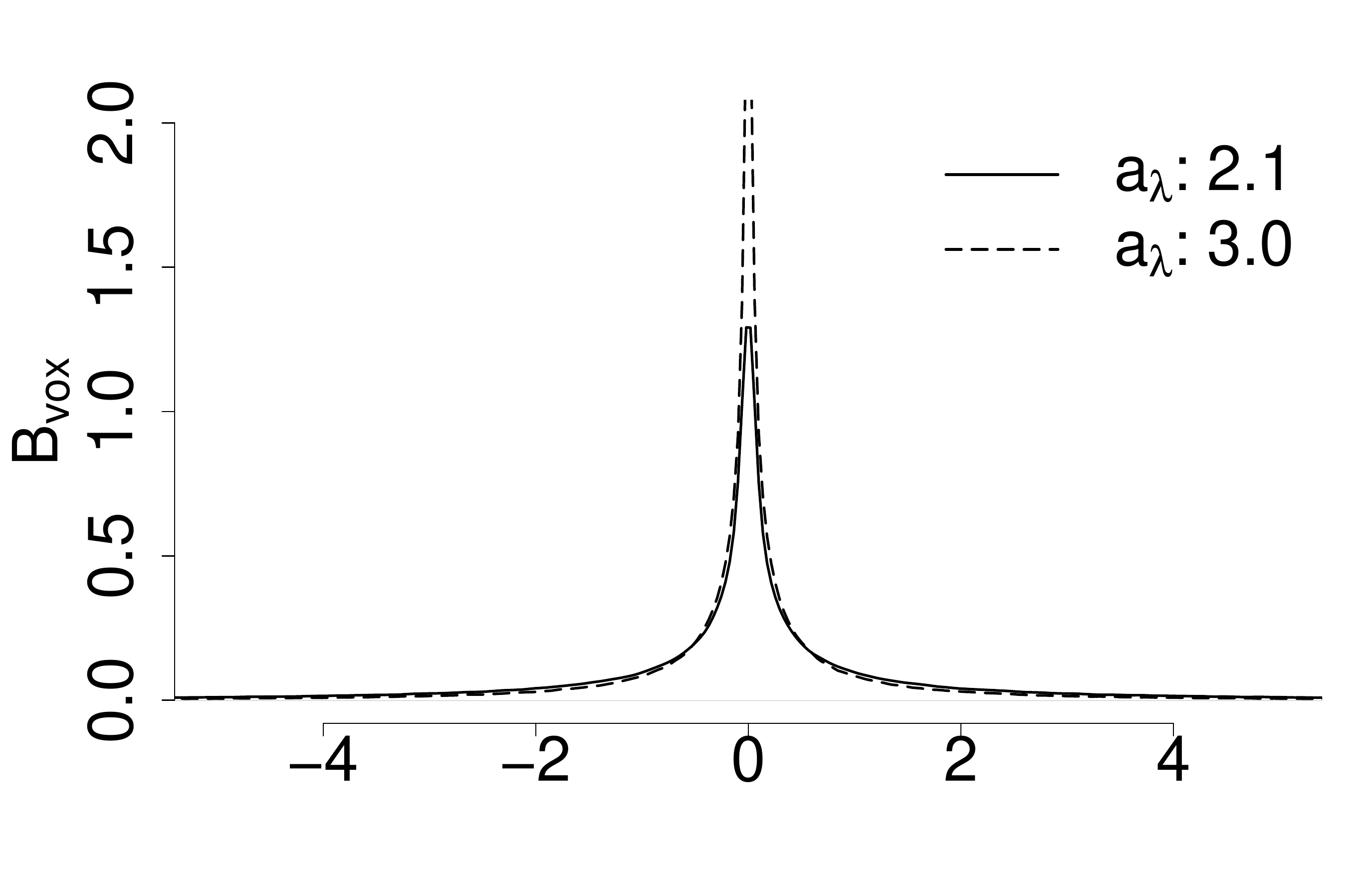} \\
$D = 3$ & \includegraphics[width=\linewidth]{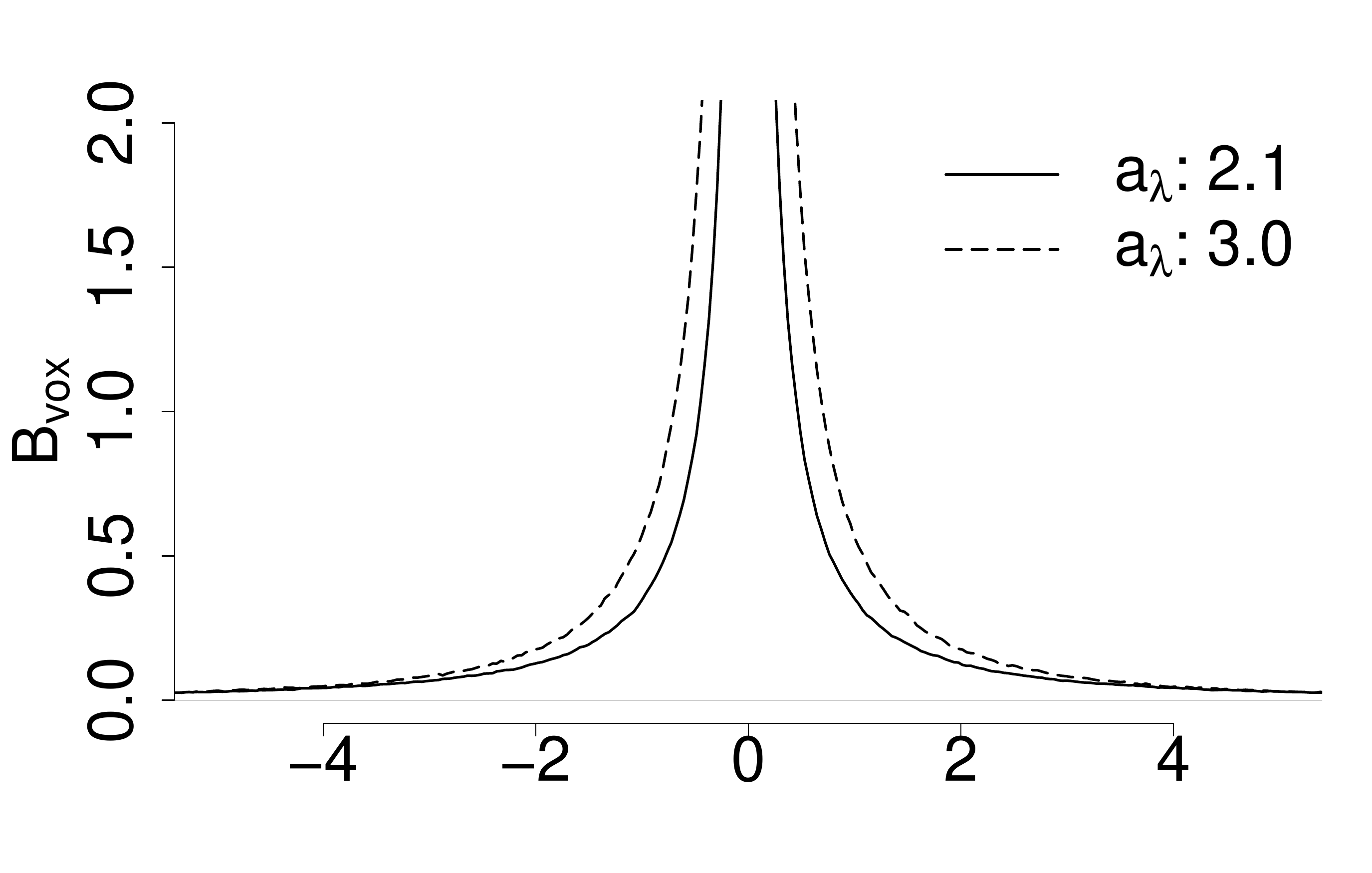} & \includegraphics[width=\linewidth]{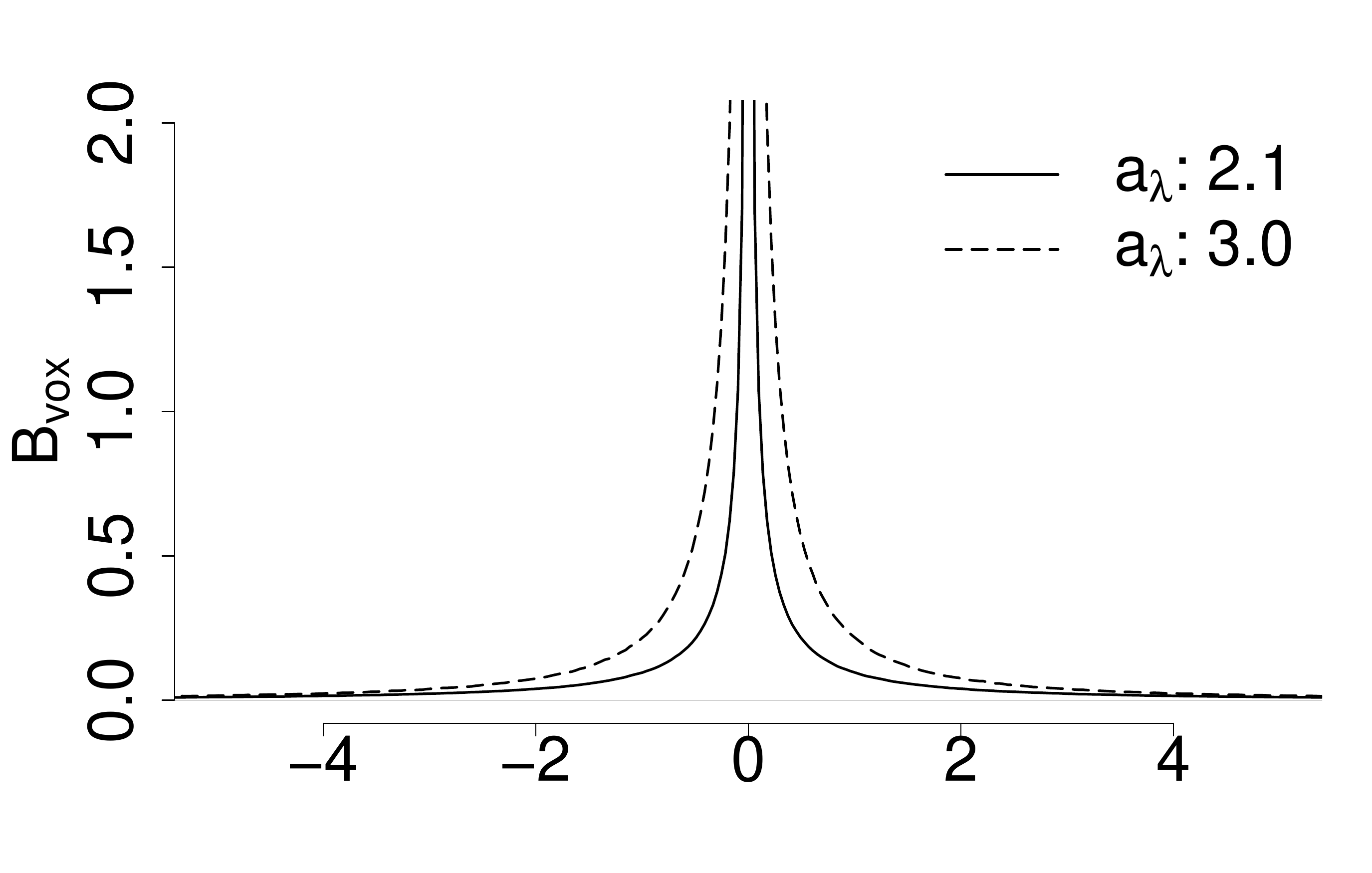} & \includegraphics[width=\linewidth]{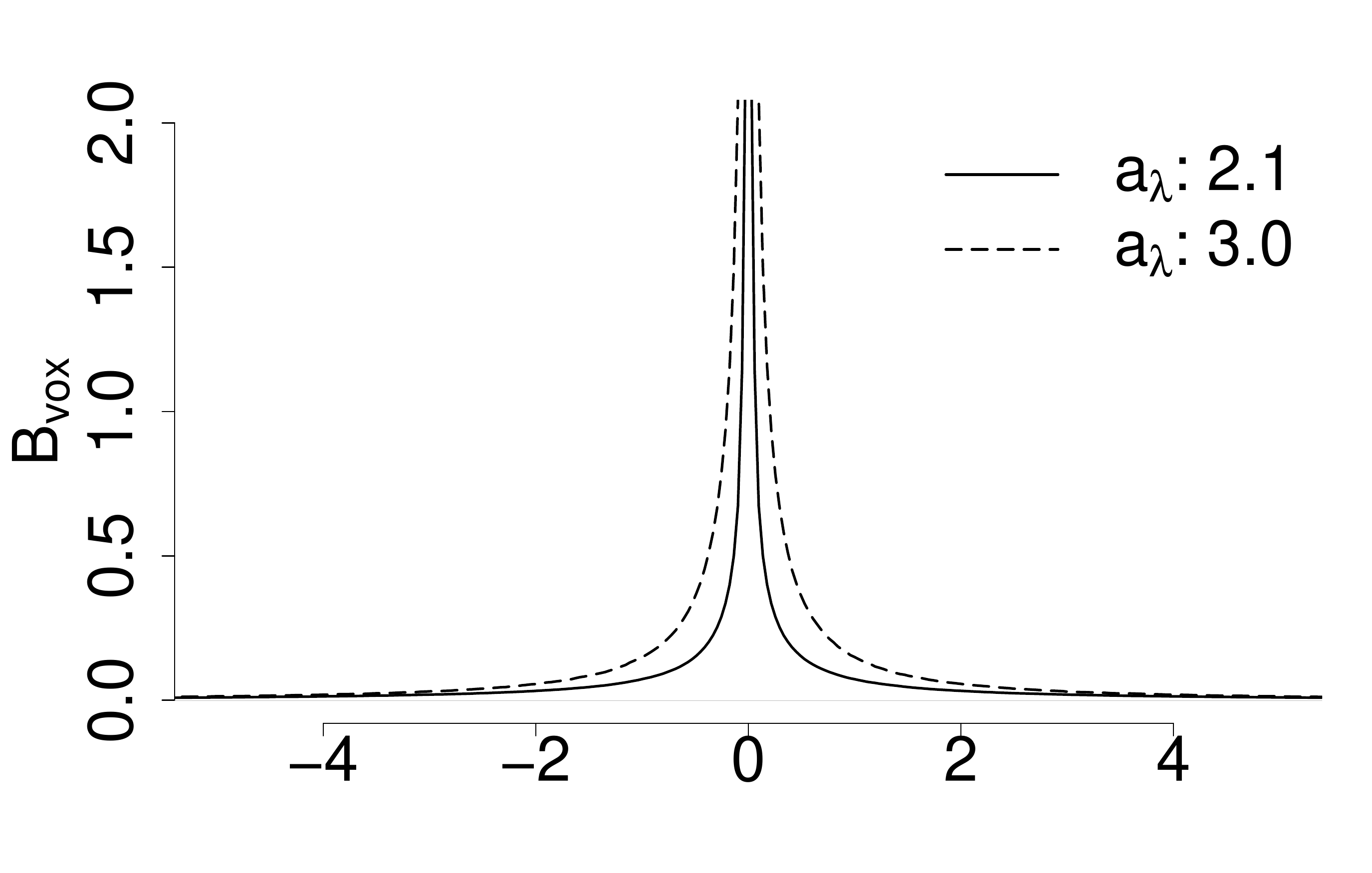}
\end{tabular}
\caption{Induced voxel level prior distribution for default specification as a function of $a_\lambda$, with $b_\lambda = \sqrt[2D]{a_\lambda}$, $b_\tau = \alpha R^{1/D}$ and $\alpha = 1/R$.}
\label{fig:simu_dl_gdp_prior}
\end{figure}

\begin{figure}[h]
\centering
\setlength{\tabcolsep}{1pt}
\begin{tabular}{C{0.32\columnwidth}C{0.32\columnwidth}C{0.32\columnwidth}}
$a_\lambda = 2$ & $a_\lambda = 3$ & $a_\lambda = 5$ \\
\includegraphics[width=\linewidth]{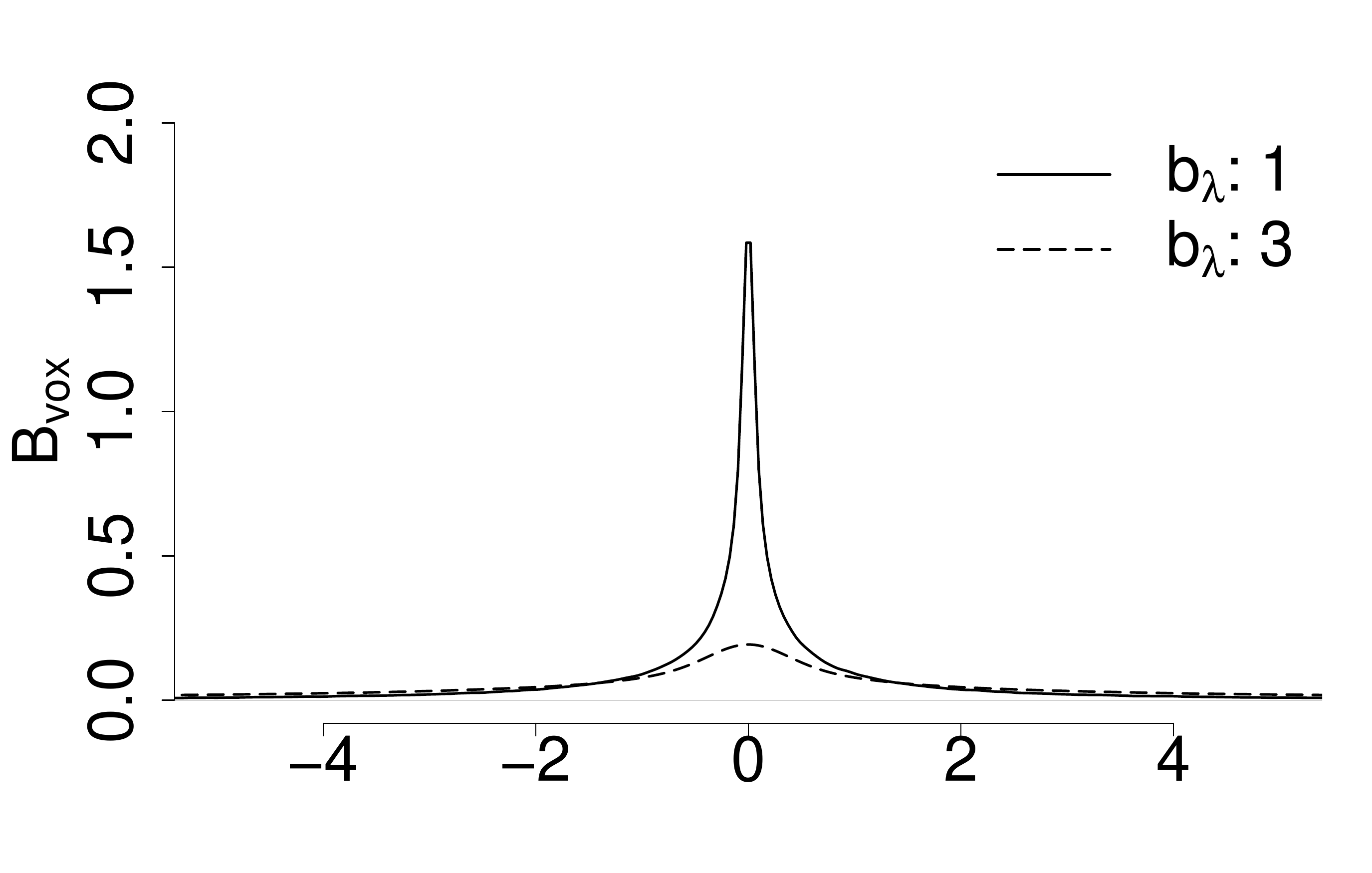} & \includegraphics[width=\linewidth]{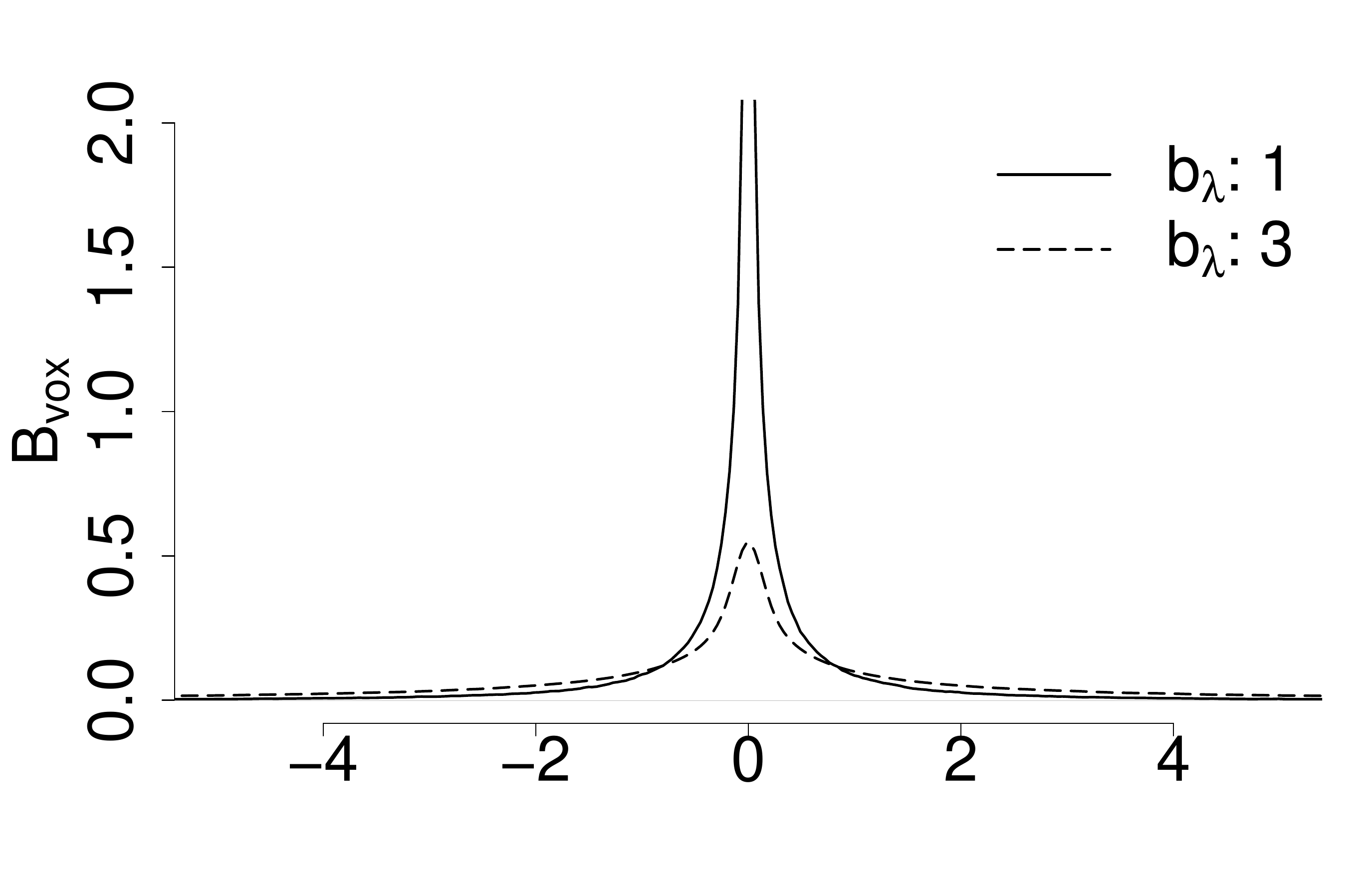} & \includegraphics[width=\linewidth]{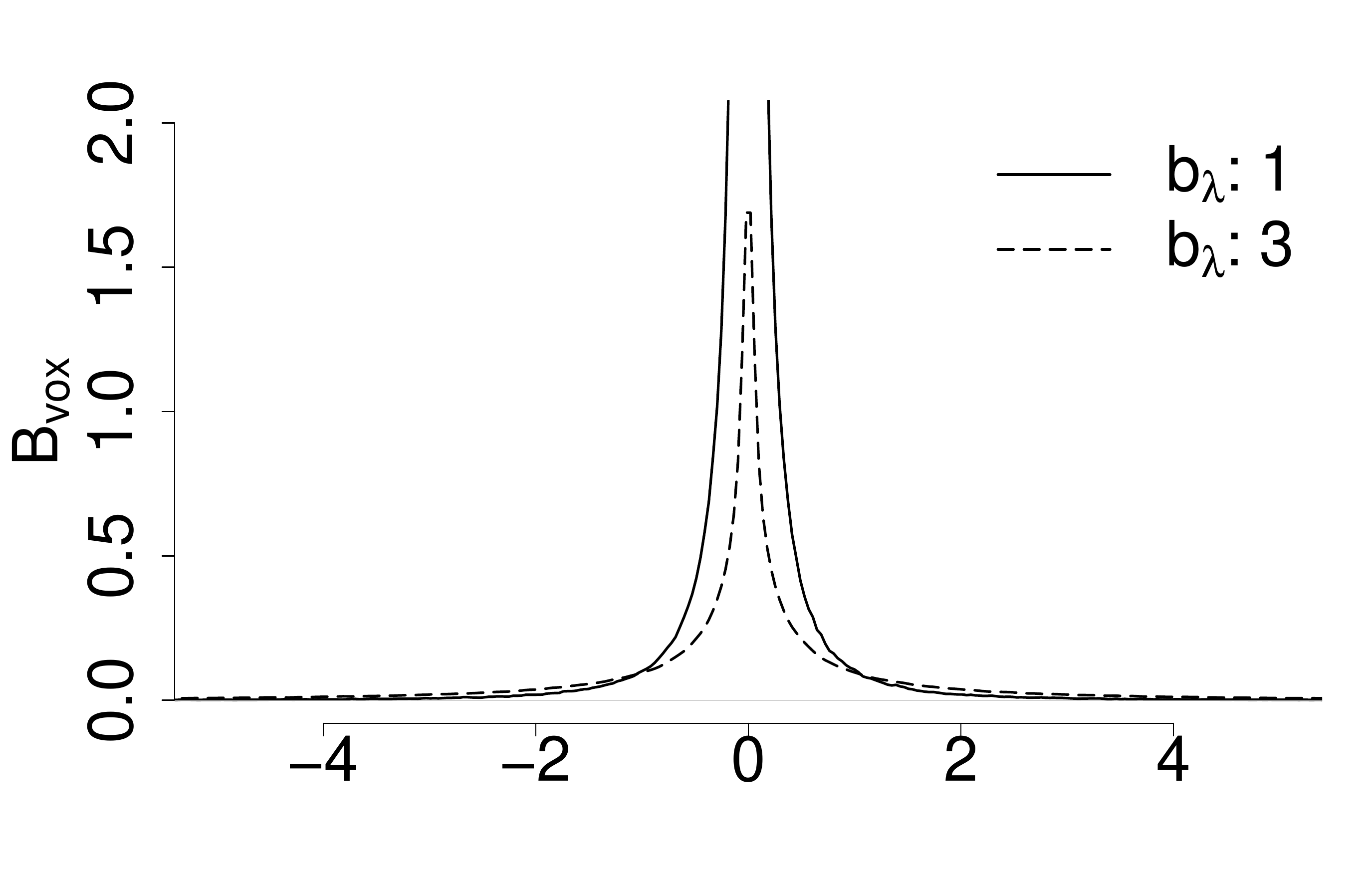}
\end{tabular}
\caption{Induced voxel level prior distribution for default specification for $a_\lambda \in \{2, 3, 5\}$, $b_\lambda \in \{1, 3\}$, with $b_\tau = \alpha R^{1/D}$ and $\alpha = 1/R$. Here, $R = 10$ and $D = 2$; note that $b_\tau = \sqrt[2D]{a_\lambda} \in (1.18,1.50)$ for the range of $a_\lambda$ considered.}
\label{fig:simu_dl_gdp_prior_hyperpar}
\end{figure}



\section{Posterior consistency for tensor regression}
\label{sec:tensor_theory}

\subsection{Notation and framework}
We establish convergence results for tensor regression model \eqref{eq:tensor_reg_model} under the following simplifying assumptions\footnote{Simplifying assumptions are merely to ease notation and calculations; all results generalize in a straightforward manner.}: (i) the intercept is omitted by centering the response; (ii) the error variance is known as $\sigma_0^2 = 1$; and (iii) fixed effects are known as $\bgamma=(0, \dots, 0)$.
We consider an asymptotic setting in which the dimensions of the tensor grow with $n$.  This paradigm attempts to capture the fact that tensor dimension $\prod_j p_{j,n}$ is typically substantially larger than sample size.  This creates theoretical challenges, related to (but distinct from) those faced in showing posterior consistency for high dimensional regression \citep{armagan2013generalized} and multiway contingency tables \citep{zhou2014bayesian}.  

Suppose the data generating model is in the assumed model class \eqref{eq:tensor_reg_model}, i.e., having true tensor parameter $\bB_n^0 \in \otimes_{j=1}^D \Re^{p_j,n}$, error variance $\sigma_0^2=1$ and 
\begin{align*}
\bB_n^0 = \sum_{r=1}^{R} \bbeta_{1,n}^{0(r)}\circ \cdots \circ \bbeta_{D,n}^{0(r)}, \qquad
\bbeta_{j,n}^{0(r)} = (\beta_{j,n,1}^{0(r)}, \dots, \beta_{j,n,p_{j,n}}^{0(r)})' \in \Re^{p_{j,n}}.
\end{align*}
In addition, define $\bF_n, \bF_n^0 \in \Re^{R\sum_{j=1}^{D}p_{j,n}}$ as the vectorized parameters: 
\begin{align*}
&\bF_n = \vec\big(\bbeta_{1,n}^{(1)}, \cdots, \bbeta_{1,n}^{(R)}, \cdots, \bbeta_{D,n}^{(1)}, \cdots,\bbeta_{D,n}^{(R)}\big) \\
&\bF_n^0 = \vec\big(\bbeta_{1,n}^{0(1)}, \cdots, \bbeta_{1,n}^{0(R)}, \cdots, \bbeta_{D,n}^{0(1)},\cdots,\bbeta_{D,n}^{0(R)}\big).
\end{align*}

Define a Kulback-Leibler (KL) neighborhood around the true tensor $\bB_n^0$ as
\[
\mathcal{B}_n=\left\{\bB_n:\frac{1}{n}\sum_{i=1}^n \KL(f(y_i|\bB_n^0),f(y_i|\bB_n))<\epsilon\right\}.\]
Denote $\KL(f(y_i|\bB_n^0),f(y_i|\bB_n))$ as $\KL_i$. Since $\KL_i=\frac{1}{2}\left(\langle \bX_i,\bB_n^0\rangle-\langle \bX_i,\bB_n\rangle\right)^2$, the KL neighborhood of radius $\epsilon$ around $\bB_n^0$ can be rewritten as $\mathcal{B}_n=\big\{\bB_n:\frac{1}{2n}\sum_{i=1}^n \big(\langle \bX_i,\bB_n^0\rangle-\langle \bX_i,\bB_n\rangle\big)^2<\epsilon\big\}$. Further let $\pi_n$ and $\Pi_n$ denote prior and posterior densities with $n$ observations, respectively, and
\begin{align*}
\Pi_n(\mathcal{B}_n^c)=\frac{\int_{\mathcal{B}_n^c}f(\by_n|\bB_n)\pi_n(\bF_n)}{\int f(\by_n|\bB_n)\pi_n(\bF_n)}
\end{align*}
with $\by_n=(y_1,\dots,y_n)'$ and $f(\by_n|\bB_n)$ the density of $\by_n$ under model \eqref{eq:tensor_reg_model}. Posterior consistency is established by showing that
\begin{align}\label{eq:postcons}
\Pi_n\left(\mathcal{B}_n^c\right)\rightarrow 0 \quad \mbox{under}~\bB_n^0\quad\mbox{a.s. as}~n\rightarrow\infty.
\end{align}

\subsection{Main result}

Our main theorem is that (\ref{eq:postcons}) holds under a simple sufficient condition on the prior.
\begin{theorem}\label{theorem:main}
Let $\zeta_n=n^{\frac{1+\rho_3}{2}}$ ($\rho_3>0$), $M_n=\frac{1}{n}\sqrt{\sum\limits_{i=1}^n ||\bX_i||_2^2}$. Given Lemma \ref{lemma:numconc}, for any $\epsilon>0$, $\Pi_n(\bB_n:\frac{1}{n}\sum_{i=1}^n \KL_i>\epsilon)\rightarrow 0$ a.s. under $\bB_n^0$, for the prior
$\pi_n(\bB_n)$ that satisfies
\begin{align}\label{priorprop}
\pi_n\left(\bB_n:||\bB_n-\bB_n^0||_2<\frac{2\eta}{3M_n\zeta_n}\right)>\exp(-dn),\:\:\mbox{for all large}\: n
\end{align}
for any $d>0$ and $\eta<\frac{\epsilon}{32}-d$.
\end{theorem}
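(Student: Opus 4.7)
The plan is to follow the classical Schwartz--Ghosal--van der Vaart posterior-consistency scheme. After dividing numerator and denominator by $f(\by_n|\bB_n^0)$, write $\Pi_n(\mathcal{B}_n^c) = N_n/D_n$ with
\[
N_n = \int_{\mathcal{B}_n^c}\! R_n(\bB_n)\,\pi_n(\bF_n)\,d\bF_n, \qquad D_n = \int\! R_n(\bB_n)\,\pi_n(\bF_n)\,d\bF_n,
\]
and likelihood ratio $R_n(\bB_n) = f(\by_n|\bB_n)/f(\by_n|\bB_n^0)$. It then suffices to show, almost surely under $\bB_n^0$, that $D_n \geq e^{-(d+\eta) n}$ eventually while $N_n \leq e^{-c n}$ for some $c > d+\eta$; the latter bound is the content of Lemma \ref{lemma:numconc}, which I treat as a black box.

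For the denominator, restrict the integral to the Euclidean ball $U_n = \{\|\bB_n-\bB_n^0\|_2 < \delta_n\}$ with $\delta_n = 2\eta/(3M_n\zeta_n)$, whose prior mass is $\geq e^{-dn}$ by hypothesis \eqref{priorprop}. Setting $\epsilon_i = y_i - \langle \bX_i,\bB_n^0\rangle$, the $\epsilon_i$ are iid $N(0,1)$, and expanding the Gaussian log-likelihood gives
\[
\log R_n(\bB_n) = \sum_{i=1}^n \epsilon_i\langle \bX_i,\bB_n-\bB_n^0\rangle - \tfrac{1}{2}\sum_{i=1}^n \langle \bX_i,\bB_n-\bB_n^0\rangle^2.
\]
On $U_n$ the quadratic term is bounded by $\tfrac{1}{2}\delta_n^2\sum_i\|\bX_i\|_2^2 = 2\eta^2 n^{1-\rho_3}/9 = o(n)$. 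Cauchy--Schwarz dominates the cross term by $\delta_n \cdot \|\sum_i\epsilon_i\vec(\bX_i)\|_2$; the second factor is the norm of a Gaussian vector whose covariance has trace $\sum_i\|\bX_i\|_2^2 = n^2 M_n^2$, so Gaussian concentration together with Borel--Cantelli yields $\|\sum_i\epsilon_i\vec(\bX_i)\|_2 \leq C n M_n$ eventually almost surely, giving a cross-term bound of $O(n/\zeta_n)=o(n)$. Consequently $\inf_{\bB_n \in U_n}\log R_n(\bB_n) \geq -\eta n$ for all large $n$ a.s., and multiplication by the prior mass yields $D_n \geq e^{-(d+\eta)n}$.

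Combining with the numerator bound $N_n \leq e^{-cn}$ supplied by Lemma \ref{lemma:numconc} (where $c$ is of order $\epsilon/32$, inherited from a Gaussian testing argument on the event $\frac{1}{2n}\sum_i \langle \bX_i, \bB_n-\bB_n^0\rangle^2 > \epsilon$ that characterizes $\mathcal{B}_n^c$), the hypothesis $\eta < \epsilon/32 - d$ delivers $c > d+\eta$, and therefore $\Pi_n(\mathcal{B}_n^c) \leq e^{-(c-d-\eta)n} \to 0$ almost surely under $\bB_n^0$.

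The main obstacle is obtaining an \emph{almost-sure} (not merely in-probability) uniform lower bound on $\log R_n$ over $U_n$ while the ambient tensor dimension $\prod_j p_{j,n}$ grows with $n$. The shrinking radius $\delta_n = 2\eta/(3M_n\zeta_n)$ is engineered exactly for this: the $1/M_n$ factor absorbs the design-dependent scale $\sqrt{\sum_i\|\bX_i\|_2^2}$ so that only the ratios $n/\zeta_n$ and $n^2/\zeta_n^2$ drive the bounds, and the excess polynomial cushion $\zeta_n = n^{(1+\rho_3)/2}$ with $\rho_3>0$ pushes both to $o(n)$ with summable tail probabilities, converting Gaussian concentration into a.s.\ control via Borel--Cantelli. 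Everything else is bookkeeping once Lemma \ref{lemma:numconc} handles the numerator.
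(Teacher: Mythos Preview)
Your overall Schwartz scheme matches the paper's, and the quadratic-term bound is fine. The gap is in the cross-term control: the claim that Gaussian concentration plus Borel--Cantelli gives $\|\sum_i \epsilon_i \vec(\bX_i)\|_2 \le C\,nM_n$ eventually a.s.\ is not justified without assumptions on the design. Writing $G=\sum_i\epsilon_i\vec(\bX_i)$, one has $\|G\|^2=\bepsilon^T(\sum_i\vec(\bX_i)\vec(\bX_i)^T)\bepsilon$, a Gaussian quadratic form whose fluctuations are governed by the operator norm of the $n\times n$ Gram matrix, which can equal its trace $n^2M_n^2$ (e.g.\ when the $\bX_i$ are collinear). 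In that regime $\|G\|/(nM_n)$ behaves like $|Z|$ for a single standard normal and $P(\|G\|>CnM_n)$ is bounded away from zero, so Borel--Cantelli fails. Knowing only the trace of the covariance is not enough to get a.s.\ concentration at the scale $nM_n$.

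The paper avoids this by applying Cauchy--Schwarz in $\Re^n$ rather than in the ambient tensor space: it bounds the relevant term by $\|\bepsilon\|\cdot\|\bH_n-\bH_n^0\|$ and controls the first factor through the data event $\mathcal{A}_{2n}=\{\|\bepsilon\|^2<\zeta_n^2\}$. Since $\|\bepsilon\|^2\sim\chi_n^2$ lives in $n$ dimensions regardless of the tensor dimension, $P(\mathcal{A}_{2n}^c)\le\exp(-\zeta_n^2/4)$ \emph{is} summable; this is precisely why $\zeta_n=n^{(1+\rho_3)/2}$ is chosen. On $\mathcal{A}_{2n}$ the cross contribution is at most $\zeta_n\cdot nM_n\delta_n=2\eta n/3$, which combined with your $o(n)$ quadratic bound yields $\log R_n>-\eta n$ on $U_n$. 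Your route can also be repaired directly: Markov on $\|G\|^2$ gives $\|G\|\le nM_n\,n^{(1+\delta)/2}$ eventually a.s.\ for any $\delta>0$, and choosing $\delta<\rho_3$ still makes $\delta_n\|G\|=o(n)$. A smaller point: Lemma~\ref{lemma:numconc} supplies exponentially consistent tests, not a bound on $N_n$ itself; the passage to an a.s.\ numerator bound needs the split $\Pi_n(\mathcal{B}_n^c)\le\Phi_n+(1-\Phi_n)N_n/D_n$ together with a Fubini/Borel--Cantelli argument on each piece, which the paper carries out explicitly.
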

Lemma \ref{lemma:numconc} verifies the existence of exponentially-consistent tests.  The proof of the Lemma and theorem are provided in the Appendix.
The proposed multiway shrinkage prior satisfies \eqref{priorprop} and hence leads to posterior consistency under the following theorem.
\begin{theorem}\label{tensorDLGDP}
For fixed constants $H_1, H_2, M_1, \rho_1,$ and $\rho_2 > 0$, the M-DGDP prior \eqref{eq:M1} yields posterior consistency under conditions:
\begin{enumerate}[(a)]
\item $H_1n^{\rho_1}<M_n<H_2n^{\rho_2}$
\item $\sup_{l=1,\dots,p_{j,n}}|\bbeta_{j,n,l}^{0(r)}|<M_1<\infty$, for all $j=1,\dots,D;\:r=1,\dots,R$
\item $\sum_{j=1}^D p_{j,n}\log(p_{j,n})=o(n)$.
\end{enumerate}
\end{theorem}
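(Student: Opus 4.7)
The plan is to verify the sufficient condition \eqref{priorprop} of Theorem \ref{theorem:main} for the M-DGDP prior, and then directly invoke that theorem. The first move is to replace the $\ell_2$-ball around $\bB_n^0$ by the intersection of $\ell_2$-balls around each margin vector. Using the telescoping identity for a rank-$R$ PARAFAC,
\begin{align*}
\bbeta_1^{(r)} \circ \cdots \circ \bbeta_D^{(r)} - \bbeta_{1,n}^{0(r)} \circ \cdots \circ \bbeta_{D,n}^{0(r)} = \sum_{j=1}^D \bbeta_{1,n}^{0(r)}\!\circ\cdots\circ\bbeta_{j-1,n}^{0(r)}\!\circ\!(\bbeta_j^{(r)}-\bbeta_{j,n}^{0(r)})\!\circ\bbeta_{j+1}^{(r)}\!\circ\cdots\circ\bbeta_D^{(r)},
\end{align*}
Frobenius-norm submultiplicativity, and the bound $\|\bbeta_{j,n}^{0(r)}\|_2 \le M_1\sqrt{p_{j,n}}$ from condition (b), one shows that $\|\bbeta_j^{(r)}-\bbeta_{j,n}^{0(r)}\|_2 \le \delta_n$ for all $j,r$ (with $\delta_n \le M_1$) implies $\|\bB_n - \bB_n^0\|_2 \le C_1 \delta_n \sqrt{P_n}$, where $C_1 = RD(2M_1)^{D-1}$ and $P_n = \prod_j p_{j,n}$. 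Choosing $\delta_n = \frac{2\eta}{3 C_1 M_n \zeta_n \sqrt{P_n}}$ makes the right-hand side smaller than the radius $\frac{2\eta}{3 M_n \zeta_n}$ appearing in \eqref{priorprop}.

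The second step is to lower bound the M-DGDP prior mass of the event $\{\|\bbeta_j^{(r)}-\bbeta_{j,n}^{0(r)}\|_2<\delta_n \text{ for all } j,r\}$. The idea is to condition on the hyperparameters $(\Phi,\tau,\Lambda)$ lying in a fixed compact rectangle $\mathcal{H}$ bounded away from $0$ and $\infty$; the prior assigns $\mathcal{H}$ a strictly positive mass $p_{\mathcal{H}} > 0$. On $\mathcal{H}$, each $\beta_{j,k}^{(r)}$ is marginally $\mathrm{DE}(\mu_{jr})$ with $\mu_{jr}$ bounded above and below, and the margin coordinates are conditionally independent. The Laplace density is bounded below by $\tfrac{1}{2\mu_{jr}}\exp(-(M_1+\delta')/\mu_{jr})$ on the interval $[\beta_{j,n,k}^{0(r)}-\delta',\beta_{j,n,k}^{0(r)}+\delta']$, so taking $\delta' = \delta_n/\sqrt{p_{j,n}}$ for each coordinate (which jointly implies the margin-vector ball condition) gives
\begin{align*}
\pi_n\Big(\|\bbeta_j^{(r)}-\bbeta_{j,n}^{0(r)}\|_2<\delta_n\ \forall j,r\Big) \ge p_{\mathcal{H}} \exp\bigg\{R\sum_{j=1}^D p_{j,n}\Big(\log(c_0\delta_n) - \tfrac{1}{2}\log p_{j,n}\Big)\bigg\}
\end{align*}
for a constant $c_0 > 0$ depending only on $M_1$ and $\mathcal{H}$.

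It remains to verify that the exponent above is at least $-dn$ for large $n$. Condition (a) gives $\log M_n = O(\log n)$, so $|\log \delta_n| = O(\log n) + \tfrac{1}{2}\log P_n = O(\log n) + \tfrac{1}{2}\sum_j \log p_{j,n}$. Multiplying by $R\sum_j p_{j,n}$ yields two types of terms: $O(\log n)\cdot \sum_j p_{j,n}$ and $\sum_j p_{j,n}\cdot\sum_k \log p_{k,n}$; under condition (c), both are $o(n)$ whenever $p_{j,n}$ grows at most polynomially in $n$ (which is itself forced by (c)), since $\sum_j p_{j,n} = O(\sum_j p_{j,n}\log p_{j,n}/\log 2) = o(n)$ and $\log p_{j,n} = O(\log n)$. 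The additive term $-\tfrac{R}{2}\sum_j p_{j,n}\log p_{j,n}$ is directly $o(n)$ by (c). Hence the lower bound is $\exp(-o(n))$, which eventually dominates $\exp(-dn)$ for any $d>0$, verifying \eqref{priorprop}.

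The main obstacle is Step 2: controlling the induced prior mass on polynomially small balls around the true margin vectors under the deeply hierarchical M-DGDP prior. Conditioning on a compact set of hyperparameters neatly trades the unboundedness of the hierarchy for a uniform Laplace lower bound, but it requires tracking the dependence of the constants on $M_1$, $R$, $D$, and on the joint dimensionality $\sum_j p_{j,n}$ so that the product of $R\sum_j p_{j,n}$ one-dimensional lower bounds does not consume the entire $e^{-dn}$ budget; this is where condition (c) becomes essential.
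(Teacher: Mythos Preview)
Your proposal is correct and follows a genuinely different route from the paper's proof. Two differences stand out. First, for the reduction from a tensor ball to margin-vector balls, the paper expands $\bB_n-\bB_n^0$ via the full multinomial identity of Lemma~\ref{lem:tensordiff} and defines $\kappa_n$ implicitly as the positive root of a degree-$D$ polynomial, then controls $1/\kappa_n$ through Cauchy's root bound (Lemma~\ref{lem:polynomroots}); your telescoping identity achieves the same conclusion---a margin radius of order $(M_n\zeta_n\sqrt{P_n})^{-1}$---with far less machinery. Second, for the prior-mass lower bound, the paper integrates the full M-DGDP hierarchy analytically: it bounds the Gaussian mass given $w_{jr,l}$, integrates out $w_{jr,l}$ using the inverse-Gaussian normalizer, then $\lambda_{jr}$ via a Gamma integral, and finally $(\Phi,\tau)$ by restricting to $\tau\in[0,1]$ and invoking the Dirichlet normalizer. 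Your shortcut---condition on a fixed compact rectangle for the finitely many hyperparameters $(\Phi,\tau,\Lambda)$, which has fixed positive prior mass since $R,D$ are fixed, and use that each $\beta_{j,k}^{(r)}$ is then conditionally independent $\mathrm{DE}$ with rate bounded on $\mathcal{H}$---is more elementary and more portable to other hierarchical shrinkage priors, at the cost of losing the explicit dependence on $(a_\lambda,b_\lambda,a,b_\tau)$ that the paper's calculation displays.

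One small point to tighten: in your final step you assert that $(\sum_j p_{j,n})\cdot O(\log n)=o(n)$ by combining ``$\sum_j p_{j,n}=o(n)$'' with ``$\log p_{j,n}=O(\log n)$,'' but $o(n)\cdot\log n$ is not $o(n)$ in general. The claim is nevertheless true: for each $j$, if $p_{j,n}\log n$ were not $o(n)$ then along a subsequence $p_{j,n}\ge c\,n/\log n$, whence $\log p_{j,n}\ge (\log n)/2$ eventually and $p_{j,n}\log p_{j,n}\ge cn/2$, contradicting (c). Summing over the fixed number $D$ of modes gives the result. Similarly, $(\sum_j p_{j,n})(\sum_k\log p_{k,n})\le D^2\,p_{j^*,n}\log p_{j^*,n}=o(n)$ where $j^*=\arg\max_j p_{j,n}$. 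The paper's proof glosses over the same step, so this is a shared lacuna rather than a defect of your argument.
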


\begin{remark}
Theorem \ref{tensorDLGDP} require that $\sum_{j=D}p_{j,n}$ grows sub-linearly with sample size $n$. 
 \end{remark}

\section{Posterior computation and model fitting}
\label{sec:posterior_computation}
Letting $y \in \Re$ denote a response, and $\bz \in \Re^p, \bX \in\otimes_{j=1}^D \Re^{p_j}$ predictors, we let
\begin{align}\label{eq:tensor_reg_model}
\begin{aligned}
&y| \bX, \bgamma, \bB, \sigma \sim \mathrm{N}\big(\bz' \bgamma + \langle \bX,\bB \rangle, \sigma^2\big) \\
&\bB = \sum_{r=1}^R \bB_r, ~\bB_r = \bbeta_D^{(r)} \otimes \cdots \otimes \bbeta_1^{(r)} \\
&\sigma^2 \sim \pi_\sigma, ~\bgamma \sim \pi_\gamma, ~\bbeta_j^{(r)} \sim \pi_\bbeta.
\end{aligned}
\end{align}
The noise variance is given a conjugate inverse-gamma prior, $\sigma^2 \sim \mathrm{IG}(v/2, v s_0^2 /2)$, with $s_0^2$ chosen  so that $\mathrm{Pr}(\sigma^2 \le 1) = 0.95$. This is done assuming the response is centered and scaled, which also removes the need for an intercept; by default we set $v = 2$. Fixed effects are given conjugate normal prior $\bgamma \sim \mathrm{N}(0, \sigma^2 \bSigma_{0\bgamma})$, with rescaled prior covariance. Finally, voxel data for the tensor predictor are standardized to have mean zero and variance 1, allowing one to assume default values for hyper-parameters of the proposed multiway priors.

\subsection{Posterior computation}
\label{sec:dl_gdp_mcmc}
The proposed multiway M-DGDP prior \eqref{eq:M1} leads to efficient posterior computation for tensor regression model \eqref{eq:tensor_reg_model}. We rely on marginalization and blocking to reduce auto-correlation for $\big\{\big(\bbeta_j^{(r)}, w_{jr}; 1 \le j \le D, 1 \le r \le R\big), (\Phi, \tau), (\bgamma,\sigma)\big\}$, drawing in sequence from (i) $[\alpha, \Phi, \tau | \bB, W]$; (ii) $[\bB, W | \Phi, \tau, \bgamma, \sigma, \by]$; and (iii) $[\bgamma, \sigma | \bB, \by]$.
Step (i) is non-trivial and we propose an efficient way to sample this block of parameters  compositionally. This is {\em essential} for good  mixing under the M-DGDP prior. Step (ii) is sampled using a sequence of draws from full conditional distributions using a back-fitting procedure to iterate draws from margin-level conditional distributions across the components.

\begin{enumerate}[(1)]
\item Sample $[\alpha, \Phi, \tau | \bB, \bW] = [\alpha | \bB, \bW] [\Phi, \tau | \alpha, \bB, \bW]$;
\begin{enumerate}[(a)]
\item Sample from the conditional distribution of Dirichlet concentration parameter $[\alpha | \bB, \bW]$ via griddy-Gibbs:
form a reference set by drawing $M$ samples from $[\Phi, \tau | \alpha, \bB, \bW]$ for each $\alpha \in \mathcal{A}$. Set $w_{j,l} = \pi(\bB | \alpha, \Phi_l, \tau_l, \bW) \pi(\Phi_l, \tau_l | \alpha)$, $1 \le l \le M |\mathcal{A}|$,  $p(\alpha | \bB, \bW) = \pi(\alpha) \sum_{l=1}^{M |\mathcal{A}|} w_{j,l} / M$, and $\mbox{Pr}(\alpha=\alpha_j|-) = p(\alpha_j | \bB, \bW) /\allowbreak \sum_{\alpha \in \mathcal{A}} p(\alpha | \bB, \bW)$.
\item
Next, the rank-specific scales are sampled (see Appendix \ref{sec:tensor_mcmc}) as $[\Phi, \tau | \alpha^*, \bB, \bW] =  [\Phi | \bB, \bW][\tau | \Phi, \bB, \bW]$; define $p_0 = \sum_{j=1}^D p_j$, and recall $a_\tau = \sum_{r=1}^R \alpha_r = R \alpha$ and $b_\tau = \alpha(R / v)^{1/D}$ (see Section \ref{sec:mdgdp_prior}), then
\begin{itemize}
\item draw $\psi_r\sim \mathrm{giG}(\alpha - p_0 / 2, 2b_\tau, 2C_r)$, $C_r = \sum_{j=1}^D \bbeta_j^{(r)\,T} \bW_{jr}^{-1} \bbeta_j^{(r)}$, and set $\phi_r = \psi_r / \sum_{l=1}^R \psi_l$ in parallel for $1 \le r \le R$
\item draw $\tau \sim \mathrm{giG}(a_\tau - Rp_0 / 2, 2b_\tau, 2\sum_{r=1}^R D_r)$, $D_r = C_r/\phi_r$
\end{itemize}
\end{enumerate}

\item Sample $\big\{(\bbeta_j^{(r)}, w_{jr}, \lambda_{jr}); 1\le j \le D, 1\le r \le R\big\} | \Phi, \tau, \bgamma, \sigma, \by$. For $r = 1, \dots, R$ and $j = 1, \dots, D$, cycle over $[(\bbeta_j^{(r)},  w_{jr}, \lambda_{jr}) | \bbeta_{-j}^{(r)}, \bB_{-r}, \Phi, \tau, \bgamma, \sigma, \by]$, where $\bbeta_{-j}^{(r)} = \{ \bbeta_l^{(r)}, \, l \ne j\}$ and $\bB_{-r} = \bB \setminus \bB_r$;
\begin{enumerate}[(a)]
\item draw $[w_{jr}, \lambda_{jr} | \bbeta_j^{(r)}, \phi_r, \tau] =[w_{jr} | \lambda_{jr}, \bbeta_j^{(r)}, \phi_r, \tau] [\lambda_{jr} | \bbeta_j^{(r)}, \phi_r, \tau]$:
\begin{itemize}
\item draw $\lambda_{jr} \sim \mathrm{Ga}\big(a_\lambda + p_j, b_\lambda + ||\bbeta_j^{(r)}||_1 / \sqrt{\phi_r \tau}\big);$ and
\item draw $w_{jr,k} \sim \mathrm{giG}\big(\frac{1}{2}, \lambda_{jr}^2, \beta_{j,k}^{2\,(r)} / (\phi_r \tau)\big)$ independently for $1 \le k \le p_j$
\end{itemize}

\item draw $\bbeta_j^{(r)} \sim \mathrm{N}(\bmu_{jr}, \bSigma_{jr} )$: define $h_{i,j,k}^{(r)}=\sum_{d_1=1, \dots, d_D=1}^{p_1, \dots, p_D} I(d_j = k) \, x_{d_1, \dots, d_D} \big(\prod_{l \neq j} \beta_{l, i_l}^{(r)}\big)$, $\bH_{i,j}^{(r)} = (h_{i,j,1}^{(r)}, \dots, h_{i,j,p_j}^{(r)})'$, $\tilde y_i = y_i - \bz_i'\bgamma - \sum_{l \neq r} \langle\bX_i,\bB_l\rangle$ for $1 \le i \le n$; then $\bSigma_{jr} = \big( \bH_j^{(r)\, T}\bH_j^{(r)} / \sigma^2 +  \bW_{jr}^{-1} / (\phi_r \tau) \big)^{-1}$, $\bmu_{jr} = \bSigma_{jr} \bH_j^{(r)} \tilde \by / \sigma^2$
\end{enumerate}

\item Sample $[\bgamma, \sigma | \bB, \by] = [\bgamma | \sigma, \tilde\by] [\sigma^2 | \tilde \by]$; define $\tilde y_i =  y_i - \langle\bX_i, \bB\rangle$ for $1 \le i \le n$, then
\begin{enumerate}[(a)]
\item draw $\sigma^2 \sim \mathrm{IG}(a_\sigma, b_\sigma)$, $a_\sigma = (n + v)/2$, $b_\sigma = \big(vs_0^2 + \norm{\tilde \by}_2^2 - \tilde\by^T \bZ \bmu_\bgamma\big)/2$
\item draw $\bgamma \sim \mathrm{N}\big(\bmu_\bgamma, \sigma^2 \bSigma_\bgamma\big)$, $\bSigma_\bgamma = \big(\bZ^T\bZ + \bSigma_{0\bgamma}^{-1}\big)^{-1}$, $\bmu_\bgamma = \bSigma_\bgamma \bZ^T \tilde \by$.
\end{enumerate}
\end{enumerate}
\section{Simulation studies}
\label{sec:simulation_studies}

To illustrate finite-sample performance of the proposed multiway priors, we show results from a simulation study with various dimensionality $(p, R)$ and define $\bar{b} = \max |\bB^0_{i_1, \dots, i_D}|$ as the maximum signal size. Throughout, set $p_j = p$, $\sigma_0^2 = 1$ and $\bar{b} = 1$ for convenience. In addition, we set $\bgamma_0 = (0, \dots, 0)$ and focus exclusively on inference for tensor parameter $\bB$.
The following simulated setups are considered:
\begin{enumerate}[1.]
\item ``Generated'' tensor: We construct tensor parameters having rank $R_0 = \{3,5\}$ with $p = \{64, 100\}$ and $D = 2$. 
\item ``Ready made'' tensor: We use three tensor (2D) images without generating them from a parafac decomposition with known rank. 
\end{enumerate}
Five replicated datasets with $n = 1000$ are generated according to \eqref{eq:tensor_reg_model} with $x_{i_1, \dots, i_D} \sim \mathrm{N}(0, 1)$. The tensor parameters considered are shown in Figure \ref{fig:2d_images}, where the magnitude of the non-zero voxels is $\bar{b} = 1$. Examples are chosen to demonstrate recovery of voxel-level coefficients across varying degrees of complexity (dimension,  parafac rank) and sparsity (\% of non-zero voxels; see Figure \ref{fig:2d_images}).  The performance of our method with M-DGDP prior \eqref{eq:M1} is compared with (i) frequentist tensor regression (FTR)\citep{zhou2013tensor}; and (ii) Lasso (on the vectorized tensor predictor). Comparisons  are based on (a) voxel mean squared estimation error (true non-zero, true zero, and overall); and (b) frequentist coverage (and length) of 95\% credible intervals. MCMC was run for 1000 iterations, with a 200 iteration burn-in and remaining samples thinned by 5. FTR uses $R=5$, selecting the tuning parameter over a grid of values to minimize RMSE. To choose initial values, a preliminary analysis was run with a coarsened $16 \times 16$ image\footnote{On several of the simulated examples, this seems to help optimization, both in terms of runtime as well as in terms of RMSE at convergence.}. In all simulation experiments, we observed rapid mixing for the proposed MCMC algorithm.

Voxel-level RMSE reported in Table \ref{tab:2d_simulation_stat} demonstrates that our method (M-DGDP) consistently out performs FTR. When the tensor parameter has a low-rank parafac decomposition (`R3-ex' and `R5-ex'), M-DGDP and FTR perform best, with M-DGDP having lower RMSE on both true zero and non-zero voxels.  This validates empirically prior \eqref{eq:M1} along with our suggested default hyper-parameter choices (see Section \ref{sec:multprior}); in particular, M-DGDP adapts to varying degrees of sparsity, shrinking many coefficients close to zero while accurately estimating nonzero voxels. FTR's performance is sensitive to the performance of cross validation for parameter tuning, with the CV grid sensitive to tensor dimension ($p,D$) and rank $R$.  In some cases, overall RMSE was lower for $R=3$ even though performance in estimating non-zero parameters was worse than for other choices. 

Results in Table \ref{tab:2d_simulation_mdgdp_coverage} demonstrate that M-DGDP yields credible intervals with good frequentist coverage across each of the simulated settings, both overall as well as on the true non-zero coefficients. Our method is one of the first to offer uncertainty quantification for tensor valued predictors, of critical performance in performing inferences on these parameters.
Finally, Table \ref{tab:2d_simulation_mdgdp_dimension} provides evidence of the robustness of our method to increasing predictor dimension using two of the simulated examples. In both cases, RMSE for FTR worsens considerably on the true zero coefficients. For the true nonzero voxels, RMSE increases for both methods as the margin dimension increases; on a relative \% basis, however, FTR worsens considerably more, while on an absolute scale, M-DGDP remains the clear winner.


\begin{figure}[!h]
\centering
\setlength{\tabcolsep}{1pt}
\begin{tabular}{ccc}
\subfigure[R3-ex (7.0\%)]{\includegraphics[width=0.32\linewidth]{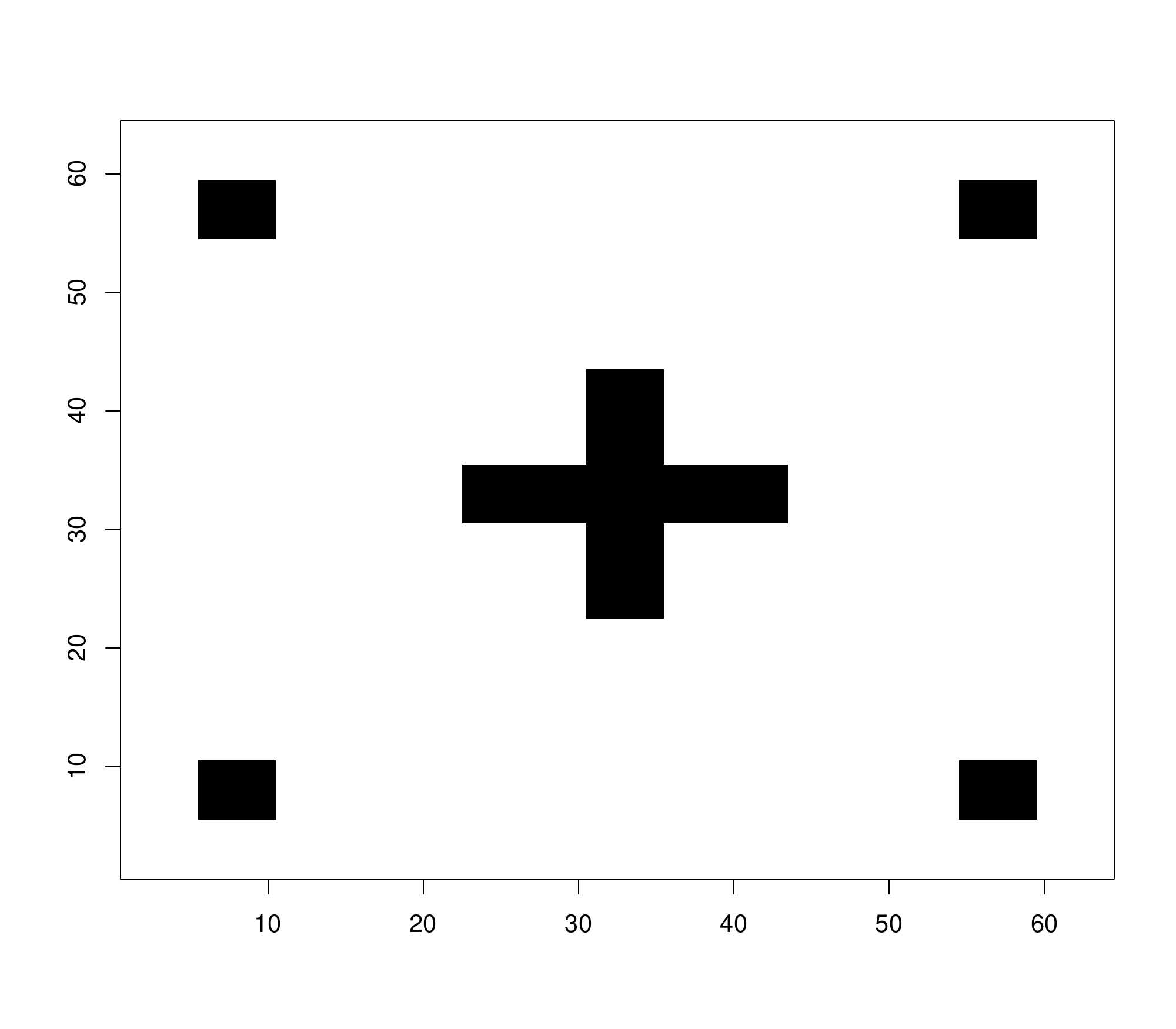}} & \subfigure[R5-ex (11.0\%)]{\includegraphics[width=0.32\linewidth]{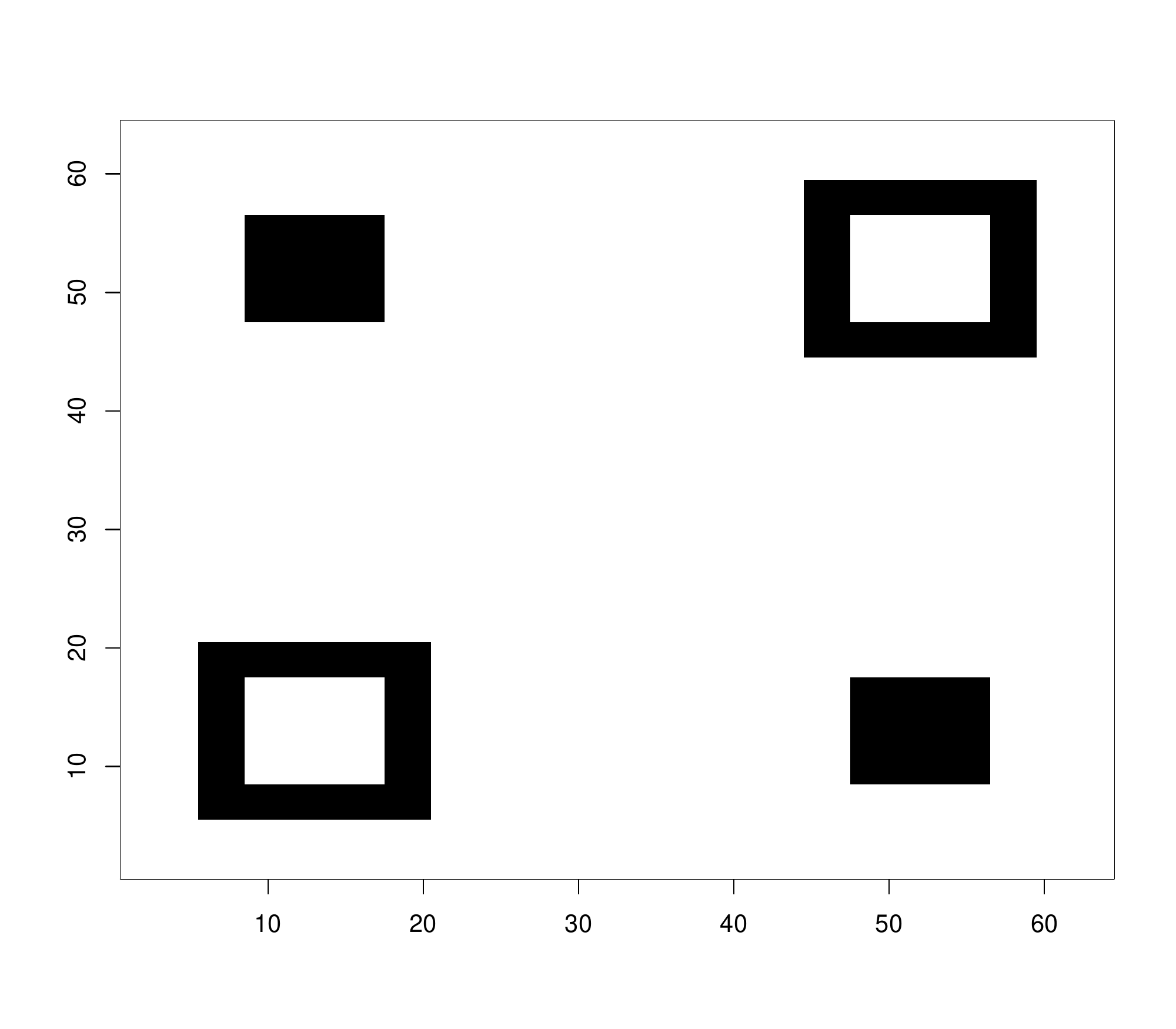}} & \subfigure[Shapes (6.8\%)]{\includegraphics[width=0.32\linewidth]{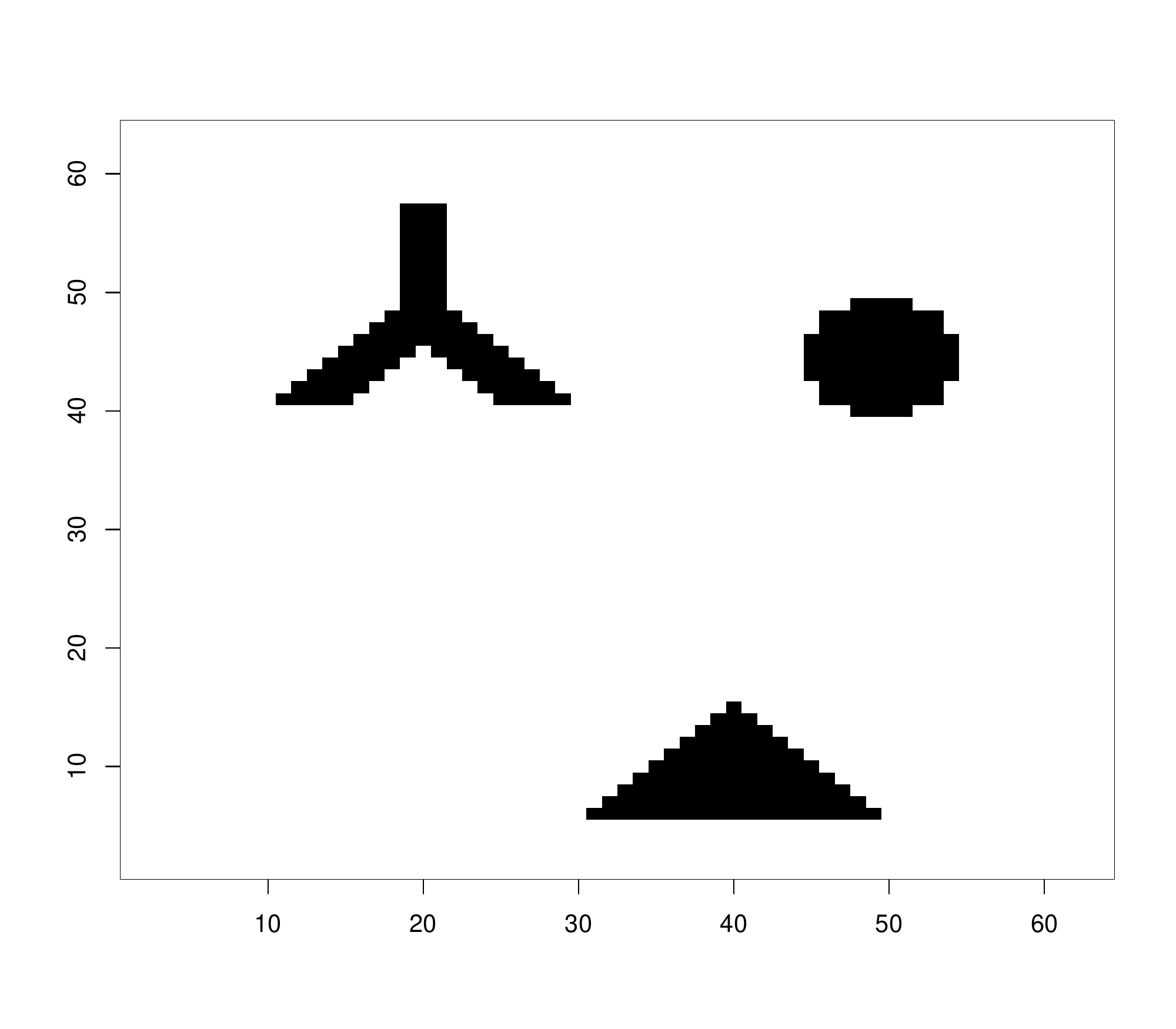}} \\
\subfigure[Eagle (7.4\%)]{\includegraphics[width=0.32\linewidth]{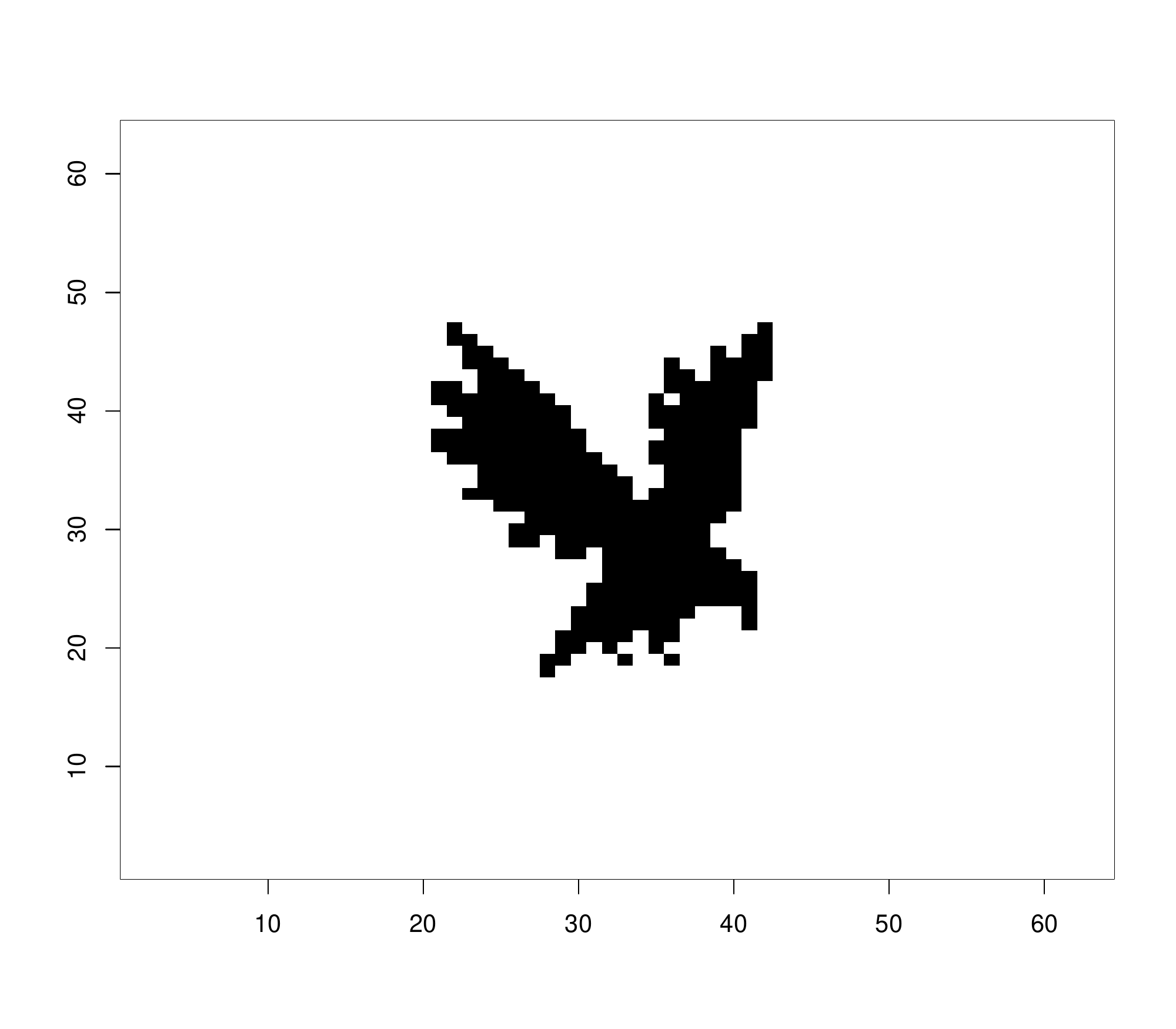}} & \subfigure[Palmtree (8.0\%)]{\includegraphics[width=0.32\linewidth]{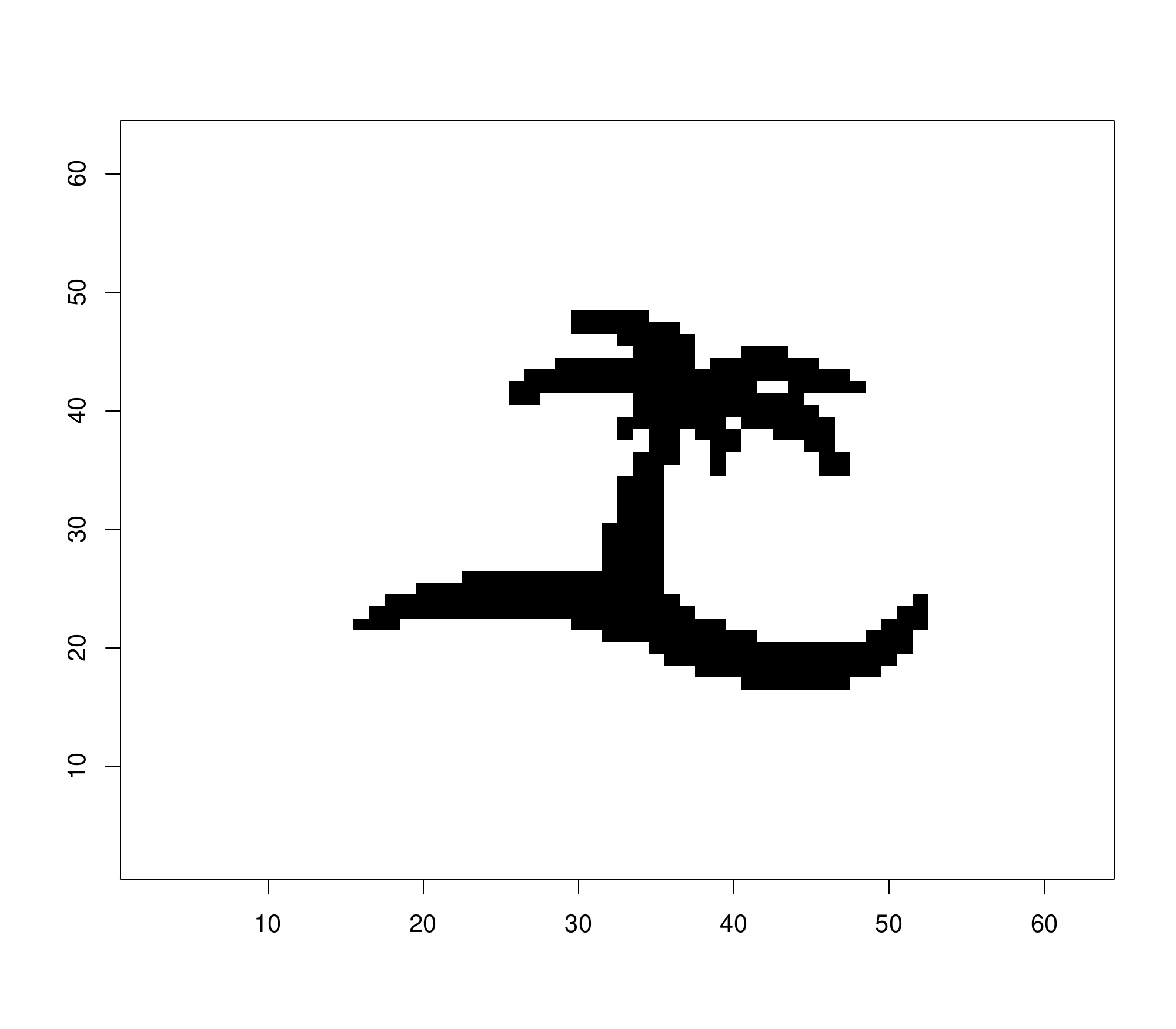}} & \subfigure[Horse (18.6\%)]{\includegraphics[width=0.32\linewidth]{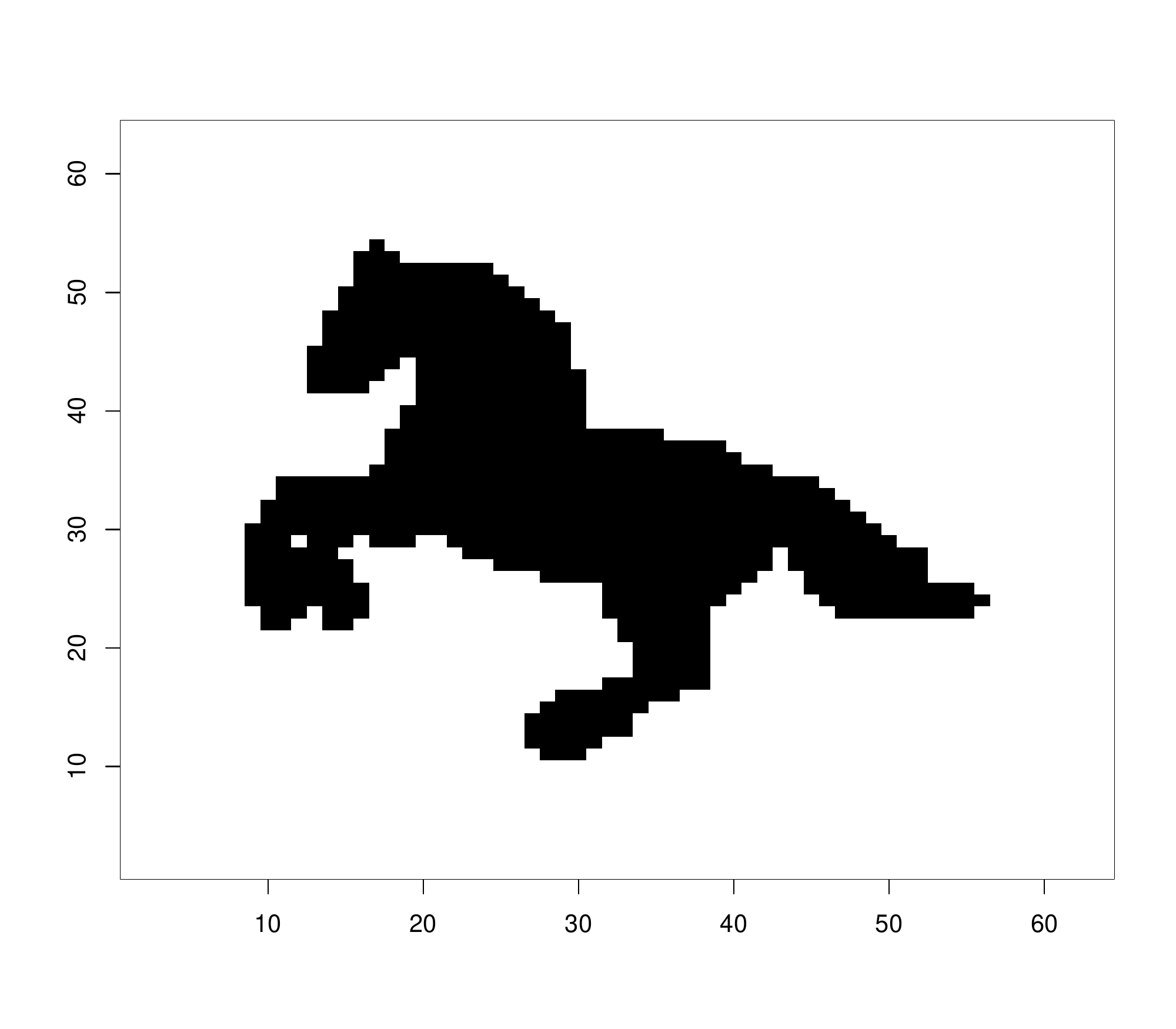}}
\end{tabular}
\caption{Simulated data with $64 \times 64$ 2D tensor images ($p = 64, \, D = 2$). Row 1: The first two images (from left) have a rank-3 and rank-5 parafac decomposition; the third image is ``regular'', although does not have a low-rank parafac decomposition. Row 2: All three images are irregular, and do not have a low-rank parafac decomposition. Sparsity (\% non-zero voxels) are displayed in sub-captions.}
\label{fig:2d_images}
\end{figure}

\begin{figure}[!h]
\centering
\begin{tabular}{ccc}
\subfigure[R3-ex]{\includegraphics[width=0.32\linewidth]{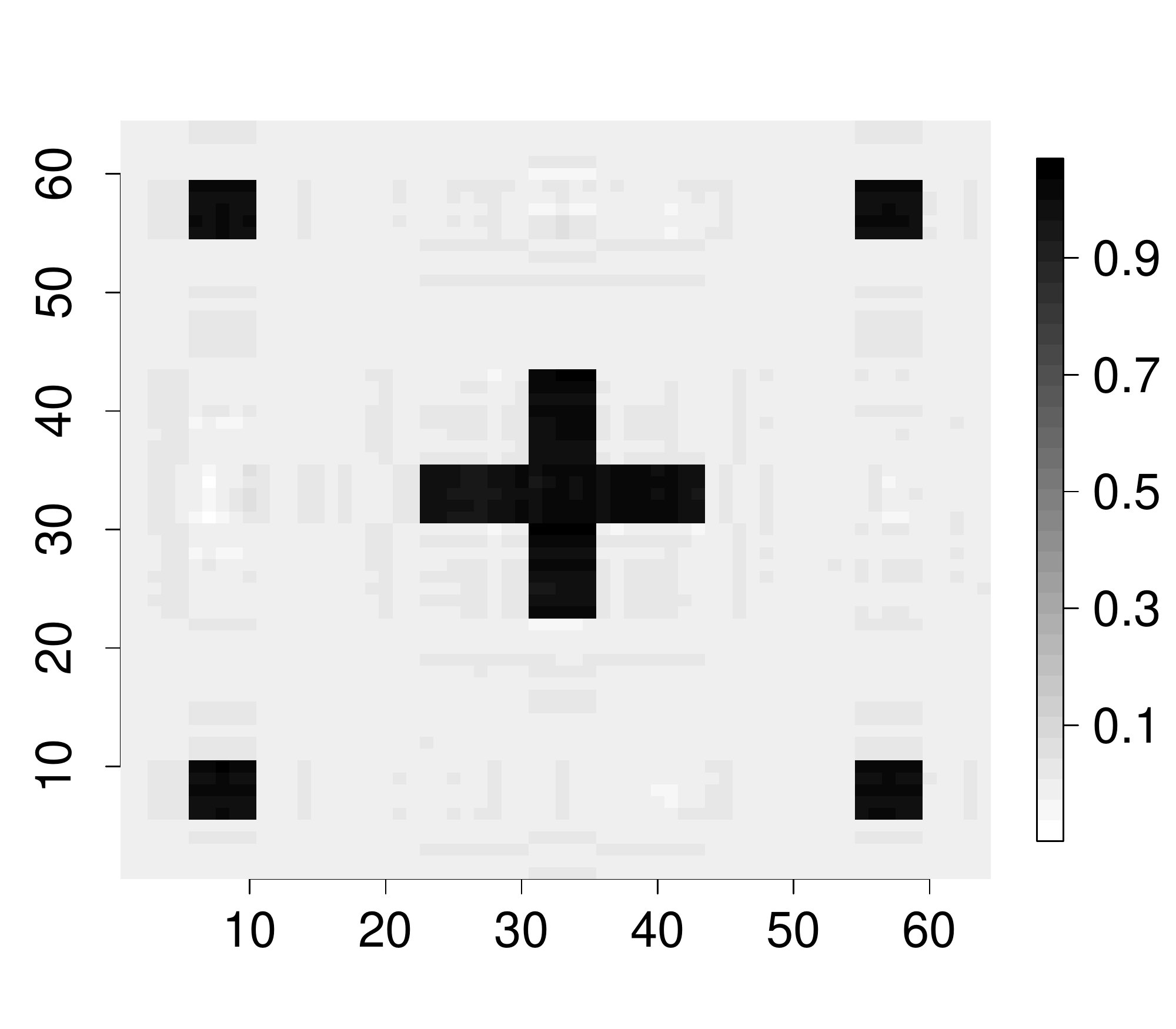}} & \subfigure[R5-ex]{\includegraphics[width=0.32\linewidth]{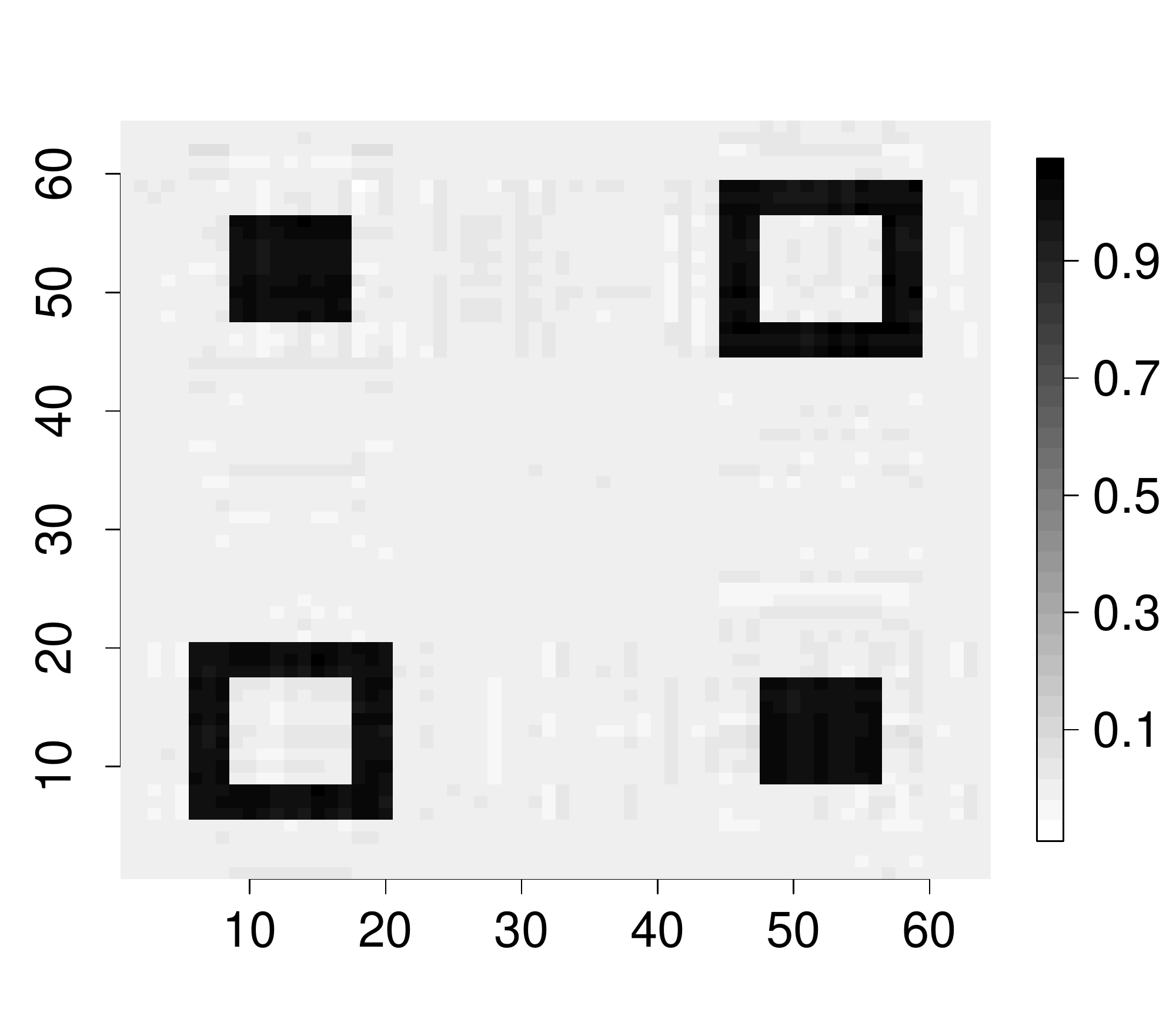}} & \subfigure[Shapes]{\includegraphics[width=0.32\linewidth]{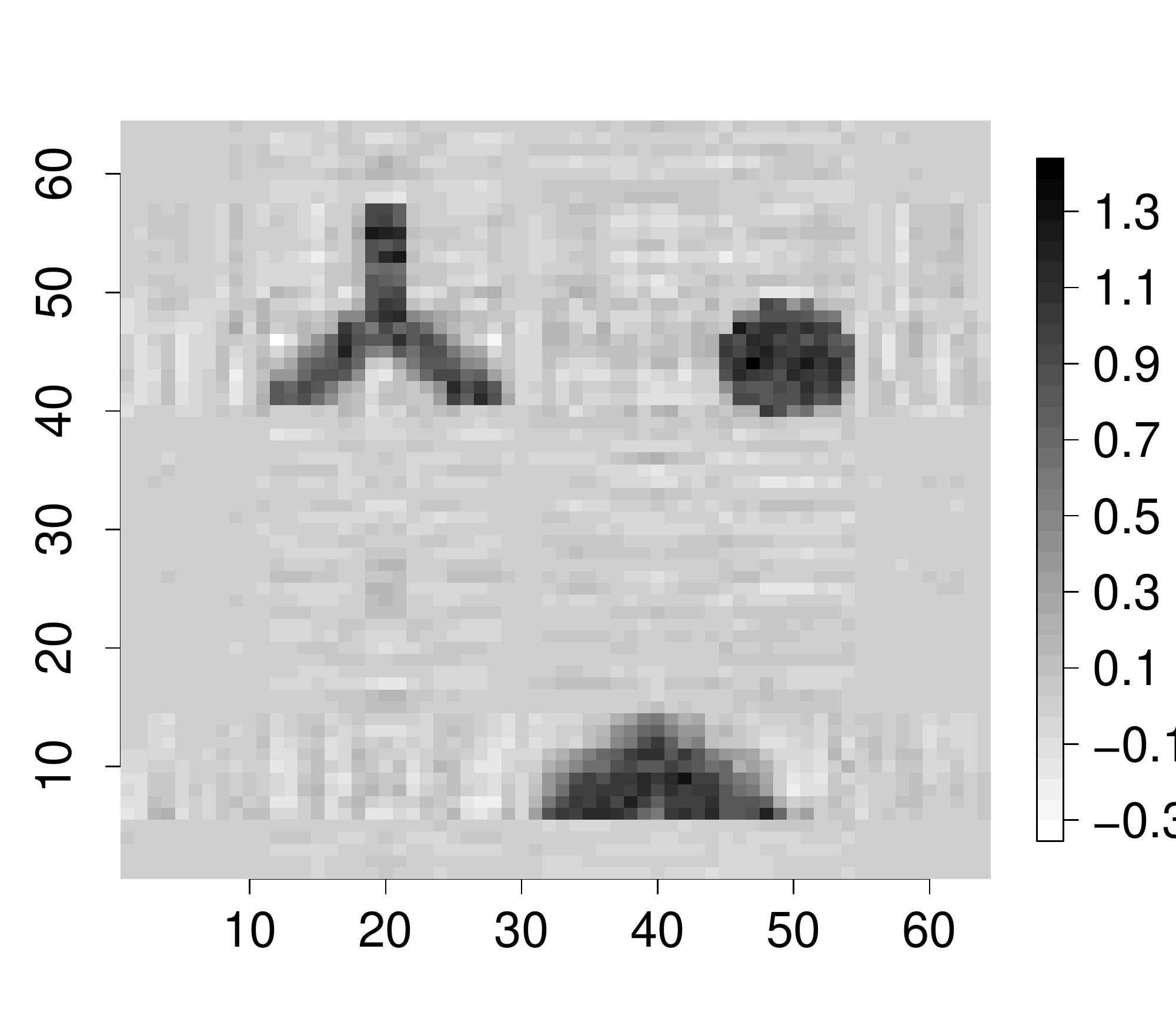}} \\
\subfigure[Eagle]{\includegraphics[width=0.32\linewidth]{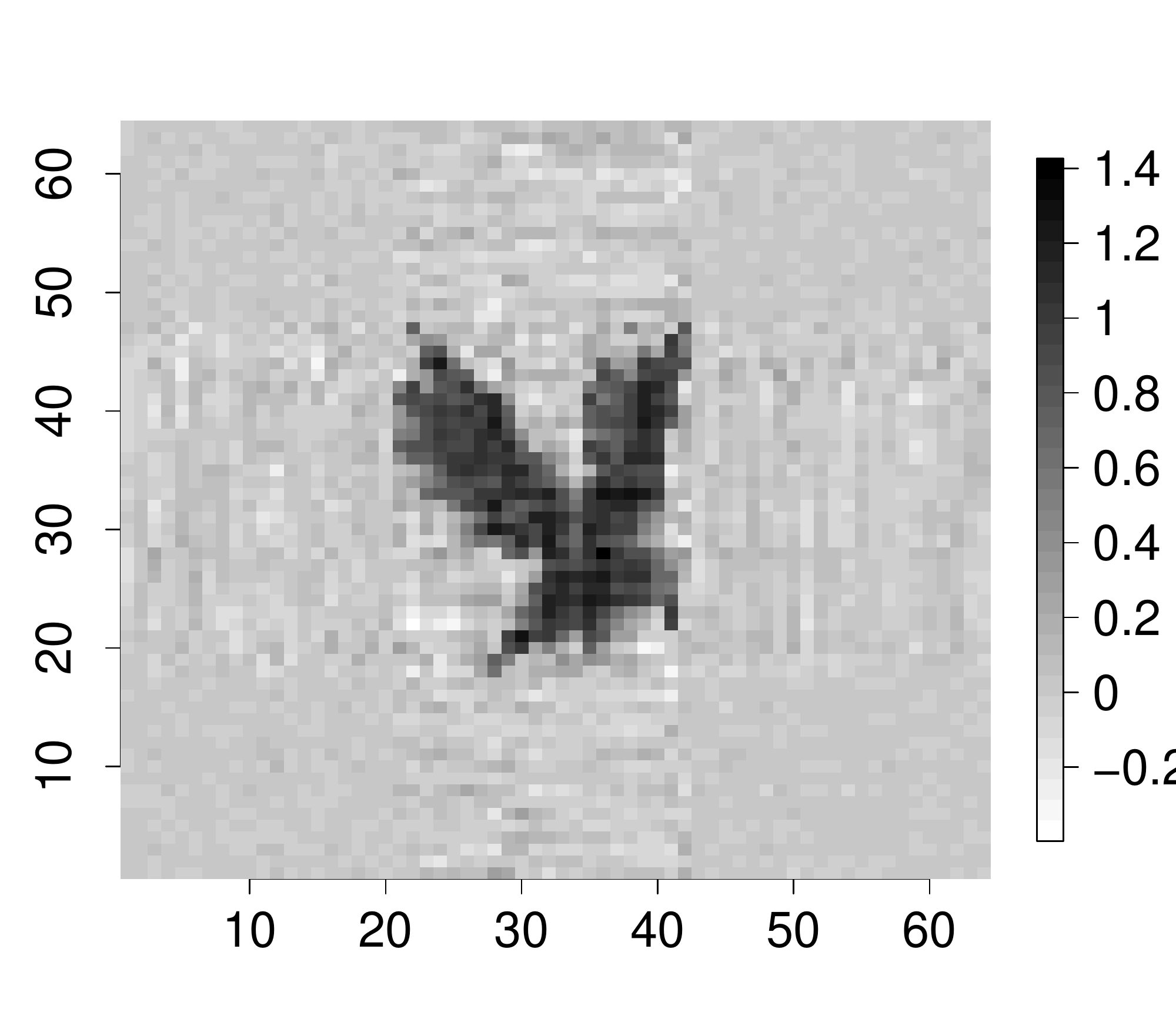}} & \subfigure[Palmtree]{\includegraphics[width=0.32\linewidth]{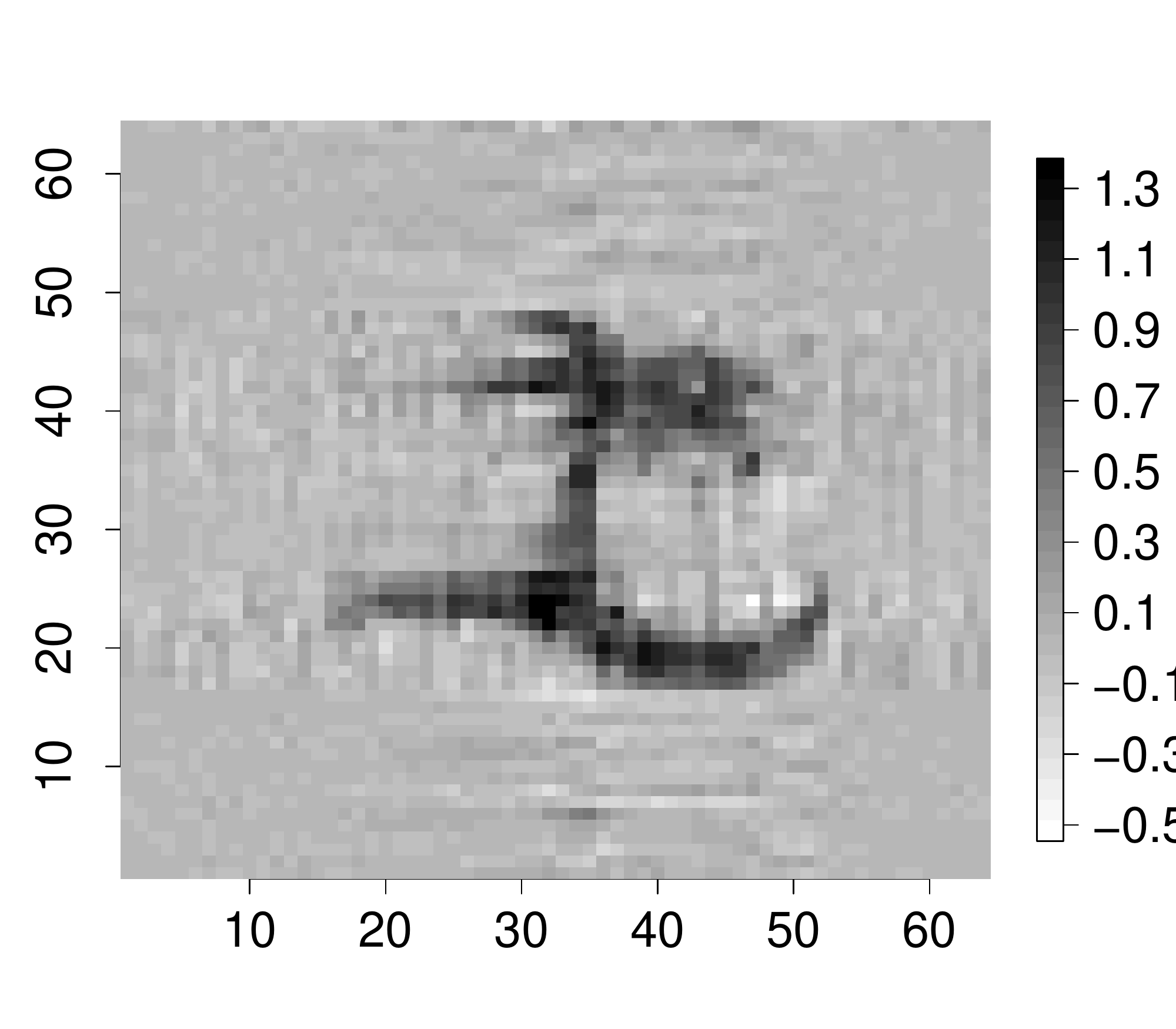}} & \subfigure[Horse]{\includegraphics[width=0.32\linewidth]{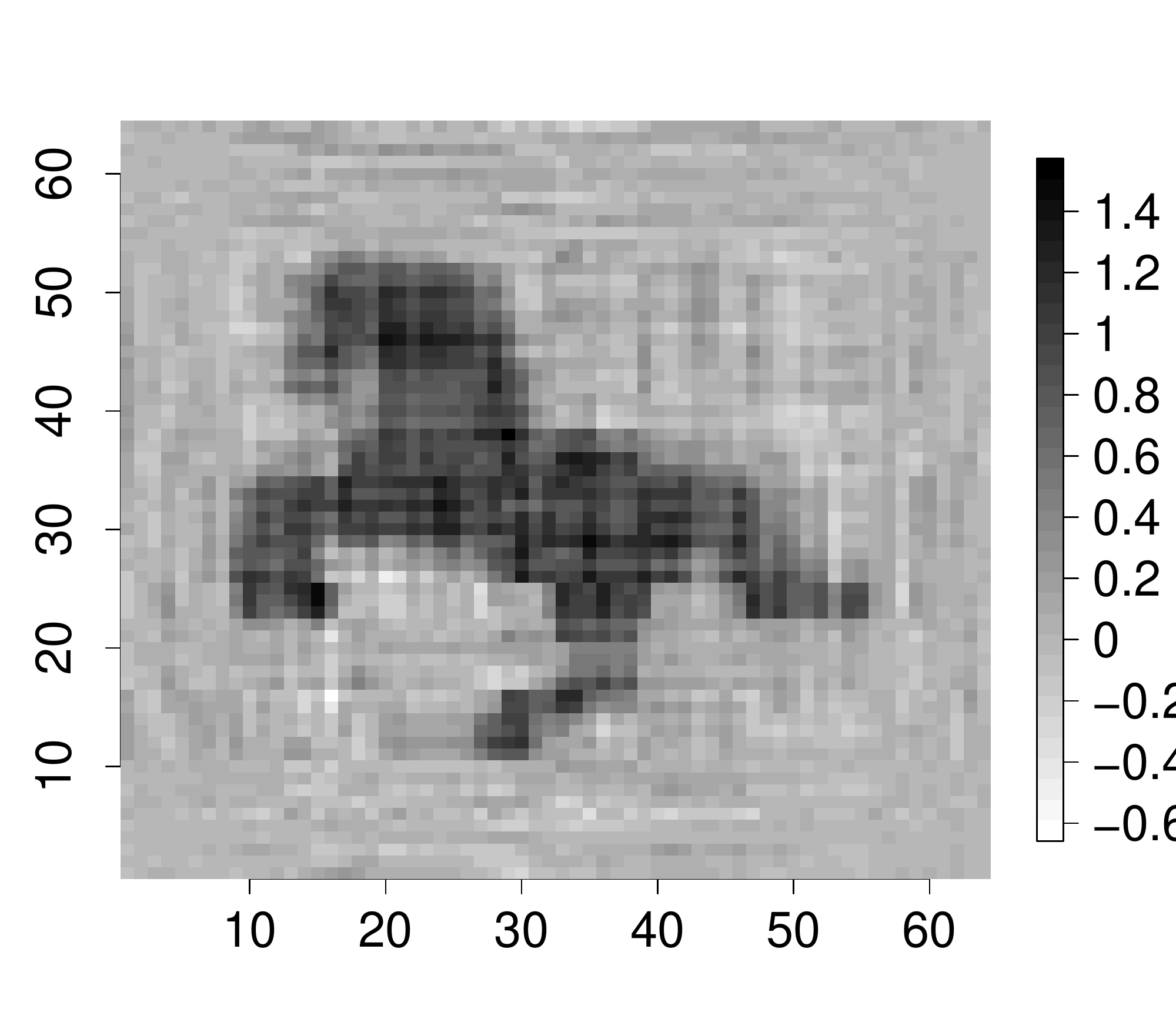}}
\end{tabular}
\caption{Recovered images for the $64 \times 64$ 2D tensor images in Figure \ref{fig:2d_images}. By default, $R = 10$ is used in all M-DGDP runs.}
\label{fig:2d_recover_images}
\end{figure}

\begin{table}[!h]
\footnotesize\centering
\begin{tabular}{l l | cccccc }
& & R3-ex & R5-ex & Shapes & Eagle & Palmtree & Horse \\
\hline 
\multirow{3}{*}{$|\mathrm{vox}_0| > 0$}
& M-DGDP
& \bf 0.023$_{0.00}$ & \bf 0.021$_{0.00}$ & \bf 0.243$_{0.01}$ & \bf 0.226$_{0.02}$ & \bf 0.316$_{0.01}$ & \bf 0.278$_{0.01}$ \\
& FTR 
& 0.035$_{0.00}$ & 0.030$_{0.00}$ & 0.415$_{0.03}$ & 0.354$_{0.03}$ & 0.435$_{0.02}$ & 0.391$_{0.03}$ \\
& Lasso 
& 0.628$_{0.02}$ & 0.822$_{0.03}$ & 0.619$_{0.07}$ & 0.665$_{0.03}$ & 0.698$_{0.03}$ & 0.888$_{0.01}$ \\
\hline
\multirow{3}{*}{$|\mathrm{vox}_0| = 0$}
& M-DGDP 
& \bf 0.011$_{0.00}$ & \bf 0.014$_{0.00}$ & \bf 0.071$_{0.00}$ & \bf 0.085$_{0.00}$ & \bf 0.100$_{0.01}$ & \bf 0.137$_{0.00}$ \\
& FTR 
& 0.022$_{0.00}$ & 0.020$_{0.00}$ & 0.127$_{0.02}$ & 0.163$_{0.03}$ & 0.159$_{0.00}$ & 0.215$_{0.02}$ \\
& Lasso
& 0.090$_{0.00}$ & 0.098$_{0.02}$ & 0.081$_{0.01}$ & 0.097$_{0.00}$ & 0.094$_{0.01}$ & 0.155$_{0.02}$ \\
\hline
\multirow{3}{*}{Overall}
& M-DGDP
& \bf 0.013 & \bf 0.015 & \bf 0.093 & \bf 0.102 & \bf 0.131 & \bf 0.172 \\
& FTR
& 0.023 & 0.021 & 0.164 & 0.184 & 0.196 & 0.257 \\
& Lasso
& 0.187 & 0.288 & 0.179 & 0.204 & 0.217 & 0.407
\end{tabular}
\caption{Comparison of voxel estimation as measured by root mean squared error (RMSE) for the six 2D tensor images portrayed in Figure \ref{fig:2d_images}. Results from FTR \citep{zhou2013tensor} use $R = 5$. By default, $R = 10$ is used in all M-DGDP runs.} 
\label{tab:2d_simulation_stat}
\end{table}

\begin{table}[!h]
\small\centering
\begin{tabular}{l l | cccccc }
& & R3-ex & R5-ex & Shapes & Eagle & Palmtree & Horse \\
\hline 
\multirow{1}{*}{$|\mathrm{vox}_0| > 0$}
& coverage & 0.986$_{0.02}$ & 0.946$_{0.02}$ & 0.747$_{0.01}$ & 0.731$_{0.04}$ & 0.677$_{0.04}$ & 0.795$_{0.02}$ \\
\hline
\multirow{2}{*}{Overall}
& coverage & 0.995$_{0.01}$ & 0.970$_{0.01}$ & 0.965$_{0.00}$ & 0.940$_{0.02}$ & 0.948$_{0.02}$ & 0.927$_{0.01}$ \\
& length & 0.066$_{0.01}$ & 0.061$_{0.01}$ & 0.290$_{0.00}$ & 0.301$_{0.03}$ & 0.410$_{0.03}$ & 0.566$_{0.02}$ \\
\end{tabular}
\caption{M-DGDP coverage statistics on generated and ready-made 2D tensor images with simulated tensor predictor data.}
\label{tab:2d_simulation_mdgdp_coverage}
\end{table}

\begin{table}[!h]
\centering
\begin{tabular}{l l | c | cc | cc }
& & \multirow{2}{*}{$|\mathrm{vox}_0|$} & \multicolumn{2}{c |}{R5-ex} & \multicolumn{2}{c}{Shapes} \\
& & & 64 & 100 & 64 & 100 \\
\hline
\multirow{4}{*}{M-DGDP}
& coverage & $> 0$ 
& 0.946$_{0.02}$ & 0.991$_{0.01}$ &  0.747$_{0.01}$ & 0.590$_{0.06}$ \\
& length & $ > 0$ 
& 0.061$_{0.01}$ & 0.069$_{0.01}$ & 0.290$_{0.00}$ & 0.247$_{0.01}$ \\
& rmse & $> 0$
& 0.021$_{0.00}$ & 0.032$_{0.01}$ & 0.243$_{0.01}$ & 0.320$_{0.03}$ \\
& rmse & $= 0$ 
& 0.014$_{0.00}$ & 0.014$_{0.00}$ & 0.071$_{0.00}$ & 0.063$_{0.00}$ \\
\hline
\multirow{2}{*}{FTR} 
& rmse & $> 0$
& 0.030$_{0.00}$ & 0.369$_{0.06}$ & 0.415$_{0.03}$ & 0.586$_{0.14}$ \\
& rmse & $= 0$
& 0.020$_{0.00}$ & 0.111$_{0.02}$ & 0.127$_{0.02}$ & 0.135$_{0.02}$
\end{tabular}
\caption{Sensitivity analysis of voxel estimation error (RMSE) as the tensor dimension increases; here $p_j = p \in \{64, 100\}$ for the 2D tensor images `R5-ex' and `Shapes'.}
\label{tab:2d_simulation_mdgdp_dimension}
\end{table}

\section{Simulated response with a real 3D brain image}\label{sec:real_3d_image_analysis}
We analyze data containing 3D MRI images for $550$ adolescents, with information such as age and sex available. Age and sex are treated as ordinary scalar covariates while 3D MRI images act as tensor covariates. Let $\bX$ denote a $30\times 30\times 30$ 3D MRI image, $Z_1$ be the age and $Z_2$ be the sex of an individual. The response is simulated using $y\sim \mathrm{N}\left(\bZ'\bgamma+\langle \bX, \bB_0\rangle,\sigma^2\right)$, where $\bZ$ denotes $(Z_1,Z_2)'$, $\bgamma\in\mathcal{R}^2$ and $\bB_0\in \mathcal{R}^{30\times 30\times 30}$. We assume the true $\bB_0$ is a rank-2 tensor, $\bB_0=\bb_1\circ\bb_2\circ\bb_3+\ba_1\circ\ba_2\circ\ba_3$. Initialization and standardization of predictors follow exactly as prescribed in Section \ref{sec:posterior_computation}. 

The following  cases are examined by varying $\ba_i$'s and $\bb_i$'s:
\newline
\emph{Case 1}:
$\bb_1=\bb_2=\left(0,\dots,0,\sin((1:15)*\pi/4)\right)$, $\bb_3=\left(\sin((1:10)*\pi/4),0,\dots,0\right)$,\\
$\ba_1=\left(0,\dots,0,\sin((1:10)*\pi/4)\right)$, $\ba_2=\left(0,\dots,0,\cos((1:15)*\pi/4)\right)$, \\$\ba_3=\left(\sin((1:15)*\pi/4),0,\dots,0\right)$.\\
\emph{Case 2}:
$\bb_1=\bb_2=\left(0,\dots,0,\sin((1:15)*\pi/6)\right)$, $\bb_3=\left(\sin((1:20)*\pi/6),0,\dots,0\right)$,\\
$\ba_1=\left(0,\dots,0,\sin((1:15)*\pi/4)\right)$, $\ba_2=\left(0,\dots,0,\cos((1:10)*\pi/6)\right)$,\\
$\ba_3=\left(\sin((1:15)*\pi/6),0,\dots,0\right)$.\\
\emph{Case 3}:
$\bb_1=\bb_2=\left(0,\dots,0,\sin((1:20)*\pi/6)\right)$, $\bb_3=\left(\sin((1:20)*\pi/6),0,\dots,0\right)$,\\
$\ba_1=\left(0,\dots,0,\sin((1:10)*\pi/4)\right)$, $\ba_2=\left(0,\dots,0,\cos((1:20)*\pi/4)\right)$,\\
$\ba_3=\left(\sin((1:20)*\pi/6),0,\dots,0\right)$.

These cases correspond to sparse $\bB_0$, with 12\%, 18\% and 30\% nonzero elements, respectively.  We implement M-DGDP, FTR with $R=5$ fixed, and Lasso with tensor vectorized.  Both FTR and \cite{zhou2013tensor} include an $L_1$ penalty (results are shown for the best choice of penalty), which can over-shrink voxel coefficients significantly different from zero.  M-DGDP instead includes heavy tail to prevent such over-shrinkage. This is evident from the better performance of M-DGDP prior in terms of estimating nonzero coefficients, see Table 5.  Table~\ref{tab:3d_data} summarizes RMSEs for the estimated tensor coefficients for each of the competitors. In each of the above simulations, trace plots for several model parameters were monitored and found to mix well using the proposed MCMC algorithm in Section \ref{sec:posterior_computation}.
\begin{table}[!h]
\centering
\begin{tabular}{l l | ccc }
& & Case 1 & Case 2 & Case 3 \\
\hline
\multirow{3}{*}{$|\mathrm{vox}_0| > 0$}
& M-DGDP
& \bf 0.39 & \bf 0.30 & \bf 0.34 \\
& FTR
& 0.46 & 0.41 & 0.43\\
& Lasso
& 0.46 & 0.42 & 0.44 \\
\hline
\multirow{3}{*}{$|\mathrm{vox}_0| = 0$}
& M-DGDP
& \bf 0.04 & \bf 0.14 & \bf 0.10  \\
& FTR
& 0.00 & 0.00 & 0.00 \\
& Lasso
& 0.01 & 0.03 & 0.02 \\
\hline
\multirow{3}{*}{Overall}
& M-DGDP
& \bf 0.13 & \bf 0.20 & \bf 0.17 \\
& FTR
& 0.15 & 0.22 & 0.18 \\
& Lasso
& 0.15 & 0.23 & 0.18
\end{tabular}
\caption{Comparison of voxel estimation as measured by root mean squared error (RMSE) for the coefficients in case 1,2, 3 corresponding to 3D tensor images. Results from both M-DGDP and FTR \citep{zhou2013tensor} use $R = 5$.} 
\label{tab:3d_data}
\end{table}
Overall, M-DGDP prior performs $10-15$\% better for cases considered in this section. In less sparse cases, it is also evident that M-DGDP tends to  outperform $L_1$-optimized methods by a greater margin. Importantly, every parameter in M-DGDP is auto-tuned, while $L_1$ penalty results in vastly different performance with varying choices of the tuning parameter.
\begin{table}[!h]
\centering
\begin{tabular}{l  | ccc }
 & Case 1 & Case 2 & Case 3 \\
\hline
Coverage & 0.98 & 0.96 & 0.99\\
\hline
Length & 0.54 & 0.87 & 2.16
\end{tabular}
\caption{Coverage and length for 95\% credible intervals for M-DGDP.} 
\label{tab:coverage_length}
\end{table}

\begin{table}[!h]
\centering
\begin{tabular}{l l | ccc }
& & Case 1 & Case 2 & Case 3 \\
\hline
\multirow{2}{*}{$\gamma_1:$ truth $=$ 0.5}
& M-DGDP
& \bf 0.57 & \bf 0.54 & \bf 0.33 \\
& FTR
& 0.46 & 0.85 & 0.95\\
\hline
\multirow{2}{*}{$\gamma_2:$ truth $=$ 2}
& M-DGDP
& \bf 2.00 & \bf 2.04 & \bf  1.86 \\
& FTR
& 1.87 & 0.22 & 3.30 \\
\end{tabular}
\caption{Point estimates of the coefficients for age and sex for M-DGDP and \citep{zhou2013tensor} along with the true values.} 
\label{tab:vec_coef_data}
\end{table}

While M-DGDP consistently shows coverage over 95\% with reasonably short credible intervals (see Table \ref{tab:coverage_length}),
$L_1$-optimization based methods generally suffer in this regard. For  completeness, we provide point estimates and credible intervals for coefficients corresponding to age and sex in Table~\ref{tab:vec_coef_data}. The data analysis reveals superior performance of M-DGDP with proper characterization of uncertainties.

\section*{Appendix}\label{sec:appendix_tensor}

\subsection*{MCMC algorithm}
\label{sec:tensor_mcmc}

The following derivations concern the M-DGDP prior \eqref{eq:M1} and the sampling algorithm outlined in Section \ref{sec:dl_gdp_mcmc}.

\paragraph{For step (1b)}
Recall from Section \ref{sec:mdgdp_prior} that $\tau \sim \mathrm{Ga}(a_\tau, b_\tau)$ and $\Phi \sim \mathrm{Dirichlet}(\alpha_1, \dots, \alpha_R)$ and denote $p_0 = \sum_{j=1}^D p_j$. Then,
\begin{align*}
& \pi(\Phi | \bB, \bomega) \,\propto\, \pi(\Phi) \int_0^\infty \pi(\bB | \bomega, \Phi, \tau) \pi(\tau) d\tau \\
& \,\propto\, \Big[\prod_{r = 1}^R \phi_r^{\alpha_r - 1}\Big] \int_0^\infty \prod_{r=1}^R \Big[(\tau \phi_r)^{-p_0 / 2} \exp\Big(-\frac{1}{\tau \phi_r} \sum_{j = 1}^{d} || \bbeta_{jr} ||^2 / (2\omega_{jr}) \Big) \Big] \tau^{a_\tau - 1} \exp(- b_\tau \tau) d\tau \\
& \,\propto\, \Big[\prod_{r = 1}^R \phi^{\alpha_r - \tfrac{p_0}{2} - 1}\Big] \int_0^\infty \tau^{a_\tau - R \tfrac{p_0}{2} - 1} \prod_{r=1}^R \exp\Big(-\frac{C_r}{\tau \phi_r} - b_\tau(\tau \phi_r)\Big) \: d\tau
\end{align*}
with $C_r = \sum_{j=1}^d ||\bbeta_{jr}||^2 / (2 \omega_{jr})$. When $a_\tau = \sum_{r=1}^R \alpha_r$, this contains the kernel of a generalized inverse Gaussian (gIG) distribution for $(\tau\phi_r)$.   Recall: $X \sim f_X(x) = \mathrm{giG}(p, a, b) \,\propto\, x^{p-1} \exp(-(a x + b / x) / 2)$.
Following Lemma \ref{lemma:normmeasure}, for independent random variable $T_r \sim f_r$ on $(0, \infty)$, the  joint density of $\{\phi_r = T_r / \sum_{\tilde r} T_{\tilde r} : r = 1, \dots, R\}$ has support on $\mathcal{S}^{R-1}$. In particular,
\[
f(\phi_1, \dots, \phi_{R-1}) = \int_0^\infty t^{R-1} \prod_{r = 1}^R f_r(\phi_r t) ~dt, \quad \phi_R = 1 - \sum_{r < R} \phi_r.
\]
Substituting $f_r(x) \,\propto\, x^{-\delta_r} \exp(-C_r / x) \exp(-b_\tau x)$ in the above expression yields
\begin{align*}
f(\phi_1, \dots, \phi_{R-1}) &\,\propto\, \int_0^\infty \tau^{R-1} \prod_{r = 1}^R (\phi_r \tau)^{-\delta_r} \exp\Big(-\frac{C_r}{(\phi_r \tau)} - b_\tau (\phi_r \tau) \Big) d\tau \\
&= \Big[\prod_{r = 1}^R \phi^{-\delta_r}\Big] \int_0^\infty \tau^{R-\sum_r \delta_r -1} \prod_{r=1}^R\exp\Big(-\frac{C_r}{(\phi_r \tau)} - b_\tau (\phi_r \tau) \Big) d\tau.
\end{align*}
Matching exponents between this expression and the preceding one implies (1) $a_\tau - R(p_0 / 2) - 1 = R - \sum_r \delta_r - 1$, and (2) $\delta_r = 1 + p_0/2 - \alpha_r$. Then,
\begin{align*}
a_\tau = R(1 + p_0/2) - \sum_r \delta_r  = R(1 + p_0/2) - (R + Rp_0 / 2 - \sum_r \alpha_r) = \sum_r \alpha_r
\end{align*}
as previously noted. Hence, draws from $[\Phi | \alpha, \bB, \bW]$ are obtained by sampling $T_r \sim f_r = \mathrm{giG}(\alpha_r - p_0/2, 2b_\tau, 2C_r)$ independently for $r = 1, \dots, R$, and renormalizing.


\subsection*{Proof of lemma~\ref{lem:dl_gdp_var}} \label{sec:prior_moment_pfs}
\begin{proof}
Using priors defined in \eqref{eq:M1}, one has $C_\lambda = \Exp_\lambda (1 / \lambda^2) = \frac{b_\lambda^{2}}{(a_\lambda -1)(a_\lambda - 2)}$ for any $a_\lambda > 2$. In addition, the following inequalities are useful to bound the latter quantity:
\begin{itemize}
\item If $\alpha_1 = c / R, \, c \in \mathbb{N}_{+}$, $\Gamma(\alpha_0 + D) / \Gamma(\alpha_0) = \alpha_0 (\alpha_0 + 1) \cdots (\alpha_0 + D - 1).$ Using the fact that $\log(x + 1) \le x, \, x \ge 0$, one has $\log(\alpha_0) + \cdots + \log(\alpha_0 + D - 1) \le \alpha_0 D -1 + \sum_{k = {1 \lor D-2}}^{D-2} k$. Then $\alpha_0^D \le \Gamma(\alpha_0 + D) / \Gamma(\alpha_0) \le A_\tau \exp(\alpha_0 D)$ where $A_\tau = \exp( -1 + \sum_{k = {1 \lor D-2}}^{D-2} k) = \exp\big((D^2 - 3D) / 2\big)$, $D \ge 2$.
\item Trivially, $||\Phi||_D^D \le 1$; in addition, by H{\"o}lder's inequality, for any $x \in \Re^k$ and $0 < r < p$, one has $||x||_p \ge k^{-\left(\frac{1}{r} - \frac{1}{p}\right)} || x||_r$. In our setting, $D \ge 2$. Taking $r = 1$ in the latter yields $||\Phi||_D^D \ge R^{-(D-1)}.$
\end{itemize}
Recall $\alpha_0 = \sum_{r=1}^R \alpha_r = \alpha_1 R$. This leads to the lower and upper bounds for the prior voxel-level variance:
\begin{align*}
& \var(B_{i_1, \dots, i_D}) \ge (2 C_\lambda)^D \, (\alpha_1 R)^D R^{-(D-1)} / b_\tau^D\ = (2C_\lambda)^D \, \alpha_1^D R / b_\tau^D\\
& \var(B_{i_1, \dots, i_D}) \le A_\tau (2 C_\lambda)^D \, \exp(\alpha_1 \, R D) / b_\tau^D.
\end{align*}
\end{proof}

\subsection*{Consistency proofs}

The proof of Theorem \ref{theorem:main} relies in part on the existence of exponentially consistent tests.
\paragraph{\bf Definition}
An exponentially consistent sequence of test functions $\Phi_n=I(\by_n\in\mathcal{C}_n)$ for testing $H_0:\bB_n=\bB_n^0$ vs. $H_1:\bB_n\neq \bB_n^0$ satisfies
\begin{equation*}
\Exp_{\bB_n^0}(\Phi_n)\leq c_1\exp(-b_1n), \qquad \sup\limits_{\bB_n\in\mathcal{B}_n^c}\Exp_{\bB_n}(1-\Phi_n)\leq c_2\exp(-b_2n)
\end{equation*}
for some $c_1, c_2, b_1, b_2>0$.

\begin{lemma}\label{lemma:numconc}
Suppose $R\sum_{j=1}^D p_{j,n}=o(n)$, then there exist an exponentially consistent sequence of tests $\Phi_n$ for testing $H_0:\bB_n=\bB_n^0$ vs. $H_1:\bB_n\neq \bB_n^0$.
\end{lemma}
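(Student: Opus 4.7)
My plan is the classical likelihood-ratio-plus-sieve construction, adapted to the Gaussian tensor regression. For any candidate $\bB_n^1\in\mathcal{B}_n^c$, let $\Delta\in\Re^n$ denote the vector with $i$th entry $\langle\bX_i,\bB_n^1-\bB_n^0\rangle$, so that $\|\Delta\|_2^2\ge 2n\epsilon$ by definition of $\mathcal{B}_n^c$, and consider the Neyman--Pearson point test
\[
\phi_n^{\bB_n^1}=I\bigl(2\langle\by_n-\bX\bB_n^0,\Delta\rangle>\|\Delta\|_2^2+n\epsilon\bigr).
\]
Under $\bB_n^0$ the test statistic is $N(0,\|\Delta\|_2^2)$-distributed, and a Gaussian tail bound combined with the elementary inequality $(\|\Delta\|_2^2+n\epsilon)^2\ge 4n\epsilon\|\Delta\|_2^2$ yields Type~I error at most $\exp(-n\epsilon/2)$. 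Under $\bB_n^1$ the residual is $\by_n-\bX\bB_n^0=\Delta+\bepsilon$, and the analogous computation gives Type~II error at most $\exp(-\|\Delta\|_2^2/32)\le\exp(-n\epsilon/16)$.

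Since $\mathcal{B}_n^c$ is uncountable, these point tests have to be merged through a sieve-plus-cover argument. I would introduce $\mathcal{F}_n=\{\bB_n:\|\bF_n\|_\infty\le L_n\}$ with $L_n$ growing polynomially in $n$, on which the multilinear parafac map $\bF_n\mapsto\bB_n$ is Lipschitz with constant depending only on $L_n$ and $D$, and the design operator $\bB_n\mapsto(\langle\bX_i,\bB_n\rangle)_i$ has norm governed by $M_n$ of Theorem~\ref{theorem:main}. A standard volumetric estimate then bounds the $\delta$-covering number of $\mathcal{F}_n\cap\mathcal{B}_n^c$ in the empirical pseudometric $d_n(\bB,\bB')^2=n^{-1}\sum_i\langle\bX_i,\bB-\bB'\rangle^2$ by an expression of order $R\sum_{j=1}^D p_{j,n}\log(L_nM_n/\delta)$, which is $o(n)$ precisely by the assumption $R\sum_j p_{j,n}=o(n)$.

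Let $\mathcal{N}_n$ be a minimal such net and put $\Phi_n=\max_{\bB_n^1\in\mathcal{N}_n}\phi_n^{\bB_n^1}$. A union bound gives Type~I error at most $|\mathcal{N}_n|\exp(-n\epsilon/2)\le\exp(-b_1 n)$ for some $b_1>0$. For Type~II uniformity over $\mathcal{F}_n\cap\mathcal{B}_n^c$, attach to each $\bB_n$ its nearest net point $\bB_n^\ast$ and verify that $\Exp_{\bB_n}[1-\phi_n^{\bB_n^\ast}]$ differs from $\Exp_{\bB_n^\ast}[1-\phi_n^{\bB_n^\ast}]$ by at most a factor $\exp(O(\sqrt{n}\,\delta))$, which is absorbed by shrinking $\delta$; points outside $\mathcal{F}_n$ carry prior mass at most $\exp(-cn)$ under the M-DGDP prior and are handled in the prior-mass bookkeeping of Theorem~\ref{theorem:main}.

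The genuine obstacle is a three-way calibration: one must choose $L_n$, $\delta$, and the final Type~II exponent $b_2$ so that the entropy bound remains $o(n)$, the Lipschitz perturbation from the net to the true $\bB_n$ stays negligible, and the prior mass of $\mathcal{F}_n^c$ is dominated by the KL-ball lower bound $\exp(-dn)$ featured in \eqref{priorprop}. Once these constants are synchronized, the rest of the argument reduces to routine Gaussian tail arithmetic.
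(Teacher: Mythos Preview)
Your construction is sound but follows a different route from the paper's. The paper avoids sieves and nets entirely: it uses the single likelihood-ratio test $\Phi_n=I\bigl(\bigl|\tfrac{2}{n}(l(\bB_n^0)-l(\hat\bB_n))\bigr|>\epsilon/4\bigr)$ comparing the null to the MLE $\hat\bB_n$. Under $H_0$ the statistic is taken to be $\chi^2$ with $R\sum_j p_{j,n}$ degrees of freedom, and the Laurent--Massart tail bound $P(\chi^2_p>x)\le e^{-x/4}$ for $x\ge 8p$ gives Type~I error $\le e^{-n\epsilon/16}$ as soon as $R\sum_j p_{j,n}=o(n)$. For Type~II, the paper writes $\tfrac{2}{n}(l(\bB_n)-l(\bB_n^0))$ as $\tfrac{1}{n}\sum_i\KL_i$ plus a centred Gaussian fluctuation, splits on the event that the Gaussian is small relative to the KL term, and reduces again to a $\chi^2$ tail uniformly over $\mathcal B_n^c$. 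The parameter dimension enters only through the degrees of freedom, not through entropy.

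Your Neyman--Pearson-plus-cover argument is the classical Le Cam / Ghosal--Ghosh--van der Vaart route and works, but note that it does not quite deliver the lemma as stated: you obtain uniform Type~II control only on $\mathcal F_n\cap\mathcal B_n^c$, not on all of $\mathcal B_n^c$, and you defer $\mathcal F_n^c$ to prior-mass bookkeeping in Theorem~\ref{theorem:main}. That is fine for the eventual goal of posterior consistency, but it changes both the statement of the lemma and the structure of the proof of Theorem~\ref{theorem:main}, which as written uses the test on all of $\mathcal B_n^c$ and never introduces a sieve. The trade-off is that the paper's single-test argument needs no covering and no prior-mass splitting, whereas your approach does not lean on any exact-$\chi^2$ claim for the MLE under the nonlinear parafac parametrization. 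The three-way calibration you flag as the ``genuine obstacle'' is, in this literature, routine once the entropy bound $R\sum_j p_{j,n}\log(L_nM_n/\delta)=o(n)$ is in hand.
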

\begin{proof}
We begin by stating that $-2(l(\bB_n^0)-l(\hat{\bB}_n))\sim\chi_{R\sum_{j=1}^D p_{j,n}}^2$ under $\bB_n^0$.  We choose the critical region of the test 
$\Phi_n$ as
 $\mathcal{C}_n=\left\{\bB_n: \left|\frac{2}{n}(l(\bB_n^0)-l(\hat{\bB}_n))\right|>\epsilon/4\right\}$.  Note that
\begin{align*}
&\Exp_{\bB_n^0}(\Phi_n)=P_{\bB_n^0}\Big(-\frac{2}{n}(l(\bB_n^0)-l(\hat{\bB}_n))>\epsilon/4\Big) =P_{\bB_n^0}\Big(\chi_{R\sum_{j=1}^D p_{j,n}}^2>n\epsilon/4\Big)\\
&\leq \exp\Big(-\frac{n\epsilon}{16}\Big),\:\mbox{for large n},
\end{align*}
where the last line follows by simplifying Laurent and Massart (2000) and using $P(\chi_p^2>x)<\exp(-x/4)$ if $x\geq 8p$.

Now we will use the fact that
\begin{align*}
\frac{2}{n}(l(\bB_n^0)-l(\hat{\bB}_n))&=\frac{1}{n}\sum_{j=1}^D\left(y_i-\langle \bX_i,\bB_n\rangle\right)^2-\frac{1}{n}\sum_{j=1}^D\left(y_i-\langle \bX_i,\bB_n^0\rangle\right)^2\\
&=\frac{1}{n}\sum_{i=1}^n \left(\langle \bX_i,\bB_n-\bB_n^0\rangle\right)^2+\frac{2}{n}\sum_{i=1}^n(y_i-\langle \bX_i,\bB_n\rangle)\langle\bX_i, \bB_n^0-\bB_n\rangle\\
&=\frac{1}{n}\sum_{i=1}^n KL_i +\frac{2}{n}\sum_{i=1}^n(y_i-\langle \bX_i,\bB_n\rangle)\langle\bX_i, \bB_n^0-\bB_n\rangle.
\end{align*}
Note that, under $\bB_n$,
\begin{align*}
\frac{2}{n}\sum_{i=1}^n(y_i-\langle \bX_i,\bB_n\rangle)\langle\bX_i, \bB_n^0-\bB_n\rangle\sim N(0,\frac{4}{n^2}\sum_{i=1}^{n}KL_i).
\end{align*}
Thus,
\begin{align*}
& \sup\limits_{\bB_n\in \mathcal{B}_n^c}\Exp_{\bB_n}(1-\Phi_n)=\sup\limits_{\bB_n\in \mathcal{B}_n^c}
P_{\bB_n}\left(|\frac{2}{n}(l(\bB_n^0)-l(\hat{\bB}_n))|\leq\epsilon/4\right)\\
&=\sup\limits_{\bB_n\in \mathcal{B}_n^c}
P_{\bB_n}\left(\left|\left|\frac{2}{n}(l(\bB_n)-l(\bB_n^0))\right|-\left|\frac{2}{n}(l(\bB_n)-l(\hat{\bB}_n))\right|\right|\leq\epsilon/4\right)\\
&\leq \sup\limits_{\bB_n\in \mathcal{B}_n^c}
P_{\bB_n}\left(\left|\frac{2}{n}(l(\bB_n)-l(\bB_n^0))\right|-\epsilon/4\leq\left|\frac{2}{n}(l(\bB_n)-l(\hat{\bB}_n))\right|\right)\\
&\leq \sup\limits_{\bB_n\in \mathcal{B}_n^c}
P_{\bB_n}\left(\left|\frac{1}{n}\sum_{i=1}^n KL_i+\sqrt{\frac{\sum_{i=1}^{n}KL_i}{n}}\frac{Z}{\sqrt{n}}\right|-\epsilon/4\leq\left|\frac{2}{n}(l(\bB_n)-l(\hat{\bB}_n))\right|\right)\\
&\leq \sup\limits_{\bB_n\in \mathcal{B}_n^c}
P_{\bB_n}\left(\left|\frac{1}{n}\sum_{i=1}^n KL_i+\sqrt{\frac{\sum_{i=1}^{n}KL_i}{n}}\frac{Z}{\sqrt{n}}\right|-\epsilon/4\leq\left|\frac{2}{n}(l(\bB_n)-l(\hat{\bB}_n))\right|\right),
\end{align*}
Where $Z\sim N(0,1)$. Let $\mathcal{T}_n=\left\{\left|\sqrt{\frac{\sum_{i=1}^{n}KL_i}{n}}\frac{Z}{\sqrt{n}}\right|\leq \frac{1}{2n}\sum_{i=1}^n KL_i\right\}$. Using this fact we have
{\small\begin{align*}
& \sup\limits_{\bB_n\in \mathcal{B}_n^c}\Exp_{\bB_n}(1-\Phi_n)\\
&\leq \sup\limits_{\bB_n\in \mathcal{B}_n^c}
P_{\bB_n}\left(\left\{\left|\frac{1}{n}\sum_{i=1}^n KL_i+\sqrt{\frac{\sum_{i=1}^{n}KL_i}{n}}\frac{Z}{\sqrt{n}}\right|-\epsilon/4\leq\left|\frac{2}{n}(l(\bB_n)-l(\hat{\bB}_n))\right|\right\}\cap\mathcal{T}_n\right)
+ \sup\limits_{\bB_n\in \mathcal{B}_n^c}
P_{\bB_n}\left(\mathcal{T}_n\right)\\
&\leq \sup\limits_{\bB_n\in \mathcal{B}_n^c}
P_{\bB_n}\left(\frac{1}{2n}\sum_{i=1}^n KL_i-\epsilon/4\leq\left|\frac{2}{n}(l(\bB_n)-l(\hat{\bB}_n))\right|\right)
+ \sup\limits_{\bB_n\in \mathcal{B}_n^c}
P_{\bB_n}\left(\left|\frac{Z}{\sqrt{n}}\right|\geq \frac{1}{2}\sqrt{\frac{1}{n}\sum_{i=1}^n KL_i}\right)\\
&\leq P_{\bB_n}\left(\frac{3\epsilon}{4}\leq\left|\frac{2}{n}(l(\bB_n)-l(\hat{\bB}_n))\right|\right)+
P_{\bB_n}\left(\left|Z\right|\geq \frac{1}{2}\sqrt{n\epsilon}\right)\\
&\leq P_{\bB_n}\left(\frac{3n\epsilon}{4}\leq\chi_{R\sum_{j=1}^{D}p_{j,n}}^2\right)+
P_{\bB_n}\left(\chi_1^2\geq \frac{n\epsilon}{4}\right)\leq \exp\left(-\frac{3n\epsilon}{16}\right)+\exp\left(-\frac{n\epsilon}{16}\right)
\leq 2\exp\left(-\frac{n\epsilon}{16}\right),
\end{align*}}
where the last line requires an application of Laurent and Massart (2000).
\end{proof}
\noindent
{\bf Theorem \ref{theorem:main}}
\begin{proof}
Under Lemma \ref{lemma:numconc} one has
\begin{align*}
\Pi_n(\mathcal{B}_n^c) =\frac{\int_{\mathcal{B}_n^c}f(\by_n|\bB_n)\pi_n(\bF_n)}{\int f(\by_n|\bB_n)\pi_n(\bF_n)}
=\frac{\int_{\mathcal{B}_n^c}\frac{f(\by_n|\bB_n)}{f(\by_n|\bB_n^0)}\pi_n(\bF_n)}{\int \frac{f(\by_n|\bB_n)}{f(\by_n|\bB_n^0)}\pi_n(\bF_n)}
= \frac{N}{D}\leq \Phi_n+(1-\Phi_n)\frac{N}{D}.
\end{align*}
Note that we have
\begin{align*}
P_{\bB_n^0}\left(\Phi_n>\exp(-b_1n/2)\right)\leq \Exp_{\bB_n^0}\left(\Phi_n\right)\exp(b_1n/2)\leq c_1\exp(-b_1n/2).
\end{align*}
Therefore $\sum_{n=1}^{\infty}P_{\bB_n^0}\left(\Phi_n>\exp(-b_1n/2)\right)<\infty$. Using Borel-Cantelli lemma\\ $P_{\bB_n^0}\left(\Phi_n>\exp(-b_1n/2) i.o.\right)=0$. It follows that
\begin{align}\label{eq:show1}
\Phi_n \rightarrow 0 \quad a.s.
\end{align}
In addition, we have
\begin{align*}
\Exp_{\bB_n^0}((1-\Phi_n)N) &=\int (1-\Phi_n)\int_{\mathcal{B}_n^c}\frac{f(\by_n|\bB_n)}{f(\by_n|\bB_n^0)}\pi_n(\bF_n)f(\by_n|\bB_n^0)\\
&=\int_{\mathcal{B}_n^c}\int (1-\Phi_n)f(\by_n|\bB_n)\pi_n(\bF_n)\\
&\leq \sup\limits_{\bB_n\in\mathcal{B}_n^c}\Exp_{\bB_n}(1-\Phi_n)\leq c_2\exp(-b_2n).
\end{align*}
Using a similar technique as above,  $P_{\bB_n^0}\left((1-\Phi_n)N\exp(nb_2/2)>\exp(-nb_2/4) i.o.\right)=0$ so
\begin{align}
\exp(bn)(1-\Phi_n)N \rightarrow 0 \quad a.s.\label{eq:show2}.
\end{align}
By Lemma~\ref{lemma:numconc} and (\ref{eq:show1})-(\ref{eq:show2}) it is enough to show that
 $M=\exp(\tilde{b}n)\int\frac{f(\by_n|\bB_n)}{f(\by_n|\bB_n^0)}\pi_n(\bF_n)\rightarrow\infty$ for some $\tilde{b}\leq b=\frac{\epsilon}{32}$. We choose $\tilde{b}=b$. Consider the set
$\mathcal{H}_n=\left\{\bB_n:\frac{1}{n}\log\left[\frac{f(\by_n|\bB_n)}{f(\by_n|\bB_n^0)}\right]<\eta\right\}$, for some $\eta$ which is chosen later.
\begin{align*}
M &\geq \exp(\tilde{b}n)\int\limits_{\mathcal{H}_n}\exp\left(-n\frac{1}{n}\frac{f(\by_n|\bB_n)}{f(\by_n|\bB_n^0)}\right)\pi_n(\bF_n)\\
&\geq \exp((\tilde{b}-\eta)n)\pi_n(\mathcal{H}_n).
\end{align*}
Note that
\begin{align*}
&\frac{1}{n}\log\left[\frac{f(\by_n|\bB_n)}{f(\by_n|\bB_n^0)}\right]\\
&=\frac{1}{n}\left[-\frac{1}{2}\sum_{i=1}^{n}(y_i-\langle \bX_i,\bB_n\rangle)^2+\frac{1}{2}\sum_{i=1}^{n}(y_i-\langle \bX_i,\bB_n^0\rangle)^2\right].
\end{align*}
Let $\by_n=(y_1,...,y_n)'$, $\bH_n=\left(\langle \bX_1,\bB_n\rangle,\ldots,\langle \bX_n,\bB_n\rangle\right)$
and $\bH_n^0=\left(\langle \bX_1,\bB_n^0\rangle,\ldots,\langle \bX_n,\bB_n^0\rangle\right)$. Then
{\small\begin{align*}
&\pi_n\Big(\bB_n:\frac{1}{n}\left[-||\by_n-\bH_n^0||^2+||\by_n-\bH_n||^2\right]<2\eta\Big)\\
&\geq\pi_n\Big(\bB_n:\frac{1}{n}\left|2||\by_n-\bH_n^0||\left(||\by_n-\bH_n||-||\by_n-\bH_n^0||\right)+\left(
||\by_n-\bH_n||-||\by_n-\bH_n^0||\right)^2\right|<2\eta\Big)\\
&\geq \pi_n\Big(\bB_n:\frac{1}{n}\left|2||\by_n-\bH_n^0|| ||\bH_n^0-\bH_n||+||\bH_n-\bH_n^0||^2\right|<2\eta\Big)\\
&\geq \pi_n\Big(\bB_n:\frac{1}{n}||\bH_n^0-\bH_n||<\frac{2\eta}{3\zeta_n},||\by_n-\bH_n^0||^2<\zeta_n^2\Big)\\
&\geq \pi_n\big(\mathcal{A}_{1n}\cap\mathcal{A}_{2n}\big)
\end{align*}}
where $\mathcal{A}_{1n}=\left\{\frac{1}{n}||\bH_n-\bH_n^0||<\frac{2\eta}{3\zeta_n}\right\}$,
$\mathcal{A}_{2n}=\left\{||\by_n-\bH_n^0||^2<\zeta_n^2\right\}$.

We will show that $P_{\bB_n^0}(\mathcal{A}_{2n})=1$ for all large $n$. Assume $\zeta_n=n^{(1+\rho_3)/2},\:\rho_3>0$ so that $\zeta_n^2>8n$ for all large $n$.
Then,
\begin{align*}
P_{\bB_n^0}(\mathcal{A}_{2n'})=P_{\bB_n^0}(\chi_n^2>\zeta_n^2)\leq \exp(-\zeta_n^2/2).
\end{align*}
Therefore, using Borel-Cantelli lemma $P_{\bB_n^0}(\mathcal{A}_{2n}'\: i.o.)=0$. Hence $P_{\bB_n^0}(\mathcal{A}_{2n})=1$ for all large $n$.
It is enough to bound $\pi_n(\mathcal{A}_{1n})$.  Let $M_n=\frac{1}{n}\sqrt{\sum_{i=1}^{n}||\bX_i||_2^2}$. Now use the fact that $\frac{1}{n}||\bH_n-\bH_n^0||=\frac{1}{n}\sqrt{\sum_{i=1}^n(\langle \bX_i,\bB_n-\bB_n^0\rangle)^2} \leq \left(\frac{1}{n}\sqrt{\sum_{i=1}^{n}||\bX_i||_2^2}\right)||\bB_n-\bB_n^0||_2$ to conclude
\begin{align}\label{eq:firstineq}
\left\{||\bB_n-\bB_n^0||_2<\frac{2\eta}{3M_n\zeta_n}\right\}\subseteq\mathcal{A}_{1n}.
\end{align}
By (\ref{priorprop}) one has $\pi_n(\mathcal{A}_{1n})\geq\pi_n\Big(||\bB_n-\bB_n^0||_2<\frac{2\eta}{3M_n\zeta_n}\Big)\geq \exp(-dn)$ and hence $M\geq \exp\big((\tilde{b}-\eta-d)n\big)\rightarrow\infty$ as $n\rightarrow\infty$ proving the result.
\end{proof}

\noindent
{\bf Theorem ~\ref{tensorDLGDP}}
\begin{proof}
Define $g:\mathbb{R}\rightarrow\mathbb{R}$ s.t.
\begin{align*}
g(\kappa)=R\kappa^D+\kappa^{D-1}\sum_{j=1}^{D}\sum_{r=1}^{R}||\bbeta_{j,n}^{0(r)}||_2+\cdots+\kappa\sum_{j=1}^{D}\sum_{r=1}^{R}\prod\limits_{l\neq j}||\bbeta_{l,n}^{0(r)}||_2.
\end{align*}
Let $\kappa_n>0$ be s.t. $g(\kappa_n)=\frac{2\eta}{3M_n\zeta_n}$. Note that by Decarte's rule of sign, the equation $g(\kappa)-\frac{2\eta}{3M_n\zeta_n}=0$ has a unique positive root. Further
\begin{align}
\frac{1}{\kappa_n}<1+\max\limits_{i=1,...,D}\left|\frac{3\sum_{j_1\neq\cdots\neq j_i}\sum_{r=1}^{R}\prod\limits_{l=1}^i||\bbeta_{j_l,n}^{0(r)}||_2}{2\eta/M_n\zeta_n}\right|\label{eq:root1}\\
\kappa_n<1+\max\left\{\frac{2\eta}{3M_n\zeta_nR},\max\limits_{i=1,...,D}\left|\frac{\sum_{j_1\neq\cdots\neq j_i}\sum_{r=1}^{R}\prod\limits_{l=1}^i||\bbeta_{j_l,n}^{0(r)}||_2}{R}\right|\right\}\label{eq:root2}
\end{align}
by Lemma \ref{lem:polynomroots}.

Using Lemma \ref{lem:tensordiff} it is easy to see that
\begin{align}\label{eq:secondineq}
\left\{||\bbeta_{j,n}^{(r)}-\bbeta_{j,n}^{0(r)}||_2\leq \kappa_n,\:j=1,...,D;\:r=1,...,R\right\}\subseteq\left\{||\bB_n-\bB_n^0||_2<\frac{2\eta}{3M_n\zeta_n}\right\}.
\end{align}
Using (\ref{eq:firstineq}), $\pi_n\left(||\bB_n-\bB_n^0||_2<\frac{2\eta}{3M_n\zeta_n}\right)\geq \pi_n\left(\left\{||\bbeta_{j,n}^{(r)}-\bbeta_{j,n}^{0(r)}||_2\leq \kappa_n,\:j=1,...,D;\:r=1,...,R\right\}\right)$. Note that
\begin{align*}
&\pi_n\left(\left\{||\bbeta_{j,n}^{(r)}-\bbeta_{j,n}^{0(r)}||_2\leq \kappa_n,\:j=1,...,D;\:r=1,...,R\right\}|\{w_{jr,l}\}_{l=1}^{p_{j,n}},\{\lambda_{jr}\}_{j,r=1}^{D,R-1},\{\phi_r\}_{r=1}^{R-1},\tau\right)\\
&\geq \left[\prod\limits_{j=1}^{D}\prod\limits_{r=1}^{R}\pi_n\left(||\bbeta_{j,n}^{(r)}-\bbeta_{j,n}^{0(r)}||_2\leq \kappa_n|\{w_{jr,l}\}_{l=1}^{p_{j,n}},\{\lambda_{jr}\}_{j,r=1}^{D,R-1},\{\phi_r\}_{r=1}^{R-1},\tau\right)\right].
\end{align*}
Therefore, it is enough to bound $\pi_n(||\bbeta_{j,n}^{(r)}-\bbeta_{j,n}^{0(r)}||\leq \kappa_n, j=1,...,D; r=1,...,R)$. For $j=1,...,D$, $r=1,...,R$,
\begin{align*}
&\pi_n(||\bbeta_{j,n}^{(r)}-\bbeta_{j,n}^{0(r)}||\leq \kappa_n|\{w_{jr,l}\}_{l=1}^{p_{j,n}},\lambda_{jr},\{\phi_r\}_{r=1}^{R-1},\tau)\\
&\quad\geq\prod_{l=1}^{p_{j,n}} \pi_n\left(|\beta_{j,n,l}^{(r)}-\beta_{j,n,l}^{0(r)}|\leq \frac{\kappa_n}{\sqrt{p_{j,n}}}|\{w_{jr,l}\}_{l=1}^{p_{j,n}},\lambda_{jr},\{\phi_r\}_{r=1}^{R-1},\tau\right)\nonumber\\
&\geq \prod_{l=1}^{p_{j,n}}\left\{\left(\frac{2\kappa_n}{\sqrt{2p_{j,n}\pi w_{jr,l}\phi_r\tau}}\right)\exp\left(-\frac{|\beta_{j,n,l}^{0(r)}|^2+\kappa_n^2/p_{j,n}}{w_{jr,l}\phi_r\tau}\right)\right\},
\end{align*}
where the last step follows from the fact that $\int_{a}^{b}e^{-x^2/2}dx\geq e^{-(a^2+b^2)/2}(b-a)$.
Thus,
\begin{align}\label{eq:GDPnew}
&\pi_n(||\bbeta_{j,n}^{(r)}-\bbeta_{j,n}^{0(r)}||\leq \kappa_n|\lambda_{jr},\{\phi_r\}_{r=1}^{R-1},\tau)\nonumber\\
&=\Exp\left[\pi_n(||\bbeta_{j,n}^{(r)}-\bbeta_{j,n}^{0(r)}||\leq \kappa_n|\{w_{jr,l}\}_{l=1}^{p_{j,n}},\lambda_{jr},\{\phi_r\}_{r=1}^{R-1},\tau)\right]\nonumber\\
&\geq \left(\frac{2\kappa_n}{\sqrt{2p_{j,n}\pi\phi_r\tau}}\right)^{p_{j,n}}
\prod_{l=1}^{p_{j,n}}E\left\{\frac{1}{\sqrt{w_{jr,l}}}\exp\left(-\frac{|\beta_{j,n,l}^{0(r)}|^2+\kappa_n^2/p_{j,n}}{w_{jr,l}\phi_r\tau}\right)\right\}\nonumber\\
&\geq \left(\frac{2\kappa_n\lambda_{jr}^2}{2\sqrt{2p_{j,n}\pi\phi_r\tau}}\right)^{p_{j,n}}
\prod_{l=1}^{p_{j,n}}\int_{w_{jr,l}}\left\{\frac{1}{\sqrt{w_{jr,l}}}\exp\left(-\frac{|\beta_{j,n,l}^{0(r)}|^2+\kappa_n^2/p_{j,n}}{w_{jr,l}\phi_r\tau}-\frac{\lambda_{jr}^2w_{jr,l}}{2}\right)dw_{jr,l}\right\}.
\end{align}
Use the change of variable $\frac{1}{w_{jr,l}}=z_{jr,l}$ and the normalizing constant from the inverse Gaussian density to deduce
\begin{align*}
&\int_{w_{jr,l}}\left\{\frac{1}{\sqrt{w_{jr,l}}}\exp\left(-\frac{|\beta_{j,n,l}^{0(r)}|^2+\kappa_n^2/p_{j,n}}{w_{jr,l}\phi_r\tau}-\frac{\lambda_{jr}^2w_{jr,l}}{2}\right)dw_{jr,l}\right\}\\
&\qquad=\int_{z_{jr,l}}\left\{\frac{1}{\sqrt{z_{jr,l}^3}}\exp\left(-\frac{(|\beta_{j,n,l}^{0(r)}|^2+\kappa_n^2/p_{j,n})}{\phi_r\tau}z_{jr,l}-\frac{\lambda_{jr}^2}{2z_{jr,l}}\right)dz_{jr,l}\right\}\\
&\qquad=\sqrt{\left(\frac{2\pi}{\lambda_{jr}^2}\right)}\exp\left(-\lambda_{jr}\frac{\sqrt{2\left(|\beta_{j,n,l}^{0(r)}|^2+\kappa_n^2/p_{j,n}\right)}}{\sqrt{\phi_r\tau}}\right).
\end{align*}
(\ref{eq:GDPnew}) can be written as
\begin{align*}
&\pi_n(||\bbeta_{j,n}^{(r)}-\bbeta_{j,n}^{0(r)}||\leq \kappa_n|\lambda_{jr},\{\phi_r\}_{r=1}^{R-1},\tau)\\
&\geq \left(\frac{2\kappa_n\lambda_{jr}^2}{2\sqrt{2p_{j,n}\pi\phi_r\tau}}\right)^{p_{j,n}}
\prod_{l=1}^{p_{j,n}}\left[\sqrt{\left(\frac{2\pi}{\lambda_{jr}^2}\right)}\exp\left(-\lambda_{jr}\frac{\sqrt{2\left(|\beta_{j,n,l}^{0(r)}|^2+\kappa_n^2/p_{j,n}\right)}}{\sqrt{\phi_r\tau}}\right)\right]\\
&=\left(\frac{2\kappa_n\lambda_{jr}}{2\sqrt{p_{j,n}\phi_r\tau}}\right)^{p_{j,n}}\exp\left(-\lambda_{jr}\frac{\sum_{l=1}^{p_{j,n}}\sqrt{2\left(|\beta_{j,n,l}^{0(r)}|^2+\kappa_n^2/p_{j,n}\right)}}{\sqrt{\phi_r\tau}}\right).
\end{align*}
Therefore,
\begin{align*}
&\pi_n(||\bbeta_{j,n}^{(r)}-\bbeta_{j,n}^{0(r)}||\leq \kappa_n|\{\phi_r\}_{r=1}^{R-1},\tau)\\
&\geq \left(\frac{2\kappa_n}{2\sqrt{p_{j,n}\phi_r\tau}}\right)^{p_{j,n}}\frac{b_{\lambda,r}^{a_{\lambda,r}}}{\Gamma(a_{\lambda,r})}\int_{\lambda_{jr}}\lambda_{jr}^{p_{j,n}+a_{\lambda,r}-1}
\exp\left(-\lambda_{jr}\left[\frac{\sum_{l=1}^{p_{j,n}}\sqrt{2\left(|\beta_{j,n,l}^{0(r)}|^2+\kappa_n^2/p_{j,n}\right)}}{\sqrt{\phi_r\tau}}+b_{\lambda,r}\right]\right)d\lambda_{jr}\\
&=\left(\frac{2\kappa_n}{2\sqrt{p_{j,n}\phi_r\tau}}\right)^{p_{j,n}}\frac{b_{\lambda,r}^{a_{\lambda,r}}}{\Gamma(a_{\lambda,r})}
\frac{\Gamma(p_{j,n}+a_{\lambda,r})}{\left[\frac{\sum_{l=1}^{p_{j,n}}\sqrt{2\left(|\beta_{j,n,l}^{0(r)}|^2+\kappa_n^2/p_{j,n}\right)}}{\sqrt{\phi_r\tau}}+b_{\lambda,r}\right]^{p_{j,n}+a_{\lambda,r}}}\\
&=\left(\frac{2\kappa_n}{2b_{\lambda,r}\sqrt{p_{j,n}\phi_r\tau}}\right)^{p_{j,n}}\frac{1}{\Gamma(a_{\lambda,r})}
\frac{\Gamma(p_{j,n}+a_{\lambda,r})}{\left[\frac{\sum_{l=1}^{p_{j,n}}\sqrt{2\left(|\beta_{j,n,l}^{0(r)}|^2+\kappa_n^2/p_{j,n}\right)}}{b_{\lambda,r}\sqrt{\phi_r\tau}}+1\right]^{p_{j,n}+a_{\lambda,r}}}.
\end{align*}
The final expression as in the above yields
\begin{align*}
&\pi_n(||\bbeta_{j,n}^{(r)}-\bbeta_{j,n}^{0(r)}||\leq \kappa_n, j=1,...,D,r=1,...,R|\{\phi_r\}_{r=1}^{R-1},\tau)\\
&\geq \Exp\left\{\prod_{j=1}^D\prod_{r=1}^R\left[\left(\frac{2\kappa_n}{2b_{\lambda,r}\sqrt{p_{j,n}\phi_r\tau}}\right)^{p_{j,n}}\frac{1}{\Gamma(a_{\lambda,r})}\lambda_{j,r}^{p_{j,n}+a_{\lambda,r}-1}
\frac{\Gamma(p_{j,n}+a_{\lambda,r})}{\left[\frac{\sum_{l=1}^{p_{j,n}}\sqrt{2\left(|\beta_{j,n,l}^{0(r)}|^2+\kappa_n^2/p_{j,n}\right)}}{b_{\lambda,r}\sqrt{\phi_r\tau}}+1\right]^{p_{j,n}+a_{\lambda,r}}}\right]\right\}.
\end{align*}
We will now use the fact that for $\phi_r\leq 1$,
\begin{align*}
\frac{1}{\left[\frac{\sum_{l=1}^{p_{j,n}}\sqrt{2\left(|\beta_{j,n,l}^{0(r)}|^2+\kappa_n^2/p_{j,n}\right)}}{b_{\lambda,r}\sqrt{\phi_r\tau}}+1\right]^{p_{j,n}+a_{\lambda,r}}}
\geq \frac{1}{\left[\frac{\sum_{l=1}^{p_{j,n}}\sqrt{2\left(|\beta_{j,n,l}^{0(r)}|^2+\kappa_n^2/p_{j,n}\right)}}{b_{\lambda,r}\sqrt{\phi_r\tau}}+\frac{1}{\sqrt{\tau\phi_r}}\right]^{p_{j,n}+a_{\lambda,r}}}I_{\tau\in[0,1]}.
\end{align*}
This inequality is critical to provide a lower bound on $\pi_n(||\bbeta_{j,n}^{(r)}-\bbeta_{j,n}^{0(r)}||\leq \kappa_n, j=1,...,D,r=1,...,R)$ as following
\begin{align*}
&\pi_n(||\bbeta_{j,n}^{(r)}-\bbeta_{j,n}^{0(r)}||\leq \kappa_n, j=1,...,D,r=1,...,R)\\
&\geq \frac{\lambda_2^{\lambda_1}\Gamma(Ra)}{\Gamma(\lambda_1)\Gamma(a)^R}\prod_{j=1}^D\prod_{r=1}^R\left[\left(\frac{\kappa_n}{\sqrt{p_{j,n}}b_{\lambda,r}}\right)^{p_{j,n}}
\frac{\Gamma(p_{j,n}+a_{\lambda,r})}{\Gamma(a_{\lambda,r})}\right]
\int_{\tau}\tau^{\lambda_1-R\sum_{j=1}^D \frac{p_{j,n}}{2}-1}\exp(-\lambda_2\tau)\\
&\int_{\bphi\in\mathcal{S}^{R-1}}\frac{\prod_{r=1}^{R}\phi_{r}^{a-1}}{\prod_{r=1}^{R}\phi_{r}^{\sum_{j=1}^D\frac{p_{j,n}}{2}}}
 \prod_{j=1}^D\prod_{r=1}^R\frac{1}{\left[\frac{\sum_{l=1}^{p_{j,n}}
\sqrt{2\left(|\beta_{j,n,l}^{0(r)}|^2+\kappa_n^2/p_{j,n}\right)}}{b_{\lambda,r}\sqrt{\phi_r\tau}}+1\right]^{p_{j,n}+a_{\lambda,r}}}d\bphi d\tau\\
&\geq \frac{\lambda_2^{\lambda_1}\Gamma(Ra)}{\Gamma(\lambda_1)\Gamma(a)^R}\prod_{j=1}^D\prod_{r=1}^R\left[\left(\frac{\kappa_n}{\sqrt{p_{j,n}}b_{\lambda,r}}\right)^{p_{j,n}}
\frac{\Gamma(p_{j,n}+a_{\lambda,r})}{\Gamma(a_{\lambda,r})}\right]\prod_{j=1}^D\prod_{r=1}^R\frac{1}{\left[\frac{\sum_{l=1}^{p_{j,n}}
\sqrt{2\left(|\beta_{j,n,l}^{0(r)}|^2+\kappa_n^2/p_{j,n}\right)}}{b_{\lambda,r}}+1\right]^{p_{j,n}+a_{\lambda,r}}}\\
&\left(\int_{\tau=0}^1\tau^{\lambda_1+\sum_{r=1}^R a_{\lambda,r}\frac{D}{2}-1}\exp(-\tau\lambda_2)d\tau\right)\int_{\bphi\in\mathcal{S}^{R-1}}\prod_{r=1}^{R}\phi_{r}^{a+a_{\lambda,r}\frac{D}{2}-1}d\bphi\\
&= \frac{\lambda_2^{\lambda_1}\Gamma(Ra)}{\Gamma(\lambda_1)\Gamma(a)^R}\prod_{j=1}^D\prod_{r=1}^R\left[\left(\frac{\kappa_n}{\sqrt{p_{j,n}}b_{\lambda,r}}\right)^{p_{j,n}}
\frac{\Gamma(p_{j,n}+a_{\lambda,r})}{\Gamma(a_{\lambda,r})}\right]\prod_{j=1}^D\prod_{r=1}^R\frac{1}{\left[\frac{\sum_{l=1}^{p_{j,n}}
\sqrt{2\left(|\beta_{j,n,l}^{0(r)}|^2+\kappa_n^2/p_{j,n}\right)}}{b_{\lambda,r}}+1\right]^{p_{j,n}+a_{\lambda,r}}}\\
&\times \frac{\exp(-\lambda_2)}{(\lambda_1+\sum_{r=1}^R a_{\lambda,r}\frac{D}{2})}\frac{\prod_{r=1}^R\left[\Gamma(a+a_{\lambda,r}\frac{D}{2})\right]}{\Gamma(Ra+\frac{D}{2}\sum_{r=1}^R a_{\lambda,r})}.
\end{align*}
Denote $C_6=\frac{\lambda_2^{\lambda_1}\Gamma(Ra)}{\Gamma(\lambda_1)\left[\Gamma(a)\right]^R}\frac{\exp(-\lambda_2)}{\left(\lambda_1+\sum_{r=1}^R a_{\lambda,r}\frac{D}{2}\right)}\frac{\prod_{r=1}^R\left[\Gamma(a+a_{\lambda,r}\frac{D}{2})\right]}{\Gamma(Ra+\sum_{r=1}^{R}a_{\lambda,r}\frac{D}{2})}$
. Then the above expression gives us
\begin{align}\label{eq:loglikGDP}
&-\log\left(||\bB_n-\bB_n^0||_2<\frac{2\eta}{3M_n\zeta_n}\right)\nonumber\\
&\leq -\log(C_6)
+\sum_{j=1}^{D}\sum_{r=1}^{R}p_{j,n}\left[-\log(\kappa_n)+\frac{1}{2}\log(p_{j,n})+\log(b_{\lambda,r})+\log(\Gamma(a_{\lambda,r})\right]\nonumber\\
&\qquad-\sum_{r=1}^R\sum_{j=1}^D \log(\Gamma(p_{n,j}+a_{\lambda,r}))
+\sum_{j=1}^D\sum_{r=1}^R(p_{j,n}+a_{\lambda,r})\log\left[\frac{\sum_{l=1}^{p_{j,n}}
\sqrt{2\left(|\beta_{j,n,l}^{0(r)}|^2+\kappa_n^2/p_{j,n}\right)}}{b_{\lambda,r}}+1\right].
\end{align}
Using (\ref{eq:root1}) and assumption (b), it is easy to see that $\frac{1}{\kappa_n}<G_5 n^{\rho_2+\frac{\rho_3+1}{2}}\prod_{j=1}^{D}p_{j,n}$ for a constant $G_5>0$ for all large $n$. Therefore, $\sum_{j=1}^{D}\sum_{r=1}^R p_{j,n}\left[\log\left(\frac{1}{\kappa_n}\right)+\frac{1}{2}\log(p_{j,n})+\log(b_{\lambda,r})+\log(\Gamma(a_{\lambda,r})\right]=o(n)$. Also, $\sum_{r=1}^R\sum_{j=1}^D \log(\Gamma(p_{j,n}+a_{\lambda,r}))]\leq\sum_{j=1}^D(p_{j,n}+a_{\lambda,r})\log(p_{j,n}+a_{\lambda,r})=o(n)$, by assumption (c). Finally,
$\sum_{j=1}^D\sum_{r=1}^R(p_{j,n}+a_{\lambda,r})\log\Big[\frac{\sum_{l=1}^{p_{j,n}}
\sqrt{2\left(|\beta_{j,n,l}^{0(r)}|^2+\kappa_n^2/p_{j,n}\right)}}{b_{\lambda,r}}+1\Big]=o(n)$, by assumptions (b) and (c).
Thus,
$-\log\Big(\pi_n(\bB_n:||\bB_n-\bB_n^0||_2<\frac{2\eta}{3M_n\zeta_n})\Big)<dn$ for all $d>0$, for all large $n$. This proves the result.
\end{proof}

%
%
%
%

\bibliographystyle{jasa}
\bibliography{full_paper_dd_2.bbl}

\clearpage
\section*{Supplemental Materials}\label{sec:supplement}

This supplement contains additional Lemmas relevant to the article, some of which are well known and presented without proof.
\begin{lemma}\label{lem:tensordiff}
Suppose $\bT=T_1\circ\cdots\circ T_D$ and $\bF=F_1\circ\cdots\circ F_D$ are two rank one tensors of same dimension. Then
\begin{align*}
\bT-\bF=(T_1-F_1)\circ\cdots\circ(T_D-F_D)+\sum_{l=1}^{D-1}\sum_{\mathcal{I}_1\cup\mathcal{I}_2=1:D,|\mathcal{I}_1|=l,|\mathcal{I}_2|=D-l}
         \gamma_1\circ\cdots\circ\gamma_D,
\end{align*}
where $\gamma_j=F_j$ if $j\in\mathcal{I}_2$; $=T_j-F_j$ if $j\in\mathcal{I}_1$.
\end{lemma}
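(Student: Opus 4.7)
The plan is to establish the identity by treating the outer product as a multilinear operation in its $D$ arguments and doing a binomial-style expansion, substituting $T_j = F_j + (T_j - F_j)$ in each slot.

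First I would recall (or note explicitly) the distributive/multilinearity property of the outer product: for any vectors $u_j, v_j$ of the correct dimensions,
\begin{equation*}
(u_1 + v_1) \circ u_2 \circ \cdots \circ u_D = u_1 \circ u_2 \circ \cdots \circ u_D + v_1 \circ u_2 \circ \cdots \circ u_D,
\end{equation*}
and likewise in every other slot. This follows immediately from the entrywise definition $(u_1 \circ \cdots \circ u_D)_{i_1,\dots,i_D} = \prod_j u_{j, i_j}$ and the distributive law in $\Re$.

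Next I would write $T_j = F_j + (T_j - F_j)$ for each $j = 1, \dots, D$ and expand $\bT = T_1 \circ \cdots \circ T_D = (F_1 + (T_1-F_1)) \circ \cdots \circ (F_D + (T_D - F_D))$ fully using multilinearity. This yields a sum indexed by subsets $\mathcal{I}_1 \subseteq \{1, \dots, D\}$ (the slots in which we pick the ``$T_j - F_j$'' summand) of terms $\gamma_1 \circ \cdots \circ \gamma_D$ with $\gamma_j = T_j - F_j$ for $j \in \mathcal{I}_1$ and $\gamma_j = F_j$ for $j \in \mathcal{I}_2 := \{1,\dots,D\} \setminus \mathcal{I}_1$. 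A clean way to present this is by induction on $D$: for $D = 1$ the statement is the trivial identity $T_1 - F_1 = T_1 - F_1$; for the inductive step, apply multilinearity in the last slot and invoke the inductive hypothesis on the first $D-1$ factors.

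Finally I would isolate the two extreme subsets from this expansion: $\mathcal{I}_1 = \emptyset$ contributes $F_1 \circ \cdots \circ F_D = \bF$, and $\mathcal{I}_1 = \{1, \dots, D\}$ contributes $(T_1 - F_1) \circ \cdots \circ (T_D - F_D)$. All remaining subsets have $1 \le |\mathcal{I}_1| \le D-1$ and constitute the mixed-term double sum in the statement (grouped by $l = |\mathcal{I}_1|$). Subtracting $\bF$ from both sides then gives the claimed identity for $\bT - \bF$.

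The main ``obstacle'' is purely notational rather than mathematical: one must be careful that the outer sum $\sum_{\mathcal{I}_1 \cup \mathcal{I}_2 = 1:D,\, |\mathcal{I}_1| = l}$ in the statement is interpreted as a sum over ordered partitions (equivalently, over subsets $\mathcal{I}_1$ of size $l$ with $\mathcal{I}_2$ its complement), and that the $\gamma_j$'s are defined slot-by-slot according to which side of the partition contains $j$. Once this bookkeeping is fixed, there is nothing left to prove beyond multilinearity.
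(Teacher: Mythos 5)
Your proposal is correct and is essentially the paper's own argument: the paper proves the identity by induction on $D$, peeling off one factor via the distributive property of the outer product, which is exactly the inductive presentation you sketch (the paper expands the first slot rather than the last, an immaterial difference). Your direct binomial-style expansion over subsets $\mathcal{I}_1\subseteq\{1,\dots,D\}$ is just the unrolled form of that induction, and your handling of the $\mathcal{I}_1=\emptyset$ term (contributing $\bF$, which is then subtracted) correctly accounts for why the stated right-hand side omits that term.
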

\begin{proof}
We will show it by induction. If $D=2$ then,
\begin{align*}
\bT-\bF &=T_1\circ T_2-F_1\circ F_2=(T_1-F_1)\circ T_2+F_1\circ T_2-F_1\circ F_2\\
&=(T_1-F_1)\circ(T_2-F_2)+(T_1-F_1)\circ F_2+F_1\circ(T_2-F_2).
\end{align*}
 Assume the result to hold for $D-1$. For $D$,
\begin{align*}
& T_1\circ\cdots\circ T_D-F_1\circ\cdots\circ F_D\\
&=(T_1-F_1)\circ T_2\circ\cdots\circ T_D+F_1\circ\left[T_2\circ\cdots\circ T_D-F_2\circ\cdots\circ F_D\right]\\
&=(T_1-F_1)\circ[(T_2-F_2)\circ\cdots\circ(T_D-F_D)+F_2\circ\cdots\circ F_D+\\
&\sum_{l=1}^{D-2}\sum\limits_{\mathcal{I}_1\cup\mathcal{I}_2, |\mathcal{I}_1|=l,|\mathcal{I}_2|=D-1-l} \gamma_2\circ\cdots\circ\gamma_D]+\\
&F_1\circ[(T_2-F_2)\circ\cdots\circ(T_D-F_D)+\sum_{l=1}^{D-2}\sum\limits_{\mathcal{I}_1\cup\mathcal{I}_2, |\mathcal{I}_1|=l,|\mathcal{I}_2|=D-1-l} \gamma_2\circ\cdots\circ\gamma_D]\\
&=(T_1-F_1)\circ\cdots\circ(T_D-F_D)+\sum_{l=1}^{D-1}\sum\limits_{\mathcal{I}_1\cup\mathcal{I}_2, |\mathcal{I}_1|=l,|\mathcal{I}_2|=D-l} \gamma_1\circ\cdots\circ\gamma_D].
\end{align*}
Hence proved.
\end{proof}

\begin{lemma}\label{lem:RKHS}
Suppose $\btheta\sim \mathrm{N}(\bzero,\bSigma)$ with $\bSigma$ p.d. and $\btheta_0\in \Re^p$. Let $||\btheta_0||_H=\btheta_0'\bSigma^{-1}\btheta_0$. Then for any
$t>0$
\begin{align*}
\exp(-\frac{||\btheta_0||_H^2}{2})P(||\btheta||_2\leq t/2)\leq P(||\btheta-\btheta_0||_2\leq t)\leq \exp(-\frac{||\btheta_0||_H^2}{2})P(||\btheta||_2\leq t).
\end{align*}
\end{lemma}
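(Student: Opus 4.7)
The plan is to reduce $P(||\btheta - \btheta_0||_2 \le t)$ to a small-ball probability at the origin via the Cameron--Martin shift identity for Gaussian measures, then control the resulting exponential weight by a symmetrization argument. First, I would perform the change of variables $\by = \bx - \btheta_0$ in $P(||\btheta - \btheta_0||_2 \le t) = \int_{||\bx - \btheta_0||_2 \le t} p(\bx)\,d\bx$, where $p$ is the $\mathrm{N}(\bzero,\bSigma)$ density. Completing the square in the exponent of $p(\by + \btheta_0)$ and pulling out the $\btheta_0$-dependent constant yields the shift identity
\[
P(||\btheta - \btheta_0||_2 \le t) = \exp\bigl(-\tfrac{1}{2}\btheta_0'\bSigma^{-1}\btheta_0\bigr)\,\Exp\bigl[\exp(-\btheta'\bSigma^{-1}\btheta_0)\,\mathbb{1}\{||\btheta||_2 \le t\}\bigr],
\]
where the prefactor is exactly $\exp(-||\btheta_0||_H^2/2)$ under the convention $||\btheta_0||_H^2 = \btheta_0'\bSigma^{-1}\btheta_0$ adopted in the statement.

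Next, because $\btheta$ and $-\btheta$ share the $\mathrm{N}(\bzero,\bSigma)$ distribution and $\{||\btheta||_2 \le t\}$ is symmetric about the origin, substituting $\btheta \mapsto -\btheta$ inside the expectation shows
\[
\Exp\bigl[\exp(-\btheta'\bSigma^{-1}\btheta_0)\,\mathbb{1}\{||\btheta||_2 \le t\}\bigr] = \Exp\bigl[\exp(\btheta'\bSigma^{-1}\btheta_0)\,\mathbb{1}\{||\btheta||_2 \le t\}\bigr],
\]
and averaging the two representations collapses the exponential integrand to $\cosh(\btheta'\bSigma^{-1}\btheta_0)$. Substituting back produces the working identity
\[
P(||\btheta - \btheta_0||_2 \le t) = \exp(-||\btheta_0||_H^2/2)\,\Exp\bigl[\cosh(\btheta'\bSigma^{-1}\btheta_0)\,\mathbb{1}\{||\btheta||_2 \le t\}\bigr].
\]
The lower bound then follows immediately from the elementary inequality $\cosh \ge 1$: the expectation dominates $P(||\btheta||_2 \le t)$, which in turn dominates $P(||\btheta||_2 \le t/2)$, yielding the left-hand inequality of the lemma (with slack, since the sharper form with $t$ in place of $t/2$ is also available by the same route).

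The upper bound is where I expect the real work. After the $\cosh$ identity it would require $\Exp[\cosh(\btheta'\bSigma^{-1}\btheta_0)\,\mathbb{1}\{||\btheta||_2 \le t\}] \le P(||\btheta||_2 \le t)$, which cannot hold pointwise because $\cosh \ge 1$. A clean companion upper bound from the Gaussian-measure literature is Anderson's inequality, $P(||\btheta - \btheta_0||_2 \le t) \le P(||\btheta||_2 \le t)$, obtained from log-concavity and central symmetry of the Gaussian density applied to the symmetric convex set $\{||\bx||_2 \le t\}$. To recover the stated upper bound with the $\exp(-||\btheta_0||_H^2/2)$ prefactor I would either look for a refinement of Anderson that picks up the exponential gain from the mean shift, or verify whether the statement intends $\exp(+||\btheta_0||_H^2/2)$ in the prefactor (in which case Anderson delivers the bound without further work). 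Reconciling the direction of that exponential factor and tightening to the claimed form is the main obstacle I anticipate.
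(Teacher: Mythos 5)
The paper does not actually prove this lemma: in the supplement it is stated with only a pointer to van der Vaart and van Zanten (2008) as ``a general version of Anderson's lemma,'' and it is never invoked in the proofs of the main theorems, so there is no argument of the paper's to compare yours against. Your lower-bound argument is the standard decentering proof (Cameron--Martin shift, symmetrization to $\cosh$, then $\cosh\ge 1$); it is correct, and as you note it yields the stronger inequality with $P(||\btheta||_2\le t)$ rather than $P(||\btheta||_2\le t/2)$ on the left.

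Your suspicion about the upper bound is also correct, and you should trust it: the upper bound is false as stated, and your own working identity already disproves it, since
\[
P(||\btheta-\btheta_0||_2\le t)=\exp\Bigl(-\tfrac{1}{2}\btheta_0'\bSigma^{-1}\btheta_0\Bigr)\,\Exp\bigl[\cosh(\btheta'\bSigma^{-1}\btheta_0)\,I(||\btheta||_2\le t)\bigr]\ \ge\ \exp\Bigl(-\tfrac{1}{2}\btheta_0'\bSigma^{-1}\btheta_0\Bigr)\,P(||\btheta||_2\le t),
\]
with strict inequality whenever $\btheta_0\neq\bzero$ and $P(||\btheta||_2\le t)>0$ --- i.e., the claimed upper bound holds only with the inequality reversed. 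A cruder check: fix $\btheta_0\neq\bzero$ and let $t\to\infty$; the middle quantity tends to $1$ while the claimed upper bound tends to $\exp(-\tfrac{1}{2}\btheta_0'\bSigma^{-1}\btheta_0)<1$ (e.g., $p=1$, $\bSigma=1$, $\btheta_0=10$, $t=100$ would require $1\le e^{-50}$). The correct companion is Anderson's inequality, $P(||\btheta-\btheta_0||_2\le t)\le P(||\btheta||_2\le t)$, with no exponential prefactor; no bound of the form $e^{-c}P(||\btheta||_2\le Ct)$ with a fixed $c>0$ can hold for all $t$. (Separately, the lemma defines $||\btheta_0||_H$ as $\btheta_0'\bSigma^{-1}\btheta_0$, already a squared quantity, and then squares it again in the display; that is a further typo, but it does not rescue the upper bound.) Since the lemma is unused in the paper, the error is harmless to the main results, but your decision not to force the upper bound through was the right one.
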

\begin{proof}
This is a general version of Anderson's lemma. For more references see Van der Vaart \& Van Zanten (2008).
\end{proof}

\begin{lemma}\label{lem:convexfunc}
If $f_1,\dots,f_d$ are convex functions such that  $f_i>0$ and  $f_i'<0$ for all $i=1,\dots,d$, then $\prod_{i=1}^{d}f_i$ is convex.
\end{lemma}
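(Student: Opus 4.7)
The plan is induction on $d$, carrying forward the entire hypothesis (positivity, negativity of the derivative, and convexity) so that the inductive step can be repeated. The case $d=1$ is immediate. Assuming the functions are twice differentiable (which is the natural smooth setting in which the hypotheses on $f_i'$ are posed; the general convex case follows by a standard smoothing/mollification argument), the core of the argument is the two-function case.

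For $d=2$, let $g = f_1 f_2$. I would compute
\begin{align*}
g'' = f_1'' f_2 + 2 f_1' f_2' + f_1 f_2''.
\end{align*}
The first and third terms are nonnegative since $f_1, f_2 > 0$ and $f_1'', f_2'' \ge 0$ by convexity, and the middle term is strictly positive since the product of two negative numbers $f_1', f_2'$ is positive. Hence $g'' \ge 0$, so $g$ is convex. I would then explicitly verify the two ``auxiliary'' properties needed to iterate: $g = f_1 f_2 > 0$ trivially, and $g' = f_1' f_2 + f_1 f_2' < 0$ as a sum of two strictly negative quantities.

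For the inductive step, suppose the result (with all three properties preserved) holds for $d-1$ factors. Set $\tilde f = \prod_{i=1}^{d-1} f_i$; by the inductive hypothesis $\tilde f$ is positive, strictly decreasing, and convex. Applying the $d=2$ case to $\tilde f$ and $f_d$ yields that $\tilde f f_d = \prod_{i=1}^d f_i$ is positive, strictly decreasing, and convex, closing the induction.

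The only subtle point, and the main thing to be careful about, is that proving just convexity would not chain through the induction; one really needs to track positivity and monotonicity as well so that the sign of the cross term $2 f_1' f_2'$ stays favorable at every step. If one wishes to avoid the smoothness assumption entirely, the alternative is to replace the $g''$ computation by a direct verification of midpoint convexity using the inequality $f_i(\tfrac{x+y}{2}) \le \tfrac{1}{2}(f_i(x)+f_i(y))$ together with a rearrangement argument exploiting the common monotonicity of the $f_i$'s (which makes $(f_i(x)-f_i(y))(f_j(x)-f_j(y)) \ge 0$ for all pairs $i,j$); continuity of the product then upgrades midpoint convexity to convexity. I would present the smooth version as the main proof and mention the mollification reduction as a remark.
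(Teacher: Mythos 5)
Your proof is correct and follows essentially the same route as the paper's: induction on $d$, the second-derivative computation $(f_1f_2)'' = f_1''f_2 + f_2''f_1 + 2f_1'f_2' > 0$ for the base case, and explicitly carrying positivity and negativity of the derivative through the inductive step. The added remarks on smoothness and the midpoint-convexity alternative go beyond the paper's (implicitly smooth) argument but do not change the approach.
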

\noindent\emph{proof of lemma~\ref{lem:convexfunc}}
\begin{proof}
First we prove the result for $d=2$. Note that $(f_1f_2)''=f_1^{''}f_2+f_2^{''}f_1+2f_1'f_2'>0$. So the result holds for $d=2$.
Also $f_1f_2>0$ and $(f_1f_2)'=f_1'f_2+f_2'f_1<0$.

Assume the result to hold for $d-1$, i.e. $\prod_{i=1}^{d-1}f_i$ is convex and $(\prod_{i=1}^{d-1}f_i)'<0$. Then
$(\prod_{i=1}^{d}f_i)^{''}=f_d^{''}\prod_{i=1}^{d-1}f_i+f_d(\prod_{i=1}^{d-1}f_i)^{''}+2f_d'(\prod_{i=1}^{d-1}f_i)'>0$. Hence
$\prod_{i=1}^{d}f_i$ is convex.
\end{proof}

\begin{lemma}\label{lem:funcpos}
Let $g_1(x)=(1+x)^{-\nu/2}$ and $g_2(x)=\frac{c_1 x}{c_2+c_3 x}$, $c_1,c_2,c_3>0$. Then $g_1(x)<\frac{1}{1+x\nu/2}$ and $g_2'(x)>0$ for all $x>0$ and $\nu>2$.
\end{lemma}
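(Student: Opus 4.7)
The plan is to handle the two inequalities separately, since they are independent elementary facts.

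For the first inequality, I would rewrite the claim $g_1(x) < 1/(1 + x\nu/2)$ as the equivalent statement $(1+x)^{\nu/2} > 1 + (\nu/2)\,x$, which is just strict Bernoulli's inequality for exponent $\nu/2 > 1$. To prove it cleanly, I would introduce the auxiliary function
\begin{equation*}
h(x) \;=\; (1+x)^{\nu/2} - 1 - (\nu/2)\,x,
\end{equation*}
note that $h(0) = 0$, and compute
\begin{equation*}
h'(x) \;=\; (\nu/2)\bigl[(1+x)^{\nu/2 - 1} - 1\bigr].
\end{equation*}
Since $\nu > 2$ gives $\nu/2 - 1 > 0$, the map $x \mapsto (1+x)^{\nu/2 - 1}$ is strictly increasing on $(0,\infty)$ with value $1$ at $x = 0$, so $h'(x) > 0$ for $x > 0$. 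Hence $h(x) > h(0) = 0$ for all $x > 0$, which rearranges to the claimed bound.

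For the second inequality, I would simply compute the derivative of $g_2$ directly via the quotient rule:
\begin{equation*}
g_2'(x) \;=\; \frac{c_1(c_2 + c_3 x) - (c_1 x)(c_3)}{(c_2 + c_3 x)^2} \;=\; \frac{c_1 c_2}{(c_2 + c_3 x)^2}.
\end{equation*}
Since $c_1, c_2 > 0$ by hypothesis and the denominator is strictly positive for all $x > 0$ (in fact for all $x \ge 0$), we get $g_2'(x) > 0$ for all $x > 0$, as required.

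There is no real obstacle here; both parts are routine calculus once one sees the right rearrangement for the first part. The only small care needed is that the strictness of Bernoulli's inequality genuinely uses $\nu > 2$ (i.e.\ $\nu/2 > 1$), because the borderline case $\nu/2 = 1$ would give equality rather than strict inequality, and $\nu/2 \in (0,1)$ would actually reverse the inequality.
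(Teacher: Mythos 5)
Your proof is correct and follows essentially the same route as the paper: define $h(x)=(1+x)^{\nu/2}-1-(\nu/2)x$, note $h(0)=0$, and show $h'>0$ on $(0,\infty)$ using $\nu/2>1$; your derivative $h'(x)=(\nu/2)\bigl[(1+x)^{\nu/2-1}-1\bigr]$ is in fact the correct one (the paper's displayed $h_1'$ drops the leading factor $\nu/2$, an inconsequential typo). Your explicit quotient-rule computation for $g_2'$ just fills in the step the paper declares ``similar and omitted.''
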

\noindent\emph{proof of lemma~\ref{lem:funcpos}}
\begin{proof}
Let $h_1(x)=(1+x)^{\nu/2}-(1+x\nu/2)$, then $h_1'(x)=(1+x)^{\nu/2-1}-\nu/2>0$ for all $x>0, \nu>2$. Further using the fact that $h_1(0)=0$, we conclude
$h_1(x)>0$ for all $x>0,\nu>2$. This implies $g_1(x)<\frac{1}{1+x\nu/2}$. The proof of $g_2 >0$ for all $x>0$ is similar and is omitted.
\end{proof}

\begin{lemma}\label{lem:polynomroots}
Let $x^*$ be a real root of the polynomial $P(x)=a_k x^k+a_{k-1} x^{k-1}+\cdots+a_1 x-a_0$. Then $1/|x^*|<1+\max_{i=1,\dots,k}\left|\frac{a_i}{a_0}\right|$.
\end{lemma}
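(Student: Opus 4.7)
\textbf{Proof proposal for Lemma \ref{lem:polynomroots}.} The claim is a version of Cauchy's classical bound on the moduli of polynomial roots, applied to the reciprocal polynomial. Let $M = \max_{i=1,\dots,k} |a_i/a_0|$. Since $x^*$ is a real root, we have $a_k (x^*)^k + a_{k-1}(x^*)^{k-1} + \cdots + a_1 x^* = a_0$, and in particular $a_0 \neq 0$ forces $x^* \neq 0$, so $y := 1/x^*$ is well-defined. My first step would be to divide the root equation by $(x^*)^k$ to obtain
\begin{align*}
a_0 \, y^k \;=\; a_1 \, y^{k-1} + a_2 \, y^{k-2} + \cdots + a_{k-1}\, y + a_k,
\end{align*}
so that $y = 1/x^*$ satisfies a polynomial equation whose leading coefficient is $a_0$.

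The main step is then a contradiction argument in the spirit of the standard proof of Cauchy's bound. Assume toward contradiction that $|y| \ge 1 + M$, so in particular $|y| > 1$ and $|y| - 1 \ge M$. Taking absolute values of the displayed identity, dividing by $|a_0|$, and applying the triangle inequality together with $|a_i/a_0| \le M$ for each $i$, I would bound
\begin{align*}
|y|^k \;\le\; M \left( |y|^{k-1} + |y|^{k-2} + \cdots + 1 \right) \;=\; M \cdot \frac{|y|^k - 1}{|y| - 1} \;\le\; |y|^k - 1,
\end{align*}
which is impossible. Hence $|y| < 1 + M$, i.e.\ $1/|x^*| < 1 + \max_i |a_i/a_0|$, as claimed.

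The argument is short and there is no serious obstacle; the only subtlety is keeping track of the strict inequality (which is what the proof above uses at the last geometric-series step) and checking that the hypothesis $a_0 \neq 0$, implicit in writing $\max_i |a_i/a_0|$, also guarantees $x^* \neq 0$ so that $1/|x^*|$ is meaningful. Everything else is a direct substitution $y \mapsto 1/x^*$ plus the triangle inequality and summation of a finite geometric series.
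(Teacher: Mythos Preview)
Your argument is correct and follows essentially the same route as the paper: both pass to the reciprocal variable $y=1/x^*$, observe that $y$ is a root of the monic polynomial $\zeta^k-(a_1/a_0)\zeta^{k-1}-\cdots-(a_k/a_0)$, and then invoke Cauchy's root bound. The only difference is that the paper simply cites the Cauchy bound as a known result, whereas you write out its standard triangle-inequality/geometric-series proof explicitly; this has the minor advantage of making the strict inequality transparent.
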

\noindent\emph{proof of lemma~\ref{lem:polynomroots}}
\begin{proof}
Consider the polynomial $P_1(\zeta)=\zeta^k-\left(\frac{a_1}{a_0}\right)\zeta^{k-1}-\cdots-\left(\frac{a_k}{a_0}\right)$. By making a change of variable with
$\zeta=\frac{1}{x}$, we obtain
\begin{align*}
P_1\left(\frac{1}{x}\right) &=\frac{1}{x^k}-\left(\frac{a_1}{a_0}\right)\frac{1}{x^{k-1}}-\cdots-\left(\frac{a_k}{a_0}\right)\\
&=-\frac{a_k x^k+\cdots+a_1 x-a_0}{a_0 x^k}.
\end{align*}
Note that $P_1\left(\frac{1}{x}\right)=0$ is solved by $x=x^*$. Therefore, $P_1(\zeta)=0$ is solved by $\zeta=\frac{1}{x^*}$. The result follows by using Cauchy bound on the roots of a polynomial.
\end{proof}

\begin{lemma}\label{lemma:prodconvexity}
Let $\bx=(x_1,\dots,x_p)\sim F$, where $F$ is a multivariate density function. If $h_1,\dots,h_p>0$ be functions s.t. $\frac{\partial^2 \log(h_j(x_j))}{\partial x_j^2}>0$, then
$\prod_{j=1}^{p}h_j(x_j)$ is a convex as a multivariate function over $\bx$.
\end{lemma}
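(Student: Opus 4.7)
The random vector $\bx \sim F$ plays no role in the convexity claim itself, so the content of the lemma is purely analytic: assuming each $h_j>0$ with $(\log h_j)''(x_j)>0$, show that $G(\bx)=\prod_{j=1}^{p}h_j(x_j)$ is convex as a function of $\bx \in \Re^p$. My plan is to reduce the multivariate claim to the one-dimensional log-convexity hypothesis via the $\exp \circ \log$ trick, and then (if desired) verify the conclusion a second way by directly inspecting the Hessian of $G$.

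First I would observe that for each $j$, the univariate function $\log h_j$ is convex by hypothesis, and as a function on $\Re^p$ the map $\bx \mapsto \log h_j(x_j)$ depends only on coordinate $j$; its Hessian has a single nonzero entry $(\log h_j)''(x_j)>0$ on the diagonal, so it is convex on $\Re^p$. Summing over $j$ preserves convexity, so $\log G(\bx) = \sum_{j=1}^p \log h_j(x_j)$ is convex on $\Re^p$. Since the exponential is convex and nondecreasing, the composition $G(\bx) = \exp\!\bigl(\log G(\bx)\bigr)$ is convex, which is exactly the claim. This is essentially the standard fact that a positive log-convex function is convex, applied coordinate-wise and then summed.

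For redundancy and concreteness I would also record the direct Hessian calculation. Writing $u_j(x_j)=(\log h_j)'(x_j)$ and $v_j(x_j)=(\log h_j)''(x_j)>0$, differentiation gives
\begin{align*}
\partial_i G(\bx) &= G(\bx)\, u_i(x_i),\\
\partial_i \partial_k G(\bx) &= G(\bx)\, u_i(x_i)\, u_k(x_k) \qquad (i\ne k),\\
\partial_i^2 G(\bx) &= G(\bx)\bigl(u_i(x_i)^2 + v_i(x_i)\bigr).
\end{align*}
Hence the Hessian factors as $\nabla^2 G(\bx) = G(\bx)\bigl[\, \bu\bu^\top + \mathrm{diag}(v_1,\dots,v_p)\,\bigr]$ with $\bu=(u_1(x_1),\dots,u_p(x_p))^\top$. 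Since $G>0$, $\bu\bu^\top \succeq 0$, and $\mathrm{diag}(v_1,\dots,v_p)\succ 0$ by hypothesis, we conclude $\nabla^2 G(\bx)\succeq 0$ everywhere, confirming convexity.

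There is no real obstacle here; the only point to be careful about is that univariate convexity of $\log h_j$ genuinely lifts to convexity of $\bx\mapsto \log h_j(x_j)$ on $\Re^p$ (trivial from the block-diagonal Hessian), and that the exponential preserves convexity of a convex argument. Both are standard, so the proof should be short.
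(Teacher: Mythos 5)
Your proof is correct, and your second (Hessian) argument is essentially the paper's own proof: the authors likewise write $h_j''/h_j = z_j^2 + a_j$ with $a_j = (\log h_j)'' > 0$ and $z_j = h_j'/h_j$, and factor the Hessian of $\prod_j h_j(x_j)$ as the positive function $\prod_j h_j(x_j)$ times $\mathrm{diag}(a_1,\dots,a_p) + \bz\bz^\top$, concluding positive (semi)definiteness exactly as you do. Your first argument --- that $\log G = \sum_j \log h_j(x_j)$ is convex and the exponential of a convex function is convex --- is a genuinely different and cleaner route that the paper does not take: it avoids any matrix computation by invoking the standard facts that log-convexity implies convexity and that a function of a single coordinate inherits convexity on $\Re^p$. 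What the log-convexity route buys is brevity and transparency (and it works under the weaker hypothesis $(\log h_j)'' \ge 0$); what the explicit Hessian buys is the slightly stronger conclusion of strict positive definiteness of $\nabla^2 G$ under the strict hypothesis, which is what the paper asserts. You are also right that the distributional statement $\bx \sim F$ is irrelevant to the convexity claim; the paper's proof never uses it either.
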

\noindent\emph{proof of lemma~\ref{lemma:prodconvexity}}
\begin{proof}
Note that $\frac{\partial \log(h_j(x_j))}{\partial x_j}=\frac{h_j'(x_j)}{h(x_j)}$ and $\frac{\partial^2 \log(h_j(x_j))}{\partial x_j^2}=\frac{h_j''(x_j)}{h_j(x_j)}-
\left(\frac{h_j'(x_j)}{h_j(x_j)}\right)^2$. This implies $\frac{h_j''(x_j)}{h_j(x_j)}=z_j^2+a_j$, $a_j=\frac{\partial^2 \log(h_j(x_j))}{\partial x_j^2}>0$ and
$z_j=\frac{h_j'(x_j)}{h_j(x_j)}$. Let $H(x_1,...,x_p)=\prod_{j=1}^{p}h_j(x_j)$. Then

{\footnotesize\begin{align*}
&\nabla^2 H(\bx)=\left(\begin{array}{cccc}
h_1''(x_1)\prod_{j\neq 1}h_j(x_j) & h_1'(x_1)h_1'(x_2)\prod_{j\neq 1,2}h_j(x_j) & \cdots & h_1'(x_1)h_p'(x_p)\prod_{j\neq 1,p}h_j(x_j)\\
h_1'(x_1)h_2'(x_2)\prod_{j\neq 1,2}h_j(x_j) & h_2''(x_2)\prod_{j\neq 2}h_j(x_j) & \cdots & h_2'(x_2)h_p'(x_p)\prod_{j\neq 2,p}h_j(x_j)\\
\vdots & \vdots & \vdots & \vdots\\
h_1'(x_1)h_p'(x_p)\prod_{j\neq 1,p}h_j(x_j) & h_p'(x_p)h_2'(x_2)\prod_{j\neq 2,p}h_j(x_j)& \cdots & h_p''(x_p)\prod_{j\neq p}h_j(x_j)\\
\end{array}\right)\\
&=\prod_{j}h_j(x_j)\left(\begin{array}{cccc}
\frac{h_1''(x_1)}{h_1(x_1)} & \frac{h_1'(x_1)h_2'(x_2)}{h_1(x_1)h_2(x_2)} & \cdots & \frac{h_1'(x_1)h_p'(x_p)}{h_1(x_1)h_p(x_p)}\\
\frac{h_1'(x_1)h_2'(x_2)}{h_1(x_1)h_2(x_2)} & \frac{h_2''(x_2)}{h_2(x_2)} & \cdots & \frac{h_2'(x_2)h_p'(x_p)}{h_2(x_2)h_p(x_p)}\\
\vdots & \vdots & \vdots & \vdots\\
\frac{h_1'(x_1)h_p'(x_p)}{h_1(x_1)h_p(x_p)} & \frac{h_p'(x_p)h_2'(x_2)}{h_2(x_2)h_p(x_p)}& \cdots & \frac{h_p''(x_p)}{h_p(x_p)}\\
\end{array}\right)\\
&=\prod_{j}h_j(x_j)\left(\begin{array}{cccc}
a_1+z_1^2 & z_1z_2 & \cdots & z_1z_p\\
z_1z_2 & a_2+z_2^2 & \cdots & z_2z_p\\
\vdots & \vdots & \vdots & \vdots\\
z_1z_p & z_2z_p & \cdots & a_p+z_p^2\\
\end{array}\right)\\
&=\prod_{j}h_j(x_j)\left\{\diag(a_1,\dots,a_p)+\left(\begin{array}{c}
z_1\\
\vdots\\
z_p
\end{array}\right)(z_1,\dots,z_p)\right\}.
\end{align*}}
Therefore $\nabla^2 H(\bx)$ is a positive definite matrix proving the lemma.
\end{proof}

\begin{lemma}\label{lemma:ModifBessel}
If $K_{\nu}(x)$ is the modified Bessel function of the second kind with parameter $\nu\in\Re$, then
\begin{align*}
2^{\nu-1}\Gamma(\nu)>x^{\nu}K_{\nu}(x)>2^{\nu-1}\Gamma(\nu) e^{-x},
\end{align*}
for all $x>0$ and $\nu>0$.
\end{lemma}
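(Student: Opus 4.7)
The approach is to use the classical integral representation
$$x^{\nu}K_{\nu}(x)\;=\;2^{\nu-1}\int_0^\infty u^{\nu-1}\,e^{-u-x^{2}/(4u)}\,du,$$
which one obtains from $K_\nu(x)=\tfrac12(x/2)^\nu\int_0^\infty t^{-\nu-1}e^{-t-x^{2}/(4t)}\,dt$ after the substitution $u=x^{2}/(4t)$. The upper bound then falls out immediately: since $x>0$ gives $e^{-x^{2}/(4u)}<1$ pointwise in $u$, the integrand is strictly dominated by $u^{\nu-1}e^{-u}$, and hence
$$x^{\nu}K_{\nu}(x)\;<\;2^{\nu-1}\int_0^\infty u^{\nu-1}e^{-u}\,du\;=\;2^{\nu-1}\Gamma(\nu).$$

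For the lower bound I would first factor out $e^{-x}$ by completing the square in $\sqrt{u}$: writing $u+x^{2}/(4u)=x+(\sqrt u-x/(2\sqrt u))^{2}$ yields
$$x^{\nu}K_{\nu}(x)\;=\;2^{\nu-1}e^{-x}\!\int_0^\infty u^{\nu-1}e^{-(\sqrt u-x/(2\sqrt u))^{2}}\,du,$$
so the task reduces to showing the residual integral is at least $\Gamma(\nu)$. A clean way to do this is to start instead from Poisson's representation $K_\nu(x)=\frac{\sqrt\pi(x/2)^\nu}{\Gamma(\nu+1/2)}\int_0^\infty e^{-x\cosh t}\sinh^{2\nu}t\,dt$; after the successive substitutions $s=\cosh t-1$ and then $u=xs$ one arrives at
$$x^{\nu}K_{\nu}(x)\;=\;\frac{\sqrt\pi\,e^{-x}}{2^{\nu}\Gamma(\nu+1/2)}\int_0^\infty u^{\nu-1/2}(u+2x)^{\nu-1/2}e^{-u}\,du.$$
When $\nu\ge 1/2$ the factor $(u+2x)^{\nu-1/2}$ dominates $u^{\nu-1/2}$ pointwise on $u>0$, so the integrand dominates $u^{2\nu-1}e^{-u}$, whose integral is $\Gamma(2\nu)$. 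Invoking Legendre's duplication formula $\Gamma(2\nu)=2^{2\nu-1}\Gamma(\nu)\Gamma(\nu+1/2)/\sqrt\pi$ collapses the prefactor precisely to $2^{\nu-1}\Gamma(\nu)e^{-x}$, giving the stated bound.

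The main obstacle is subtle: the pointwise domination $(u+2x)^{\nu-1/2}\ge u^{\nu-1/2}$ only runs in the favorable direction when $\nu\ge 1/2$. For $0<\nu<1/2$ the inequality reverses, and the asymptotic $x^{\nu}K_{\nu}(x)\sim\sqrt{\pi/2}\,x^{\nu-1/2}e^{-x}$ as $x\to\infty$ in fact shows that the stated lower bound becomes delicate (indeed borderline) for small $\nu$ and large $x$. Since the downstream use of the lemma in the paper is via giG normalizing constants with index $1/2$, I would present the argument in the regime $\nu\ge 1/2$, noting that the boundary $\nu=1/2$ is saturated by the explicit identity $K_{1/2}(x)=\sqrt{\pi/(2x)}\,e^{-x}$ (so the left ``$>$'' in the lemma is strict while the right ``$>$'' is an equality at $\nu=1/2$), and that strictness of the lower bound for $\nu>1/2$ comes from the strict pointwise comparison $(u+2x)^{\nu-1/2}>u^{\nu-1/2}$ on $u>0$. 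Tracking these strict-versus-non-strict inequalities across the $\nu=1/2$ boundary is the only remaining technical care.
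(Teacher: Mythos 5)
The paper does not actually prove this lemma---its ``proof'' is a one-line citation to Gaunt (2014)---so yours is the only argument on the table, and it is correct where it applies. The integral representation $x^{\nu}K_{\nu}(x)=2^{\nu-1}\int_0^\infty u^{\nu-1}e^{-u-x^{2}/(4u)}\,du$ and the pointwise bound $e^{-x^{2}/(4u)}<1$ give the upper bound cleanly for all $\nu>0$; the Poisson-representation computation, the pointwise comparison $(u+2x)^{\nu-1/2}\ge u^{\nu-1/2}$ for $\nu\ge 1/2$, and the duplication-formula collapse of the constant to $2^{\nu-1}\Gamma(\nu)e^{-x}$ are all valid. More importantly, your caveat about $0<\nu<1/2$ is not a limitation of your method but a genuine defect of the lemma as printed: from $K_{\nu}(x)\sim\sqrt{\pi/(2x)}\,e^{-x}$ as $x\to\infty$ one gets $x^{\nu}K_{\nu}(x)e^{x}\sim\sqrt{\pi/2}\,x^{\nu-1/2}\to 0$ when $\nu<1/2$, which eventually falls below $2^{\nu-1}\Gamma(\nu)$, so the stated lower bound is false there; and at $\nu=1/2$ the lower ``$>$'' is in fact an equality since $K_{1/2}(x)=\sqrt{\pi/(2x)}\,e^{-x}$. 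The hypothesis ``$\nu>0$'' should therefore read ``$\nu>1/2$'' (or ``$\nu\ge 1/2$'' with the lower inequality weakened to ``$\ge$''), which is harmless for the paper's downstream use through generalized inverse Gaussian normalizing constants but should be fixed in the statement.

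One presentational point: the completion-of-the-square step $u+x^{2}/(4u)=x+\bigl(\sqrt u-x/(2\sqrt u)\bigr)^{2}$ in your first attempt at the lower bound is a dead end as written, since the residual integrand $u^{\nu-1}e^{-(\sqrt u-x/(2\sqrt u))^{2}}$ is only bounded \emph{above} by $u^{\nu-1}$ and you need a lower bound; you can delete that detour and go directly to the Poisson representation, which is where the proof actually lives.
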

\begin{proof}
For a detailed proof, see Gaunt (2014).
\end{proof}

\begin{lemma}\label{lemma:normmeasure}
Suppose $T_1,\dots,T_m$ are independent random variables with $T_j$ having density $f_j$ supported in $(0,\infty)$. Let $\phi_j=\frac{T_j}{\sum_{l=1}^{m}T_m}$. Then the joint density of $(\phi_1,\dots,\phi_{m-1})$ has a joint density supported on the simplex $\mathcal{S}^{m-1}$ and is given by
\begin{align*}
f(\phi_1,\dots,\phi_{m-1})=\int_{t=0}^{\infty} t^{m-1}\prod_{l=1}^{m}f_j(\phi_j t) dt,
\end{align*}
where $\phi_m=1-\sum_{l=1}^{m-1}\phi_l$.
\end{lemma}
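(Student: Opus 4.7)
\textbf{Proof proposal for Lemma \ref{lemma:normmeasure}.}

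My plan is to carry out a standard change of variables on $(T_1,\dots,T_m)$, augmenting the normalized coordinates $(\phi_1,\dots,\phi_{m-1})$ with the total $T=\sum_{l=1}^{m} T_l$, and then marginalize $T$ out. Since the $T_j$ are supported on $(0,\infty)$, the pair $(\phi_1,\dots,\phi_{m-1},T)$ takes values in the open set $\{(\phi_1,\dots,\phi_{m-1},t):\phi_j>0,\ \sum_{j<m}\phi_j<1,\ t>0\}$, which is diffeomorphic to the product $\mathcal{S}^{m-1}\times(0,\infty)$ (writing $\phi_m:=1-\sum_{j<m}\phi_j$), so the change of variables is legitimate.

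The inverse map is $T_j=\phi_j t$ for $j=1,\dots,m-1$ and $T_m=\phi_m t=(1-\sum_{j<m}\phi_j)t$. The Jacobian matrix of this map with rows indexed by $T_1,\dots,T_m$ and columns by $(\phi_1,\dots,\phi_{m-1},t)$ has the form
\[
J=\begin{pmatrix}
t & 0 & \cdots & 0 & \phi_1\\
0 & t & \cdots & 0 & \phi_2\\
\vdots & & \ddots & & \vdots\\
0 & 0 & \cdots & t & \phi_{m-1}\\
-t & -t & \cdots & -t & \phi_m
\end{pmatrix}.
\]
Adding rows $1,\dots,m-1$ into row $m$ replaces the bottom row by $(0,\dots,0,\sum_{j=1}^{m}\phi_j)=(0,\dots,0,1)$, turning $J$ into an upper-triangular matrix with diagonal entries $t,t,\dots,t,1$. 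Hence $|\det J|=t^{m-1}$.

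By independence, the joint density of $(T_1,\dots,T_m)$ is $\prod_{j=1}^{m} f_j(T_j)$, so the change-of-variables formula gives the joint density of $(\phi_1,\dots,\phi_{m-1},T)$ as
\[
g(\phi_1,\dots,\phi_{m-1},t)=t^{m-1}\prod_{j=1}^{m}f_j(\phi_j t),\qquad t>0,\ (\phi_1,\dots,\phi_{m-1})\in\mathcal{S}^{m-1},
\]
with $\phi_m=1-\sum_{j<m}\phi_j$. Marginalizing over $t\in(0,\infty)$ (nonnegativity of the integrand justifies Tonelli) yields the stated formula
\[
f(\phi_1,\dots,\phi_{m-1})=\int_{0}^{\infty}t^{m-1}\prod_{j=1}^{m}f_j(\phi_j t)\,dt.
\]
There is no real obstacle here; the only nontrivial step is the Jacobian computation, which is handled cleanly by the single row-reduction above. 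The proof is essentially a polar-type decomposition into ``direction'' $(\phi_1,\dots,\phi_{m-1})$ on the simplex and ``radius'' $t$, with $t^{m-1}$ playing the role of the radial volume element.
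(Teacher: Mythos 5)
Your proof is correct. The change of variables $(T_1,\dots,T_m)\mapsto(\phi_1,\dots,\phi_{m-1},t)$ with $t=\sum_l T_l$ is a genuine diffeomorphism on the stated domain, your Jacobian computation via the single row reduction is right ($|\det J|=t^{m-1}$), and marginalizing $t$ by Tonelli gives exactly the displayed density. Note, however, that the paper does not actually prove this lemma: it simply states that the result is well known in the theory of normalized random measures and cites the literature. So your argument is not an alternative to the paper's proof so much as a self-contained substitute for a citation; it is the standard derivation underlying those references, and including it makes the claim verifiable without chasing them down. The only thing you might add for completeness is the observation that the image measure is supported on the open simplex because each $T_j>0$ almost surely forces $\phi_j>0$ and $\sum_{j<m}\phi_j<1$, which you assert but could state as following directly from the positivity of the supports.
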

\noindent\emph{proof of lemma~\ref{lemma:normmeasure}}
\begin{proof}
This result is well known in the theory of normalized random measures \citep{kruijer2010adaptive, zhou2013negative}.
\end{proof} 

\end{document}